\newtheorem{theorem}{Theorem}
\newtheorem{lemma}{Lemma}
\newcommand{\ignore}[1]{}
\newcommand{\R}{\mathbb R}
\newcommand{\N}{\mathbb N}
\newcommand{\abs}[1]{\left\lvert #1 \right\rvert}
\newcommand{\infnorm}[1]{\left\lVert #1 \right\rVert_{\infty}}
\newcommand{\iprod}[2]{\left\langle #1,#2 \right\rangle}
\newcommand{\twonorm}[1]{\left\lVert #1 \right\rVert_{2}}
\newcommand{\frob}[1]{\left\lVert #1 \right\rVert_{F}}
\colorlet{shadecolor}{blue!20}
\newcommand{\mB}{{B}}
\newcommand{\A}{{A}}
\newcommand{\defas}{:=}
\newcommand{\wL}{\widetilde{\L}}
\newcommand{\wS}{\widetilde{\S}}
\newcommand{\wM}{\widetilde{\M}}
\newcommand{\trans}[1]{{#1}^{\top}}
\newcommand{\Etn}{\E^{(t+1)}}
\def\tha{{\mbox{\tiny th}}}
\def\viz{{viz.,\ \/}}
\title{Non-convex Robust PCA}
\author{Praneeth Netrapalli \thanks{Microsoft Research, Cambridge MA. Email: praneeth@microsoft.com. Part of the work done while interning at Microsoft Research, India.} \and U N Niranjan \thanks{The University of California at Irvine, CA. Email: un.niranjan@uci.edu. Part of the work done while interning at Microsoft Research, India.} \and Sujay Sanghavi  \thanks{The University of Texas at Austin, TX. Email: sanghavi@mail.utexas.edu} \and Animashree Anandkumar \thanks{The University of California at Irvine, CA. Email: a.anandkumar@uci.edu} \and Prateek Jain \thanks{Microsoft Research, Bangalore, India. Email: prajain@microsoft.com}\\
}
\renewcommand{\L}{L}
\renewcommand{\S}{S}
\renewcommand{\N}{N}
\newcommand{\Lo}{\L^*}
\newcommand{\So}{\S^*}
\newcommand{\No}{\N^*}
\newcommand{\Lt}[1][t]{\L^{(#1)}}
\newcommand{\St}[1][t]{\S^{(#1)}}
\newcommand{\Ltn}{\L^{(t+1)}}
\newcommand{\Stn}{\S^{(t+1)}}
\newcommand{\M}{M}
\newcommand{\Mt}[1][t]{\M^{(#1)}}
\newcommand{\e}{\bm{e}}
\newcommand{\E}{{E}}
\newcommand{\Et}[1][t]{\E^{(#1)}}
\newcommand{\Uo}{U^*}
\newcommand{\uo}{\bm{u}^*}
\newcommand{\Vo}{V^*}
\newcommand{\Sigmao}{\Sigma^*}
\newcommand{\Vot}{(\Vo)^\top}
\newcommand{\ind}[1]{\mathbbm{1}_{\left\{#1\right\}}}
\newcommand{\order}[1]{O\left({#1}\right)}
\newcommand{\ncralgo}{AltProj}
\newcommand{\supp}[1]{\textrm{Supp}\left(#1\right)}
\newcommand{\inv}[1]{{#1}^{-1}}
\begin{document}

\maketitle

\begin{abstract}
We propose a new method for robust PCA -- the task of recovering a low-rank matrix from sparse corruptions that are of unknown value and support. Our method involves alternating between projecting appropriate residuals onto the set of low-rank matrices, and the set of sparse matrices; each projection is {\em non-convex} but easy to compute. In spite of this non-convexity, we establish exact recovery of the low-rank matrix, under the same conditions that are required by existing methods (which are based on convex optimization). For an $m \times n$ input matrix ($m \leq n)$, our method has a running time of $\order{r^2mn}$ per iteration, and needs $\order{\log(1/\epsilon)}$ iterations to reach an accuracy of $\epsilon$. This is close to the running times of simple PCA via the power method, which requires $\order{rmn}$ per iteration, and $\order{\log(1/\epsilon)}$ iterations. In contrast, the existing methods for robust PCA, which are based on convex optimization, have $\order{m^2n}$ complexity per iteration, and take $\order{1/\epsilon}$ iterations, i.e., exponentially more iterations for the same accuracy. 

Experiments on both synthetic and real data establishes the improved speed and accuracy of our method over existing convex implementations.

\end{abstract}

\paragraph{Keywords: }Robust PCA,  matrix decomposition, non-convex methods, alternating projections.

\section{Introduction}

Principal component analysis (PCA) is  a common procedure for preprocessing and denoising, where a low rank approximation to the input matrix (such as the covariance matrix) is carried out. Although PCA is simple to implement via eigen-decomposition, it is sensitive to the presence of outliers, since it attempts to ``force fit'' the outliers to the low rank approximation. To overcome this,  the notion of robust PCA is employed, where the goal is to remove sparse corruptions from an input matrix and obtain a low rank approximation.
Robust PCA has been employed in   a wide range of applications, including background modeling~\cite{li2004statistical},  3d reconstruction~\cite{MobahiZYM11}, robust topic modeling~\cite{shi2013sparse}, and community detection~\cite{chen2012clustering}, and so on.

Concretely, robust PCA refers to the following problem: given an input matrix $M = L^* + S^*$, the goal is to decompose it into sparse $S^*$ and low rank $L^*$ matrices. The seminal works of~\cite{ChandrasekaranSPW11,CandesLMW11} showed that this problem can be provably solved via   convex relaxation methods, under some natural conditions on the low rank and sparse components. While the theory is elegant, in practice, convex techniques are expensive to run on a large scale and have poor convergence rates.
Concretely, for decomposing an $m\times n$ matrix,  say with $m \leq n$, the best specialized implementations
(typically first-order  methods)  have a {\em per-iteration complexity} of $\order{m^2n}$, and require $O(1/\epsilon)$
number of iterations to achieve an error of $\epsilon$.
In contrast, the usual PCA, which carries out a rank-$r$ approximation of the input matrix, has $O(rmn)$ complexity
per iteration -- drastically smaller when $r$ is much smaller than $m,n$. Moreover, PCA requires exponentially fewer
iterations for convergence:  an $\epsilon$ accuracy is achieved with only $\order{\log(1/\epsilon)}$ iterations (assuming constant gap in singular values).  

In this paper, we  design a non-convex algorithm which is  ``best of both the worlds'' and bridges
the gap between (the usual) PCA and convex methods for robust PCA. 
Our method has    low computational complexity similar to PCA (i.e. scaling costs and convergence rates), and at the same time,  
has  provable global convergence guarantees, similar to the convex methods.
Proving global convergence for non-convex methods 
is an exciting recent development in machine learning.
Non-convex alternating minimization techniques have recently shown success in many settings such as matrix completion~\cite{Keshavan2012,jain2013low,hardt2013provable}, phase retrieval \cite{Netrapalli0S13},  dictionary learning~\cite{AgarwalEtal:SparseCoding2013}, tensor decompositions for unsupervised learning~\cite{AnandkumarEtal:tensor12}, and so on. Our current work on the analysis of non-convex methods for robust PCA is an important addition to this growing list.

\subsection{Summary of Contributions}

We propose a simple intuitive algorithm for robust PCA with low per-iteration cost and a fast convergence rate. We prove tight guarantees for recovery of   sparse and low rank components, which match those for the convex methods. 
In the process, we derive novel matrix perturbation bounds, when subject to 
sparse perturbations.
Our experiments reveal significant gains in terms of speed-ups over the
convex relaxation techniques, especially as we scale the size of the input matrices.

Our method consists of simple  alternating (non-convex) projections onto low-rank and sparse matrices.
For an $m \times n $ matrix, our method has a running time of
$O(r^2 m n \log(1/\epsilon))$, where $r$ is the rank of the low rank component.
Thus, our method has a linear convergence rate, i.e. it requires $O(\log (1/\epsilon))$ iterations to achieve an error of $\epsilon$, where $r$ is the rank of the low rank component $L^*$. When the rank $r$ is small, this nearly matches the complexity of PCA, (which is $O(rmn \log (1/\epsilon))$).

We prove recovery of the sparse and low rank components under a set of requirements which are tight  and match those for the convex techniques
(up to constant factors). In particular, under the deterministic sparsity model, where each row and each column of
the sparse matrix $S^*$ has at most $\alpha $ fraction of non-zeros,
we require that $\alpha = \order{1/ (\mu^2 r)}$, where $\mu$ is the incoherence factor (see Section~\ref{sec:analysis}).

In addition to strong theoretical guarantees, in practice, our method enjoys significant advantages over the state-of-art solver for \eqref{eq:conv}, \viz the inexact augmented Lagrange multiplier (IALM) method~\cite{CandesLMW11}. Our method outperforms IALM in all instances, as we vary the sparsity levels, incoherence, and rank, in terms of running time to achieve a fixed level of accuracy. In addition, on a real dataset involving the standard task of foreground-background separation \cite{CandesLMW11}, our method is significantly faster and provides  visually better separation.

\paragraph{Overview of our techniques: }Our proof technique involves 
establishing error contraction with each projection   onto 
the sets of low rank and sparse matrices. We first describe the proof ideas
when  $L^*$ is rank one. The first projection step 
is a hard thresholding procedure on  the input matrix $M$ to remove
large entries and then we perform  rank-$1$ projection of the residual to obtain $L^{(1)}$. Standard
matrix perturbation results (such as Davis-Kahan) provide $\ell_2$ error bounds 
between the singular vectors of $L^{(1)}$  and  $L^*$. 
However, these bounds do not suffice 
for establishing the correctness of our method. 
Since the next step in our method involves
hard thresholding of the residual $M-L^{(1)}$, 
we require element-wise error bounds on our low rank estimate. 
Inspired by the approach of Erd{\H{o}}s et al. \cite{ErdosKYY13},
where they obtain similar element-wise bounds for the eigenvectors of sparse Erd{\H{o}}s--R{\'e}nyi graphs,
we derive these bounds by exploiting the fixed 
point characterization of the eigenvectors\footnote{If the input matrix $M$ 
is not symmetric, we embed it in a symmetric 
matrix and consider the eigenvectors of the corresponding matrix.}. A 
Taylor's series expansion reveals that the perturbation between  the estimated and the true eigenvectors consists of 
bounding the walks in a graph whose adjacency matrix corresponds to (a 
subgraph of) the 
sparse component $S^*$. We then  show that if the graph is sparse enough, then this perturbation 
can be controlled, and thus, the next thresholding step    results in further
error contraction. We  use an induction argument to show that the sparse 
estimate is always contained in the true support of $S^*$, and that there is an error 
contraction in each step. For the case, where $L^*$ has rank $r>1$, our algorithm
proceeds in several stages, where we progressively compute higher rank projections 
which alternate with the hard thresholding steps. In stage $k=[1,2,\ldots, r]$, we 
compute rank-$k$ projections, and show that  
after a sufficient number of alternating projections, we reduce the error to the 
level of $(k+1)^{\tha}$ singular value of $L^*$, using similar arguments as in the rank-$1$ case. We then proceed to performing 
rank-$(k+1)$ projections which alternate with hard thresholding. This 
stage-wise procedure is needed for ill-conditioned matrices, since we 
cannot hope to recover lower eigenvectors in the beginning when there are 
large perturbations. Thus, we establish global 
convergence guarantees for our proposed non-convex robust PCA method.

\subsection{Related Work}
Guaranteed methods for robust PCA have received a lot of attention in the past few years, starting from the seminal works of~\cite{ChandrasekaranSPW11,CandesLMW11}, where they showed   recovery of an incoherent low rank matrix $L^*$ through the following convex relaxation method:
\begin{equation}
  \label{eq:conv}
  \text{Conv-RPCA}:\qquad \min_{L, S} \|\L\|_*+\lambda \|\S\|_1,\qquad \text{s.t.},\ \ \  \M=\L+\S,
\end{equation}
where $\|L\|_*$ denotes the nuclear norm of $L$ (nuclear norm is the sum of singular values). A typical solver for this convex program involves projection on to $\ell_1$ and nuclear norm balls (which are convex sets). Note that the convex method can be viewed as ``soft'' thresholding in the standard and spectral domains, while our method involves hard thresholding in these domains.

\cite{ChandrasekaranSPW11} and~\cite{CandesLMW11} consider two different models of sparsity for $S^*$.
Chandrasekaran et al.~\cite{ChandrasekaranSPW11} consider a deterministic sparsity model, where each row and column
of the $m \times n $ matrix, $S$, has at most $\alpha$ fraction of non-zero entries. For guaranteed recovery, they require $\alpha=\order{1/(\mu^2 r \sqrt{n})}$,
where $\mu$ is the incoherence level of $L^*$, and $r$ is its rank.
Hsu et al.~\cite{hsu2011robust} improve upon this result to obtain guarantees for an optimal sparsity level of
$\alpha=\order{1/(\mu^2 r)}$. This  {\em matches}  the requirements
of our non-convex method for exact recovery. Note that when the rank $r = O(1)$, this allows for a constant fraction of corrupted entries. Cand{\`e}s et al.~\cite{CandesLMW11} consider a different model with random sparsity and additional incoherence constraints, viz., they require $ \|U V^\top\|_\infty< \mu \sqrt{r}/n$. Note that  our assumption of  incoherence, viz., $\|U^{(i)}\|< \mu \sqrt{r/n}$, only yields $\|U V^\top\|_\infty<\mu^2 r/n$. The additional assumption enables \cite{CandesLMW11} to prove exact recovery with a constant fraction of corrupted entries, even when $\Lo$ is nearly full-rank.  We note that removing the $\|U V^\top\|_\infty$ condition for robust PCA would imply solving the planted clique problem when the clique size is less than $\sqrt{n}$~\cite{2013arXiv1310.0154C}. Thus,  our recovery guarantees  are {\em tight} upto constants  without these  additional assumptions.

A number of works have considered modified models under the robust PCA framework, e.g.~\cite{agarwal2012noisy,XuCS12}.
For instance, Agarwal et al.~\cite{agarwal2012noisy} relax the incoherence assumption to a weaker ``diffusivity'' assumption, which bounds the magnitude of the entries in the low rank part, but incurs an additional approximation error.
Xu et al.\cite{XuCS12} impose special sparsity structure where a column can either be non-zero or fully zero.

In terms of state-of-art specialized solvers,~\cite{CandesLMW11} implements the in-exact augmented Lagrangian multipliers (IALM) method and provides guidelines for parameter tuning. Other related methods such as multi-block  alternating directions method of multipliers (ADMM)   have also been considered for robust PCA, e.g.~\cite{wang2013solving}. Recently, a multi-step multi-block stochastic ADMM method was analyzed  for this problem~\cite{SedeghiEtal:ADMM14}, and this requires $1/\epsilon$ iterations to achieve an error of $\epsilon$. In addition, the convergence rate is tight in terms of scaling with respect to problem size $(m,n)$ and sparsity and rank parameters, under random noise models.

There is only one other work which considers a non-convex method for robust PCA~\cite{kyrillidis2012matrix}.
However, their result holds only for significantly more restrictive settings and does not cover the deterministic
sparsity assumption that we study. Moreover, the  projection step in their method can have an arbitrarily large
rank, so the running time is still $O(m^2n)$, which is the same as the convex methods. In contrast, we have an improved running time of $O(r^2 mn)$.




\begin{figure}[th!]
\centering
\includegraphics[width=2.5in]{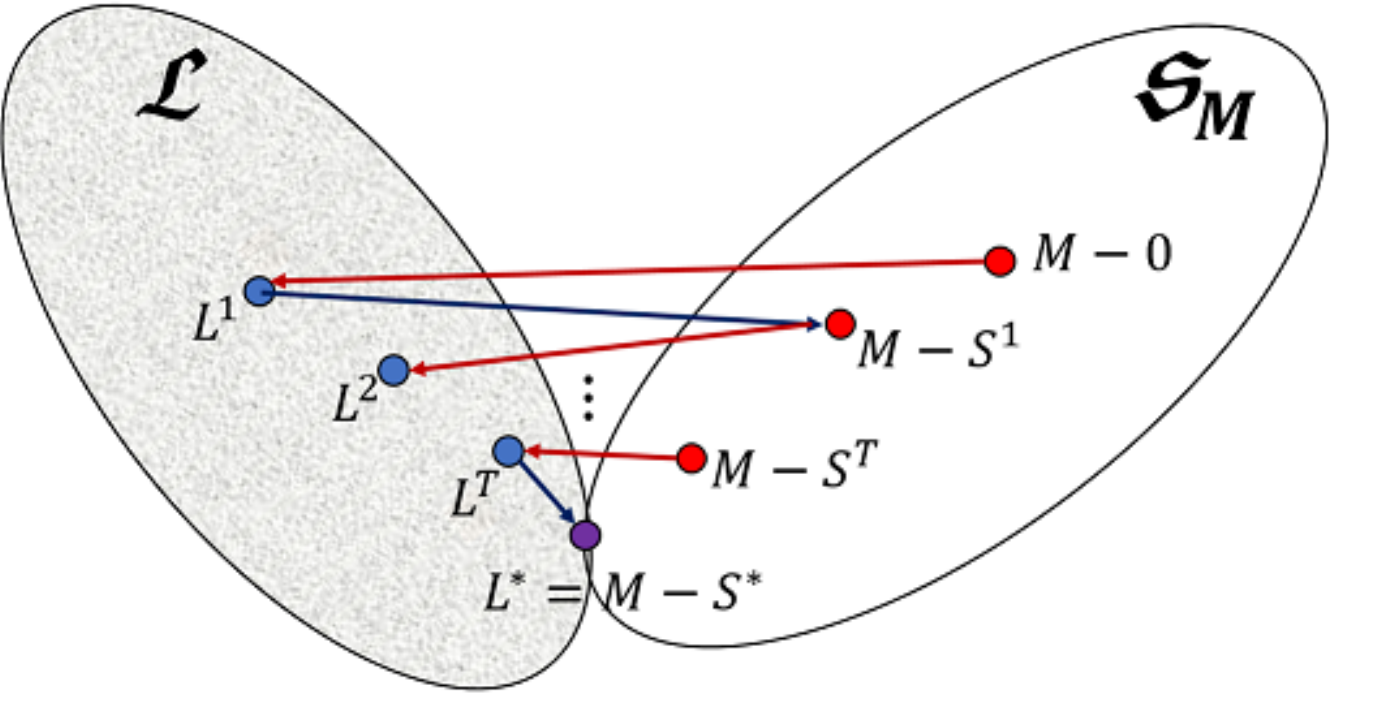}
\caption{Illustration of alternating projections. The goal is to find a matrix $L^*$ which lies in the intersection of two sets: $\mathcal{L}=\{$ set of rank-$r$ matrices$\}$ and $\mathcal{S}_M=\{M-S,\ $ where $S$ is a sparse matrix$\}$. Intuitively, our algorithm alternately projects onto the above two non-convex sets, while appropriately relaxing the rank and the sparsity levels.}\label{fig:path}\vspace*{-15pt}
\end{figure}
\section{Algorithm}\label{sec:prob}




In this section, we present our algorithm for the robust PCA problem.
The robust PCA problem can be formulated as the following optimization problem: find $L, S$ s.t. $\|M-L-S\|_F\leq \epsilon$\footnote{$\epsilon$ is the desired reconstruction error} and 
\begin{enumerate}
\item $L$ lies in the set of low-rank matrices,
\item $S$ lies in the set of sparse matrices.
\end{enumerate}

A natural algorithm for the above problem is to iteratively project $M-L$ onto the set of sparse matrices to update $S$, and then to project $M-S$ onto the set of low-rank matrices to update $L$. Alternatively, one can view the problem as that of finding a matrix $L$ in the intersection of the following two sets: a) $\mathcal{L}=\{$ set of rank-$r$ matrices$\}$, b) $\mathcal{S}_M=\{M-S,\ $ where $S$ is a sparse matrix$\}$. Note that these projections can be done efficiently, even though the sets   are  non-convex. Hard thresholding (HT) is employed  for projections on to sparse matrices,  and singular value decomposition (SVD) is used   for projections on to low rank matrices.\\[0pt]
\textbf{Rank-$1$ case: }We first describe our algorithm for the special case when $L^*$ is rank 1.  Our algorithm  performs an initial hard thresholding  to remove very large entries from input $M$. Note that if we performed the   projection on to rank-$1$ matrices without the initial hard thresholding, we would not make any progress since it is subject to large perturbations. We alternate between computing the rank-$1$ projection of $M-S$, and performing hard thresholding on $M-L$ to remove    entries exceeding a certain threshold. This threshold is gradually decreased as the iterations proceed, and the algorithm is run for a certain number of iterations (which depends on   the desired reconstruction error).\\[0pt] 
\textbf{General rank case: }When $L^*$  has rank $r>1$, a naive extension of our algorithm  consists of alternating projections on to rank-$r$ matrices and sparse matrices. However, such a method has    poor performance on ill-conditioned matrices. This is because  after the initial thresholding of the input matrix $M$, the sparse corruptions in the residual are of the order of the top singular value (with the choice of threshold as specified in the algorithm). When the lower singular values are much smaller, the corresponding singular vectors are subject to relatively large perturbations and thus, we cannot make progress in improving the reconstruction error. To alleviate the dependence on the condition number, we propose an algorithm that proceeds in stages. In the $k^{\tha}$ stage, the algorithm alternates between rank-$k$ projections and hard thresholding for a certain number of iterations. We run the algorithm for $r$ stages, where $r$ is the rank of $L^*$. Intuitively, through this procedure, we recover the lower singular values only after the input matrix is sufficiently denoised, i.e. sparse corruptions at the desired level have been removed. Figure~\ref{fig:path} shows a pictorial representation of the alternating projections in different stages.

{\bf Parameters:} As can be seen, the only real parameter to the algorithm is $\beta$, used in thresholding, which represents ``spikiness'' of $\Lo$. That is if the user expects $\Lo$ to be ``spiky'' and the sparse part to be heavily diffused, then higher value of $\beta$ can be provided. In our implementation, we found that selecting $\beta$ aggressively helped speed up recovery of our algorithm. In particular, we selected $\beta=1/\sqrt{n}$.

{\bf Complexity:} The complexity of each iteration within a single stage is $O(kmn)$, since it involves
calculating the rank-$k$ approximation\footnote{Note that we only require a rank-$k$ approximation of the matrix rather than the actual singular vectors. Thus, the computational complexity has no dependence on the gap between the singular values.} of an $m\times n$ matrix (done e.g. via vanilla PCA). The number of
iterations in each stage is $\order{\log\left(1/\epsilon\right)}$ and there are at most $r$ stages.
Thus the overall complexity of the entire algorithm is then $O(r^2 mn\log(1/\epsilon))$.
This is drastically lower than the best known bound of $\order{m^2n/\epsilon}$ on the number of iterations
required by convex methods, and just a factor $r$ away from the complexity of vanilla PCA.

\floatname{algorithm}{Algorithm}

\begin{algorithm}[t!]
  \caption{$(\widehat{L},\ \widehat{S})=$ AltProj$(\M, \epsilon, r,\beta)$: Non-convex Alternating Projections based  Robust PCA}\label{algo:sap}
  \begin{algorithmic}[1]
    \STATE {\bf Input}: Matrix $\M\in \R^{m\times n}$, convergence criterion $\epsilon$, target rank $r$, thresholding parameter $\beta$. 
    \STATE $P_k(A)$ denotes the best rank-$k$ approximation of matrix $A$. $HT_\zeta(A)$ denotes hard-thresholding, i.e. $(HT_{\zeta}(A))_{ij}=A_{ij}$ if $|A_{ij}|\geq \zeta$ and $0$ otherwise.
    \STATE Set initial threshold $\zeta_0 \, \leftarrow \, \beta \sigma_1(M)$.
    \STATE $L^{(0)}=0, S^{(0)}=HT_{\zeta_0}(M-L^{(0)})$
    \FOR{Stage $k=1$ to $r$ }
    \FOR{Iteration $t=0$ to $T=10 \log \left(n \beta \twonorm{\M-\S^{(0)}}/\epsilon\right)$}
    \STATE Set threshold $\zeta$ as
    \begin{equation}\zeta ~ =  ~ \beta\, \left(\sigma_{k+1}(\M-\St) +\left(\frac{1}{2}\right)^t \sigma_k(\M-\St)  \right)\label{eqn:threshold}  \end{equation}
    \STATE $L^{(t+1)}=P_k(M-S^{(t)})$
    \STATE $S^{(t+1)}=HT_{\zeta}(M-L^{(t+1)})$
    \ENDFOR
    \IF{$\beta \sigma_{k+1}(\Ltn) < \frac{\epsilon}{2n}$}
    \STATE {\bf Return: }$\L^{(T)},\S^{(T)}$  \hspace{0.5em}{\em/* Return rank-$k$ estimate if remaining part has small norm */}
    \ELSE
    \STATE $S^{(0)}=S^{(T)}$  \hspace{5em}{\em/* Continue to the next stage */}
    \ENDIF
    \ENDFOR
    \STATE {\bf Return: }$L^{(T)}, S^{(T)}$
  \end{algorithmic}
  \label{algo:alg1}
\end{algorithm}

\renewcommand{\u}{\bm{u}}
\section{Analysis}\label{sec:analysis}
In this section, we present our main result on the correctness of \ncralgo. 
We assume the following conditions:
\begin{enumerate}
\item[(L1)] Rank of $\Lo$ is at most $r$.
\item[(L2)] $\Lo$ is $\mu$-incoherent, i.e., if $\Lo=\Uo\Sigmao\Vot$ is the SVD of $\Lo$, then  $\|(\Uo)^i\|_2\leq \frac{\mu\sqrt{r}}{\sqrt{m}},$ $\forall1\leq i\leq m$ and $\|(\Vo)^i\|_2\leq \frac{\mu\sqrt{r}}{\sqrt{n}}$, $\forall 1\leq i\leq n$,
where $(\Uo)^i$ and $(\Vo)^i$ denote the $i^{\textrm{th}}$ rows of $\Uo$ and $\Vo$ respectively.
\end{enumerate}
\begin{enumerate}
\item[(S1)] Each row and column of $S$ have  at most $\alpha $ fraction of non-zero entries such that $\alpha\leq \frac{1}{512 \mu^2 r}$.
\end{enumerate}
Note that  in general,  it is not possible to have a unique recovery of  low-rank and sparse components. For example, if the input matrix $M $ is both sparse and low rank, then there is no unique decomposition (e.g. $M= \e_1\e_1^\top$). The above conditions ensure uniqueness of the matrix decomposition problem.

Additionally, we set the parameter $\beta$  in Algorithm~\ref{algo:alg1} be set as $\beta=\frac{4\mu^2 r}{\sqrt{mn}}$.



We now establish that our proposed algorithm recovers the low rank and sparse components under the above conditions.
\begin{theorem}[Noiseless Recovery]
Under conditions $(L1)$, $(L2)$ and $\So$, and choice of $\beta$ as above,    
the outputs $\widehat{\L}$ and $\widehat{\S}$ of Algorithm~\ref{algo:alg1} satisfy:
\begin{align*}
\frob{\widehat{\L} - \Lo} \leq \epsilon, \infnorm{\widehat{\S} - \So} \leq \frac{\epsilon}{\sqrt{mn}}, \mbox{ and }\ \ \supp{\widehat{\S}} \subseteq \supp{\So}.
\end{align*}
\label{thm:main}
\end{theorem}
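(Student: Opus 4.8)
The plan is to analyze the algorithm stage by stage, maintaining two invariants throughout: first, that the sparse iterate never over-steps the true support, i.e.\ $\supp{\St} \subseteq \supp{\So}$ at every iteration; second, that after stage $k$ completes, the residual $\M - \St - \Lt$ has spectral norm comparable to $\sigma_{k+1}(\Lo)$, so that entering stage $k+1$ is ``safe''. I would first set up the rank-$1$ analysis as the core technical module and then bootstrap to general rank by induction on the stage index $k$.

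For the rank-$1$ (really, within-a-single-stage) analysis, the key quantity to control is the element-wise error $\infnorm{\Ltn - \Lo}$ of the low-rank projection. The plan, following the overview in the introduction, is: (i) show that after the hard-thresholding step the residual $\St - \So$ is supported inside $\supp{\So}$, so each of its rows and columns has at most $\alpha$ fraction of nonzeros and hence $\twonorm{\St - \So} \leq \alpha n \cdot \infnorm{\St - \So}$ (a Gershgorin/row-sum bound for matrices with sparse support); (ii) use the threshold choice $\zeta = \beta(\sigma_{k+1}(\M-\St) + 2^{-t}\sigma_k(\M-\St))$ together with the incoherence of $\Lo$ to argue that thresholding at level $\zeta$ kills no entry of $\So$ that matters while ensuring $\infnorm{\M - \Ltn} \leq \zeta$ on the complement of the support; (iii) for the low-rank projection itself, write the top eigenvector (of the symmetric dilation) via its fixed-point equation $\bm{u} = \frac{1}{\lambda}(\Lo + \N)\bm{u}$ where $\N$ is the current sparse perturbation, Taylor-expand $(\lambda I - \Lo)^{-1}$, and bound the resulting Neumann series term-by-term; each term is a sum over walks in the bipartite graph whose adjacency structure is $\supp{\N} \subseteq \supp{\So}$, and the $\alpha \leq 1/(512\mu^2 r)$ sparsity bound makes the walk-counting geometric-series converge, yielding $\infnorm{\Ltn - \Lo} \lesssim \mu^2\sqrt{r/(mn)} \cdot \twonorm{\St - \So} + (\text{small})$. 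Combining (i) and (iii) gives a contraction $\infnorm{\Ltn - \Lo} \leq \frac12 \infnorm{\Lt - \Lo} + O(\beta\,\sigma_{k+1})$, which telescopes over the $T = 10\log(\cdots)$ iterations in the stage.

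For the general-rank induction: assuming at the start of stage $k$ that $\twonorm{\M - \St - \Lt} \lesssim \sigma_k(\Lo)$ and the support invariant holds, I would rerun the above with rank-$k$ projections, treating the bottom $r-k$ directions of $\Lo$ as part of the ``perturbation'' — this is exactly why the threshold has the $\sigma_{k+1}(\M-\St)$ floor. The contraction now drives the error down to $O(\beta\,\sigma_{k+1}(\Lo))$; the stopping test $\beta\sigma_{k+1}(\Ltn) < \epsilon/(2n)$ either terminates (and then $\frob{\widehat\L - \Lo}$ is controlled by converting the $\infnorm{}$ bound via $\frob{\cdot} \leq \sqrt{mn}\,\infnorm{\cdot}$, and the support containment gives the $\S$ bounds for free) or we pass to stage $k+1$ with the residual now at the $\sigma_{k+1}$ level, re-establishing the induction hypothesis. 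Finitely many ($\leq r$) stages finish the argument.

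The main obstacle, and where essentially all the work lives, is step (iii): getting a clean element-wise (not just $\ell_2$, as Davis–Kahan would give) bound on the perturbed singular vectors. The walk-counting over the support graph must be done carefully enough that the constants close — in particular one needs that a length-$\ell$ walk contributes a factor that decays like $(2\alpha\mu^2 r)^{\ell}$ or so, which is precisely what forces the $1/(512\mu^2 r)$ threshold on $\alpha$ and dictates the choice $\beta = 4\mu^2 r/\sqrt{mn}$. A secondary subtlety is handling ill-conditioning: the eigengap $\lambda - \sigma_k$ appearing in the denominators of the Neumann expansion could be tiny, which is exactly why the stagewise schedule (rather than a single rank-$r$ projection) is necessary, and the induction hypothesis on the residual norm is what guarantees the gap is large enough at the moment each new singular direction is recovered.
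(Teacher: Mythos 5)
Your proposal is correct and follows essentially the same route as the paper: symmetrization to the symmetric case, the sparse-support spectral bound $\twonorm{S}\leq \alpha n \infnorm{S}$, an element-wise eigenvector perturbation bound obtained by Neumann-expanding in powers of the sparse error and controlling each term via walk counting over $\supp{\So}$ (the paper's Lemmas~\ref{lem:inf} and~\ref{lem:dec1}), the hard-thresholding support and magnitude analysis, and a double induction over iterations $t$ and stages $k$ with the stopping-test case split at the end. The only slip is notational: to get the series in powers of the sparse perturbation that your walk-counting requires, the paper inverts $\left(I - \Et/\lambda_i\right)$ rather than the resolvent $(\lambda I - \Lo)$ you wrote, but your subsequent description makes clear you intend exactly that expansion.
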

\textbf{Remark (tight recovery conditions): }Our result is tight up to constants, in terms of allowable sparsity level
under the deterministic sparsity model. In other words, if we exceed the sparsity limit imposed in S1, it is possible to construct instances where there is no unique decomposition\footnote{For instance, consider the $n\times n$ matrix which has $r$ copies of the all ones matrix, each of
size $\frac{n}{r}$, placed across the diagonal.
We see that this matrix has rank $r$ and is incoherent with parameter
$\mu = 1$. Note that a fraction of $\alpha = \order{1/r}$ sparse perturbations suffice to erase one of these blocks
making it impossible to recover the matrix.}. 
Our conditions L1, L2 and S1 also match the conditions required by the convex method for recovery, as established in~\cite{hsu2011robust}.\\[0pt]
\textbf{Remark (convergence rate): } 
Our method has a linear rate of convergence, i.e. $O(\log(1/\epsilon))$ to achieve an  error of $\epsilon$, and
hence we provide a strongly polynomial method for robust PCA. In contrast, the best known bound for
convex methods for robust PCA is $O(1/\epsilon)$ iterations to converge to an $\epsilon$-approximate solution.

Theorem~\ref{thm:main} provides recovery guarantees assuming that $\Lo$ is exactly rank-$r$. However, in several real-world
scenarios, $\Lo$ can be nearly rank-$r$.
Our algorithm can handle such situations, where $\M=\Lo+\No+\So$, with $\No$ being an additive noise.
Theorem~\ref{thm:main} is  a special case of the following theorem which provides recovery guarantees
when $\No$ has small $\ell_{\infty}$ norm. 


\begin{theorem}[Noisy Recovery]
Under conditions $(L1)$, $(L2)$ and $\So$, and choice of $\beta$ as in Theorem~\ref{thm:main},  when the noise  $\infnorm{\No}\leq \frac{\sigma_r(\Lo)}{100 n}$,
the outputs $\widehat{\L}, \widehat{\S}$ of Algorithm~\ref{algo:alg1} satisfy:
\begin{align*}
\frob{\widehat{\L} - \Lo} &\leq \epsilon + 2 \mu^2r\left(7 \twonorm{\No} + \frac{8 \sqrt{mn}}{\sqrt{r}} \infnorm{\No}\right), \\
\infnorm{\widehat{\S} - \So} &\leq \frac{\epsilon}{\sqrt{mn}} + \frac{2 \mu^2r}{\sqrt{mn}}\left(7 \twonorm{\No} + \frac{8 \sqrt{mn}}{\sqrt{r}} \infnorm{\No}\right), \mbox{ and }
\supp{\widehat{\S}} \subseteq \supp{\So}.
\end{align*}
\label{thm:noise}
\end{theorem}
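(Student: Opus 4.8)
The plan is to reduce Theorem~\ref{thm:noise} to a sequence of per-stage, per-iteration error bounds, exactly mirroring the structure of the algorithm: an outer induction on the stage $k = 1, \dots, r$ and an inner induction on the iteration $t$ within each stage. The central invariant I would carry is a pair of bounds of the form $\infnorm{\Lt - \Lo} \lesssim \mu^2 r \left( \sigma_{k+1}(\Lo) + (1/2)^t \sigma_k(\Lo) \right)/\sqrt{mn} + (\text{noise terms})$ together with the support containment $\supp{\St} \subseteq \supp{\So}$. The support containment is what makes everything go: once $\St$ is supported inside $\supp{\So}$, the residual $M - \Lt$ restricted to the complement of $\supp{\So}$ equals $\Lo - \Lt + \No$ there, and the threshold $\zeta$ in~\eqref{eqn:threshold} is chosen precisely so that hard-thresholding kills those small entries while retaining (a subset of) the genuinely corrupted ones. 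Conversely, on $\supp{\So}$, entries that survive thresholding are fine, and entries that get zeroed out contribute an error to $M - \St$ that is at most $\zeta$ in magnitude, which is controlled.

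First I would set up the base case: after the initial hard-thresholding $S^{(0)} = HT_{\zeta_0}(M - L^{(0)}) = HT_{\beta \sigma_1(M)}(M)$, I need that $\supp{S^{(0)}} \subseteq \supp{\So}$ and that $M - S^{(0)}$ is within $O(\sigma_1)$ (in an appropriate norm) of $\Lo$. This uses incoherence (L2): an entry of $\Lo$ has magnitude at most $\mu^2 r \sigma_1(\Lo)/\sqrt{mn} = \beta\sigma_1(\Lo)/4 < \zeta_0$ roughly, so no entry of $\Lo$ alone crosses the threshold, and (using $\|M\| \le \|\Lo\| + \|\So\|$ and the sparsity bound (S1) to control $\|\So\|$ via Gershgorin-type row/column sparsity $\|\So\| \le \alpha n \infnorm{\So}$-style arguments) entries that do survive must have come from $\So$. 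Then the key step in the inner loop: given the invariant at iteration $t$, I would analyze $L^{(t+1)} = P_k(M - \St)$. Writing $M - \St = \Lo + (\So - \St) + \No$, and noting $\So - \St$ is supported on $\supp{\So}$ hence has bounded spectral norm $\lesssim \alpha n \cdot (\text{entrywise bound on } \So - \St)$, standard low-rank perturbation (Weyl/Davis–Kahan) gives the Frobenius bound. The delicate part is the $\ell_\infty$ bound on $L^{(t+1)} - \Lo$, and this is where I expect the main obstacle to lie.

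To get the entrywise bound on the low-rank projection I would follow the approach flagged in the overview: symmetrize (embed the non-symmetric residual into a symmetric matrix as in the footnote), use the fixed-point / eigenvector equation $\lambda u = (\Lo + H) u$ where $H = (\So - \St) + \No$ is sparse-plus-small, expand the perturbed eigenvector/projection in a Neumann-type series, and bound the resulting terms as weighted walks in the graph whose adjacency structure is $\supp{H} \subseteq \supp{\So}$. Since each row and column of $\So$ has at most $\alpha n$ nonzeros with $\alpha \le 1/(512\mu^2 r)$, walks of length $\ell$ accumulate a factor $(\alpha n)^\ell$ against the incoherence-controlled magnitudes, and sparsity small enough forces the geometric series to converge, yielding $\infnorm{L^{(t+1)} - \Lo} \lesssim \mu^2 r \sigma_{k+1}(\Lo)/\sqrt{mn} + (1/2)^{t+1}\mu^2 r\sigma_k(\Lo)/\sqrt{mn} + \text{noise}$. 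Plugging this back into the next thresholding step, the choice of $\zeta$ in~\eqref{eqn:threshold} — which is exactly $\beta$ times the $\ell_\infty$-scale of the current $L$-error — ensures $\supp{S^{(t+1)}} \subseteq \supp{\So}$ and that the entrywise error of $S^{(t+1)}$ is bounded by $\zeta/\beta$ up to constants, closing the induction. At the end of stage $k$, the residual error has been driven down to the scale $\sigma_{k+1}(\Lo)$; the stage-advance condition $\beta\sigma_{k+1}(\Ltn) < \epsilon/(2n)$ either terminates (when $\sigma_{k+1}$ is already negligible, giving the final $\epsilon$ bound after $T = O(\log(n\beta\|M - S^{(0)}\|/\epsilon))$ iterations) or passes a sufficiently denoised matrix to stage $k+1$, where the argument repeats with $k$ replaced by $k+1$. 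Summing the residual contributions $\sigma_{k+1}(\Lo)$ telescoped against the $\mu^2 r$ factor, together with the $\No$-dependent terms that never contract, produces the stated bounds $\frob{\widehat\L - \Lo} \le \epsilon + 2\mu^2 r(7\twonorm{\No} + 8\sqrt{mn/r}\,\infnorm{\No})$ and the corresponding $\ell_\infty$ bound on $\widehat\S$, with Theorem~\ref{thm:main} following by setting $\No = 0$. The main technical obstacle throughout is the entrywise eigenvector perturbation bound — controlling it requires the combinatorial walk-counting argument to interact correctly with both incoherence and the row/column sparsity structure, and making the bookkeeping uniform across all $r$ stages (so that ill-conditioning of $\Lo$ does not blow up the constants) is the part that needs the most care.
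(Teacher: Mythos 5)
Your proposal follows essentially the same route as the paper's proof: symmetrize, run a double induction (outer over stages $k$, inner over iterations $t$) carrying the invariants $\supp{\St}\subseteq\supp{\So}$ and an $\ell_\infty$ bound on $\Lt-\Lo$ at scale $\mu^2 r(\abs{\sigma_{k+1}^*}+2^{-t}\abs{\sigma_k^*})/n$ plus non-contracting noise terms, prove the entrywise bound on $P_k(\Lo+\No+\Et)$ via the eigenvector fixed-point equation and a Neumann series whose terms are controlled by combining row/column sparsity ($\alpha n$ nonzeros) with incoherence, and close each stage with the termination-versus-continuation dichotomy on $\beta\sigma_{k+1}$. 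The only cosmetic difference is that the paper does not "telescope" residuals across stages --- the final bound comes directly from the termination test forcing $\abs{\sigma_{k+1}^*}=O(\epsilon/\mu^2 r)$ --- but this does not change the argument.
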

\subsection{Proof Sketch}\label{sec:thmmain-outline}
We now present the key steps in the proof of Theorem~\ref{thm:main}. A detailed proof is provided in the appendix. 

\textbf{Step I: Reduce to the symmetric case, while maintaining incoherence of $\bm{\Lo}$ and sparsity of $\bm{\So}$}.
Using standard symmetrization arguments, we can reduce the problem to the symmetric case, where all the matrices
involved are symmetric. See appendix for details on this step.

\textbf{Step II: Show decay in $\|L-L^*\|_\infty$ after projection onto the set of rank-$k$ matrices. } The $t$-th iterate $\Ltn$ of the $k$-th stage is given by $\Ltn=P_k(\Lo+\So-\St)$. Hence, $\Ltn$ is obtained by using the top principal components of a perturbation of $\Lo$ given by $\Lo+(\So-\St)$. The key step in our analysis is to show that when an incoherent and low-rank $\Lo$ is perturbed by a sparse matrix $\So-\St$, then $\|\Ltn-\Lo\|_\infty$ is small and is much smaller than $|\So-\St|_\infty$. The following lemma formalizes the intuition; see the appendix for a detailed proof. 
\begin{lemma}\label{lem:dec}
Let $\Lo, \So$ be symmetric and satisfy the assumptions of Theorem~\ref{thm:main} and let $\St$ and $\Lt$ be the
$t^{\textrm{th}}$ iterates of the $k^{\textrm{th}}$ stage of Algorithm~\ref{algo:alg1}. Let $\sigma_1^*, \dots, \sigma_n^*$ be the eigenvalues
of $\Lo$, s.t., $|\sigma_1^*|\geq \dots \geq |\sigma_r^*|$.
Then, the following holds: 
\begin{align*}
\infnorm{\Ltn-\Lo} \leq \frac{2\mu^2 r}{n}&\left(\abs{\sigma_{k+1}^*}+\left(\frac{1}{2}\right)^{t}\abs{\sigma_k^*}
 \right), \\\infnorm{\So-\Stn} \leq \frac{8\mu^2 r}{n}&\left(\abs{\sigma_{k+1}^*}+\left(\frac{1}{2}\right)^{t}\abs{\sigma_k^*}
	\right), \mbox{ and } \supp{\Stn} \subseteq \supp{\So}.
\end{align*}
Moreover, the outputs $\widehat{\L}$ and $\widehat{\S}$ of Algorithm~\ref{algo:alg1} satisfy:
\begin{align*}
\frob{\widehat{\L} - \Lo} \leq \epsilon, \ \ \infnorm{\widehat{\S} - \So} \leq \frac{\epsilon}{{n}}, \mbox{ and }\ \ \supp{\widehat{\S}} \subseteq \supp{\So}.
\end{align*}
\end{lemma}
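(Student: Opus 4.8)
\textbf{Proof plan for Lemma~\ref{lem:dec}.}

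The plan is to induct on the stage index $k$ and, within a stage, on the iteration index $t$. The inductive hypothesis at the start of stage $k$ is that the current sparse iterate $\St[0]$ satisfies $\supp{\St[0]}\subseteq\supp{\So}$ together with an $\ell_\infty$ bound of the form $\infnorm{\So-\St[0]}\le \frac{8\mu^2 r}{n}\abs{\sigma_k^*}$ (this is exactly the terminal bound of the previous stage, using that stage $k-1$ runs until the $(\tfrac12)^t$ term is negligible and $\abs{\sigma_k^*}\le\abs{\sigma_{(k-1)+1}^*}$). The base case $k=1$ follows from the initial hard thresholding $\St[0]=HT_{\zeta_0}(M)$ with $\zeta_0=\beta\sigma_1(M)$: since $\sigma_1(M)\le\sigma_1(\Lo)+\infnorm{\So}\cdot\text{(sparsity factor)}$ one checks that thresholding at this level cannot kill any entry of $\So$ whose magnitude exceeds the threshold, hence $\supp{\St[0]}\subseteq\supp{\So}$, and the surviving part of $\So$ has entries bounded by $\zeta_0$, giving the $\ell_\infty$ control on $\So-\St[0]$.

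The core of the argument is the one-step contraction, and here I would invoke the entrywise eigenvector perturbation bound that the paper develops (the Neumann/Taylor-series expansion for the eigenvectors of $\Lo+(\So-\St)$, controlling walks in the graph whose adjacency matrix is the support of $\So-\St$, which is $\alpha$-sparse per row/column by the induction hypothesis and (S1)). Concretely: writing $\Ltn=P_k(\Lo+(\So-\St))$, the perturbation $\So-\St$ is supported on a set with at most $\alpha$ fraction of nonzeros per row and column and has $\ell_\infty$ norm at most $\frac{8\mu^2 r}{n}(\abs{\sigma_{k+1}^*}+(\tfrac12)^t\abs{\sigma_k^*})$ by the inner induction. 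Feeding this into the entrywise perturbation lemma (which uses $\mu$-incoherence of $\Lo$, the spectral gap structure $\abs{\sigma_k^*}$ vs.\ $\abs{\sigma_{k+1}^*}$, and $\alpha\le\frac{1}{512\mu^2 r}$ to make a geometric series converge) yields
\[
\infnorm{\Ltn-\Lo}\ \le\ \frac{2\mu^2 r}{n}\left(\abs{\sigma_{k+1}^*}+\left(\tfrac12\right)^{t}\abs{\sigma_k^*}\right).
\]
The factor-of-$2$ improvement in the coefficient (from $8$ down to $2$, leaving room to re-absorb the $(\tfrac12)^t$ decay at the next iteration) is where the constant $512$ in (S1) and the choice $\beta=\tfrac{4\mu^2 r}{\sqrt{mn}}$ get used. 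Then $\Stn=HT_\zeta(M-\Ltn)=HT_\zeta(\So-(\Ltn-\Lo))$ with the threshold $\zeta$ chosen in \eqref{eqn:threshold}; one shows (i) every entry of $\So-(\Ltn-\Lo)$ outside $\supp{\So}$ has magnitude $\le\infnorm{\Ltn-\Lo}<\zeta$, so it is thresholded away and $\supp{\Stn}\subseteq\supp{\So}$; and (ii) on $\supp{\So}$, either an entry survives thresholding, in which case $\abs{(\So-\Stn)_{ij}}=\abs{(\Ltn-\Lo)_{ij}}\le\infnorm{\Ltn-\Lo}$, or it is thresholded to zero, in which case $\abs{(\So-\Stn)_{ij}}=\abs{\So_{ij}}\le\zeta+\infnorm{\Ltn-\Lo}$; in both cases $\infnorm{\So-\Stn}\le\zeta+\infnorm{\Ltn-\Lo}$. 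Bounding $\zeta$ via $\sigma_k(\M-\St),\sigma_{k+1}(\M-\St)\approx\abs{\sigma_k^*},\abs{\sigma_{k+1}^*}$ (Weyl, using the $\ell_\infty\to$ spectral-norm conversion on the sparse $\So-\St$) and combining with the displayed bound on $\infnorm{\Ltn-\Lo}$ gives the claimed $\frac{8\mu^2 r}{n}(\abs{\sigma_{k+1}^*}+(\tfrac12)^t\abs{\sigma_k^*})$ bound on $\infnorm{\So-\Stn}$, closing the inner induction.

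Finally I would assemble the ``moreover'' claim. After stage $k$ completes its $T=10\log(n\beta\twonorm{\M-\St[0]}/\epsilon)$ iterations, the $(\tfrac12)^T$ term is driven below roughly $\epsilon/(n\beta\cdot\text{const})$, so the residual is controlled at the level $\frac{2\mu^2 r}{n}\abs{\sigma_{k+1}^*}+(\text{tiny})$. If the stopping test $\beta\sigma_{k+1}(\Ltn)<\frac{\epsilon}{2n}$ triggers, then $\abs{\sigma_{k+1}^*}$ is itself at the $\epsilon$ scale, and converting the resulting $\ell_\infty$ bound on $\Ltn-\Lo$ to Frobenius norm costs a factor $\sqrt{mn}$, while the choice $\beta=\tfrac{4\mu^2 r}{\sqrt{mn}}$ makes $\frac{2\mu^2 r}{n}\cdot\sqrt{mn}=\tfrac12\beta\sqrt{mn}\cdot\text{(stuff)}$ work out to give $\frob{\widehat\L-\Lo}\le\epsilon$ and $\infnorm{\widehat\S-\So}\le\epsilon/n$; if the test never triggers, stage $r$ has $\sigma_{r+1}^*=0$ so the $\abs{\sigma_{k+1}^*}$ term vanishes identically and only the $(\tfrac12)^T$ term remains, again below $\epsilon$. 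The support statement $\supp{\widehat\S}\subseteq\supp{\So}$ is immediate since it is preserved at every iteration of every stage.

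\textbf{Main obstacle.} The hard part is entirely Step~II's entrywise perturbation bound: ordinary Davis--Kahan / Weyl give $\ell_2$ (or spectral) control but not the $\ell_\infty$ control needed to argue about hard thresholding entrywise. Establishing $\infnorm{\Ltn-\Lo}\le\frac{2\mu^2 r}{n}(\cdots)$ requires the fixed-point / Neumann-series expansion of the perturbed eigenvectors and a careful combinatorial bound on weighted walks in the bipartite support graph of $\So-\St$, using its $\alpha$-row/column-sparsity to sum the series — this is exactly the novel matrix-perturbation ingredient the introduction advertises, and getting the constants to close the induction (the $8\to2$ contraction) is the delicate accounting. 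Handling ill-conditioning, i.e.\ justifying that the rank-$k$ projection in stage $k$ does not get derailed by the not-yet-recovered tail eigenvalues $\sigma_{k+1}^*,\dots,\sigma_r^*$, is the reason the bounds are stated in terms of $\abs{\sigma_{k+1}^*}$ rather than $0$, and is the second place where care is needed.
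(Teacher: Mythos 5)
Your plan follows essentially the same route as the paper's proof: a double induction over stages $k$ and iterations $t$, with the one-step $\ell_\infty$ contraction established via the Neumann-series expansion of the perturbed eigenvectors (the paper's Lemma~\ref{lem:dec1}, built on Lemmas~\ref{lem:inf} and~\ref{lem:spec1}), the support/thresholding analysis matching Lemma~\ref{lem:supp-bound} and the threshold estimate of Lemma~\ref{lem:threshold_est}, and the same two-case termination argument at the end of each stage. You also correctly identify the entrywise perturbation bound as the crux, so the proposal is consistent with the paper's argument.
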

\textbf{Step III: Show decay in $\|S-S^*\|_\infty$ after projection onto the set of sparse matrices.} We next show that if $\|\Ltn-\Lo\|_\infty$ is much smaller than $\|\St-\So\|_\infty$ then the iterate $\Stn$ also has a much smaller error (w.r.t. $\So$) than $\St$. The above given lemma formally provides the error bound. 

\textbf{Step IV: Recurse the argument.}
We have now reduced the $\ell_{\infty}$ norm of the sparse part by a factor of half, while maintaining
its sparsity. We can now go back to steps II and III and repeat the arguments for subsequent iterations.

\section{Experiments}\vspace*{-5pt}
We now present an empirical study of our AltProj method. The goal of this study is two-fold: a) establish that our method indeed recovers the low-rank and sparse part exactly, without significant parameter tuning, b) demonstrate that AltProj is significantly faster than Conv-RPCA (see \eqref{eq:conv}); we solve Conv-RPCA using the IALM method \cite{CandesLMW11}, a state-of-the-art solver \cite{lin2010augmented}.
We implemented our method in Matlab and used a Matlab
implementation of the IALM method by \cite{lin2010augmented}.

We consider both synthetic experiments and experiments on real data involving   the
problem of foreground-background separation in a video.  Each of our results for synthetic datasets is averaged
over $5$ runs.

{\em Parameter Setting}: Our pseudo-code (Algorithm~\ref{algo:alg1}) prescribes the threshold $\zeta$ in Step 4, which depends on the knowledge of the singular values of the low rank component $L^*$. Instead, in the experiments, we set the threshold at the $(t+1)$-th step of $k$-th stage as $\zeta=\frac{\mu\sigma_{k+1}(\M-\St)}{\sqrt{n}}$.  For synthetic experiments, we employ the $\mu$ used for data generation,   and for real-world datasets, we tune    $\mu$ through cross-validation. We found that the above thresholding  provides exact recovery while speeding up the computation significantly. We would also like to note that \cite{CandesLMW11} sets the regularization parameter $\lambda$ in Conv-RPCA \eqref{eq:conv} as $1/\sqrt{n}$ (assuming $m \leq n$). However, we found that for problems with large incoherence such a parameter setting {\em does not} provide exact recovery. Instead, we set $\lambda=\mu/\sqrt{n}$ in our experiments.

\begin{figure}[t]
\centering
\begin{tabular}{cccc}
\hspace*{-10pt}\includegraphics[width=0.25\textwidth]{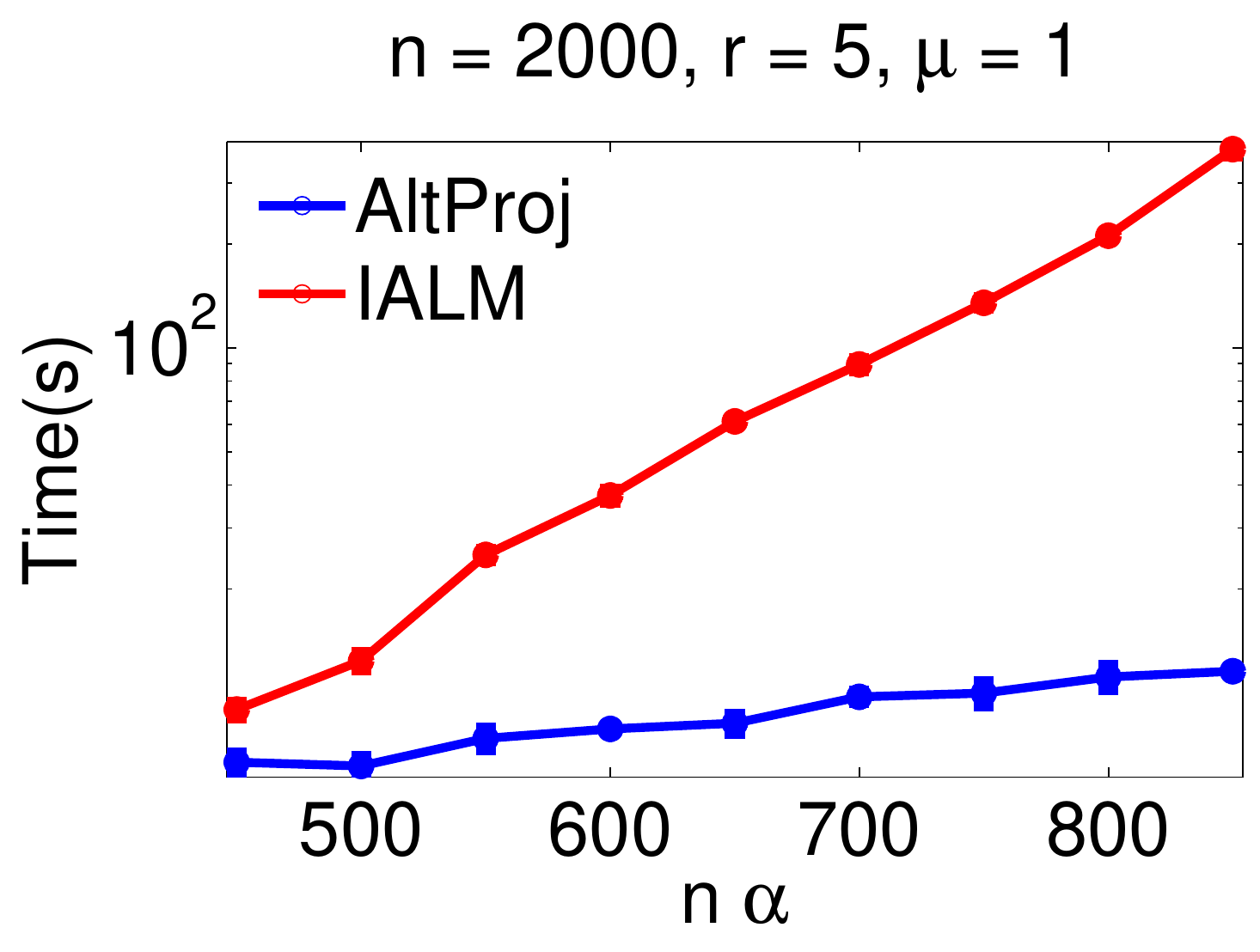}&
\hspace*{-10pt}\includegraphics[width=0.25\textwidth]{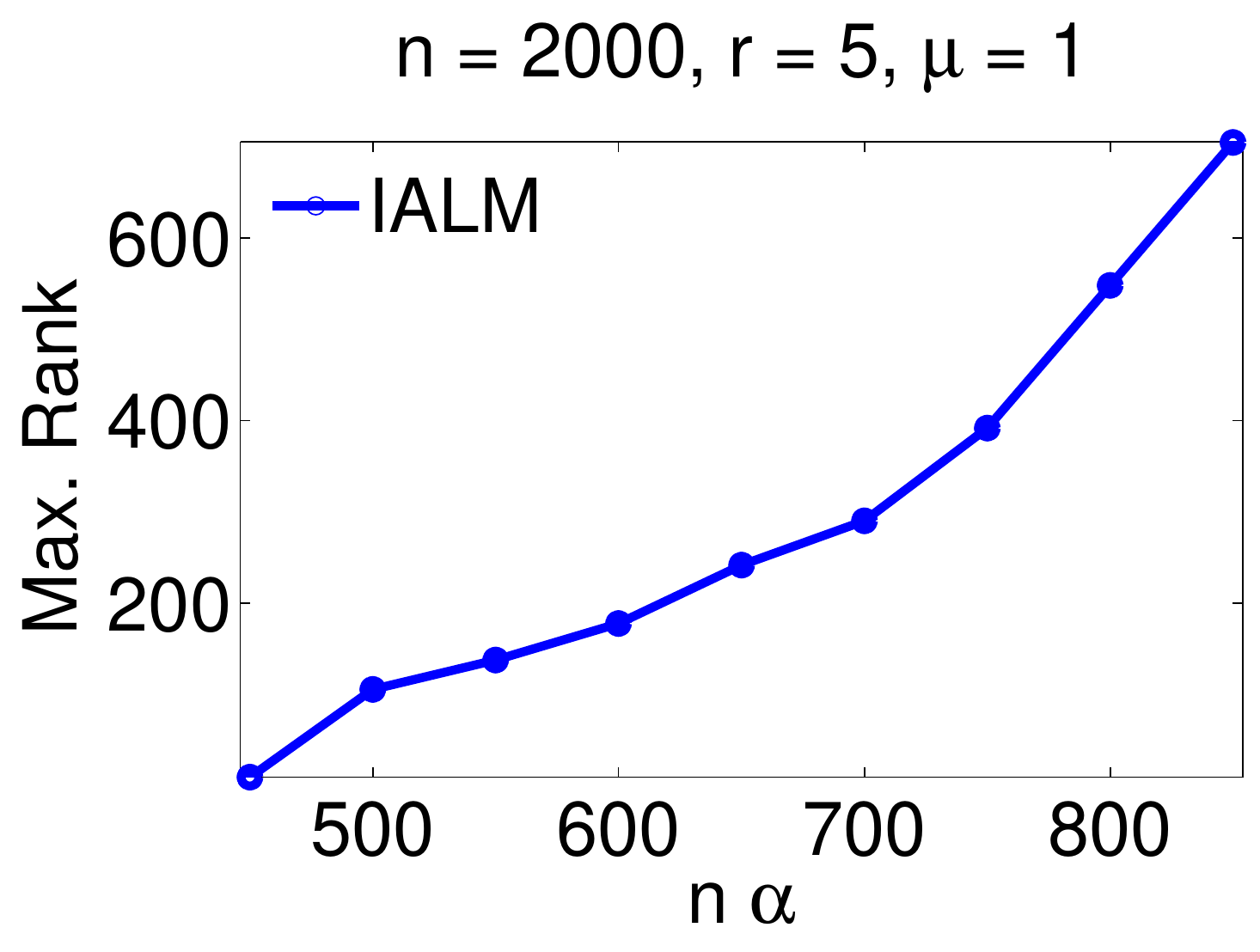}&
\hspace*{-10pt}\includegraphics[width=0.25\textwidth]{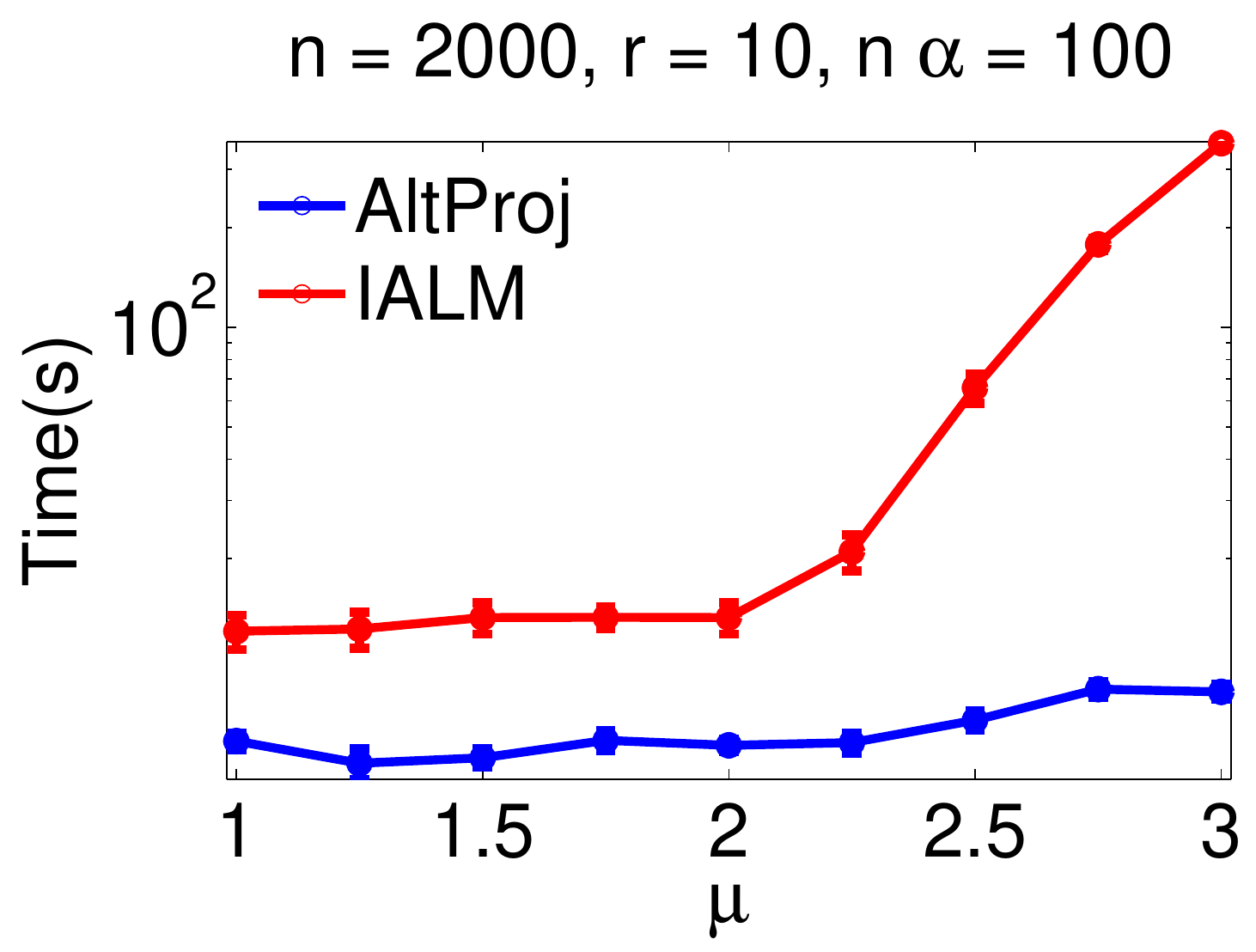}&
\hspace*{-10pt}\includegraphics[width=0.25\textwidth]{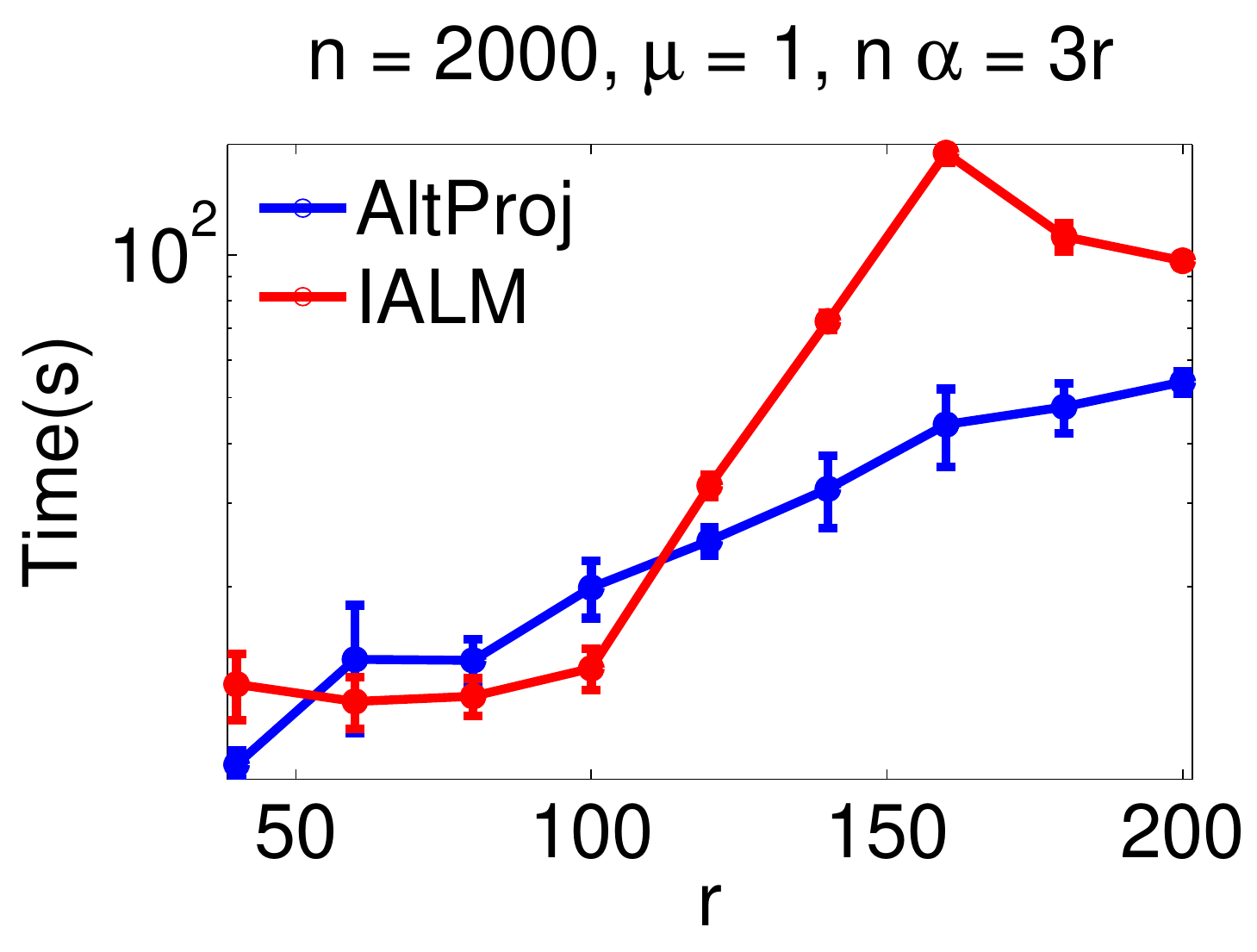}\\[-3pt]
(a)&(b)&(c)&(d)\\
\end{tabular}\vspace*{-8pt}
\caption{Comparison of AltProj and IALM on synthetic datasets. (a) Running time of AltProj and IALM with varying $\alpha$. 
(b) Maximum rank of the intermediate iterates of IALM. (c) Running time of AltProj and IALM with varying $\mu$. 
(d) Running time of AltProj and IALM with varying $r$.}
\label{fig:synthetic_plots_1}\vspace*{-15pt}
\end{figure}

{\bf Synthetic datasets:} Following the experimental setup of \cite{CandesLMW11}, the low-rank part $\Lo=UV^T$ is generated using normally distributed $U\in \R^{m\times r}$, $V\in \R^{n\times r}$. Similarly, $supp(\So)$ is generated by sampling a uniformly random subset of $[m]\times [n]$  with size $\|\So\|_0$ and each non-zero $\So_{ij}$ is drawn i.i.d. from the uniform distribution over $[ r / (2\sqrt{mn}), r / \sqrt{mn} ]$. For increasing incoherence of $\Lo$, we randomly zero-out rows of $U, V$ and then re-normalize them.

There are three key problem parameters for RPCA with a fixed matrix size: a) sparsity of $\So$, b) incoherence of $\Lo$, c) rank of $\Lo$. We investigate performance of both AltProj and IALM by varying each of the three parameters while fixing the others. In our plots (see Figure~\ref{fig:synthetic_plots_1}), we report computational time required by each of the two methods for decomposing $\M$ into $\L+\S$ up to a relative error ($\|\M-\L-\S\|_F/\|\M\|_F$) of $10^{-3}$. Figure~\ref{fig:synthetic_plots_1} shows that AltProj scales significantly better than IALM for increasingly dense $\So$. We attribute this observation to the fact that as $\|\So\|_0$ increases, the problem is ``harder'' and the intermediate iterates of IALM have ranks significantly larger than $r$. Our intuition is confirmed by Figure~\ref{fig:synthetic_plots_1} (b), which shows that when density ($\alpha$) of $\So$ is  $0.4$ then the intermediate iterates of IALM can be of rank over $500$ while the rank of $\Lo$ is only $5$. We observe a similar trend for the other parameters, i.e., AltProj scales significantly better than IALM with increasing incoherence parameter $\mu$ (Figure~\ref{fig:synthetic_plots_1} (c)) and increasing rank (Figure~\ref{fig:synthetic_plots_1} (d)). See Appendix~\ref{app:exps} for additional plots.

\begin{figure}[t]
\centering
\begin{tabular}{cccc}
\hspace*{-7pt}\includegraphics[width=0.25\textwidth]{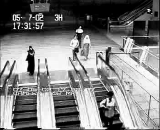}&
\hspace*{-10pt}\includegraphics[width=0.25\textwidth]{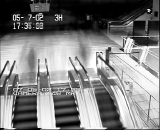}&
\hspace*{-10pt}\includegraphics[width=0.25\textwidth]{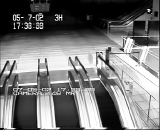}&
\hspace*{-10pt}\includegraphics[width=0.25\textwidth]{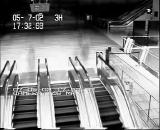}\\
(a)&(b)&(c)&(d)
\end{tabular}\vspace*{-5pt}
\caption{Foreground-background separation in the {\em Escalator} video. (a): Original image frame. (b): Best rank-$10$ approximation; time taken is $3.1s$. (c): Low-rank frame obtained using AltProj; time taken is $63.2s$. (d): Low-rank frame obtained using IALM; time taken is $1688.9s$.}
\label{fig:escalator}
\end{figure}
{\bf Real-world datasets: }Next, we apply our method to the problem of foreground-background (F-B) separation in a video \cite{li2004statistical}. The observed matrix $\M$ is formed by vectorizing each frame and stacking them column-wise. Intuitively, the background in a video is the static part and hence forms a low-rank component while the foreground is a dynamic but sparse perturbation.

Here, we used two benchmark datasets named {\em Escalator} and {\em Restaurant} dataset. The {\em Escalator} dataset has $3417$ frames at a resolution of $160 \times 130$. We first applied the standard PCA method for extracting low-rank part. Figure~\ref{fig:escalator} (b) shows the extracted background from the video. There are several artifacts (shadows of people near the escalator) that are not desirable. In contrast, both IALM and AltProj obtain significantly better F-B separation (see Figure~\ref{fig:escalator}(c), (d)). Interestingly, AltProj removes the steps of the escalator which are moving and arguably are part of the dynamic foreground, while IALM keeps the steps in the background part. Also, our method is significantly faster, i.e., our method, which takes $63.2s$ is about $26$ times faster than IALM, which takes $1688.9s$.

\emph{Restaurant dataset: }Figure~\ref{fig:restaurant} shows the comparison of AltProj and IALM on a subset of the ``Restaurant'' dataset where we consider the last $2055$ frames at a resolution of $120 \times 160$. AltProj was around $19$ times faster than IALM. Moreover, visually, the background extraction seems to be of better quality (for example, notice the blur near top corner counter in the IALM solution). Plot(b) shows the PCA solution and that also suffers from a similar blur at the top corner of the image, while the background frame extracted by AltProj does not have any noticeable artifacts.
\begin{figure}[t]
\centering
\begin{tabular}{cccc}
\hspace*{-7pt}\includegraphics[width=0.25\textwidth]{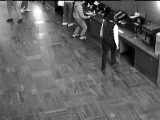}&
\hspace*{-10pt}\includegraphics[width=0.25\textwidth]{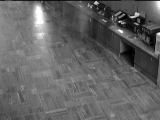}&
\hspace*{-10pt}\includegraphics[width=0.25\textwidth]{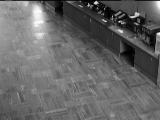}&
\hspace*{-10pt}\includegraphics[width=0.25\textwidth]{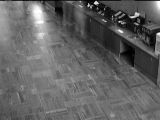}\\
(a)&(b)&(c)&(d)
\end{tabular}
\caption{Foreground-background separation in the {\em Restaurant} video. (a): Original  frame from the video. (b):  Best rank-$10$ approximation (using PCA) of the original frame; $2.8s$ were required to compute the solution (c): Low-rank part obtained using AltProj; computational time required by AltProj was $34.9s$. (d): Low-rank part obtained using IALM; $693.2s$ required by IALM to compute the low-rank+sparse decomposition.}
\label{fig:restaurant}\vspace*{-10pt}
\end{figure}



\vspace*{-8pt}\section{Conclusion}\vspace*{-8pt}
In this work, we proposed a  non-convex method for robust PCA, which consists of alternating projections on to low rank and sparse matrices. We established global convergence of our method under conditions which match those for convex methods. At the same time, our method has much faster running times, and has superior experimental performance. This work opens up a number of interesting questions for future investigation. While we match the convex methods, under the deterministic sparsity model, studying the random sparsity model is of interest. Our noisy recovery results assume deterministic noise; improving the results under random noise needs to be investigated. There are many decomposition problems beyond the robust PCA setting, e.g. structured sparsity models, robust tensor PCA problem, and so on. It is interesting to see if we can establish global convergence for non-convex methods in these settings.  \vspace*{-8pt}
\section*{Acknowledgements}\vspace*{-8pt}
AA and UN would like to acknowledge NSF grant CCF-1219234, ONR N00014-14-1-0665, and Microsoft faculty fellowship. SS would like to acknowledge NSF grants 1302435, 0954059, 1017525 and DTRA grant  HDTRA1-13-1-0024. PJ would like to acknowledge Nikhil Srivastava and Deeparnab Chakrabarty for several insightful discussions during the course of the project.  

\bibliographystyle{alpha}

\newcommand{\etalchar}[1]{$^{#1}$}

\clearpage
\appendix

\section{Proof of Theorem~\ref{thm:main}}
\label{app:proofs}
We will start with some preliminary lemmas.
The first lemma is the well known Weyl's inequality in the matrix setting\cite{bhatia}.
\begin{lemma}\label{lem:weyl-perturbation}
Suppose $B = A + E$ be an $n \times n$ matrix. Let $\lambda_1,\cdots,\lambda_n$ and $\sigma_1,\cdots,\sigma_n$ be the eigenvalues of
$B$ and $A$ respectively such that $\lambda_1 \geq \cdots \geq \lambda_n$ and $\sigma_1
\geq \cdots \geq {\sigma_n}$. Then we have:
\begin{align*}
  \abs{\lambda_i - \sigma_i} \leq \twonorm{E} \; \forall \; i \in [n].
\end{align*}
\end{lemma}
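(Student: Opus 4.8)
The plan is to prove this via the Courant--Fischer variational (min--max) characterization of the eigenvalues of a symmetric matrix, which is the standard route to Weyl-type perturbation bounds. Throughout I assume $A$ and $E$ (hence $B$) are symmetric, which is exactly the regime in which the ordered eigenvalue lists $\lambda_1\geq\cdots\geq\lambda_n$ and $\sigma_1\geq\cdots\geq\sigma_n$ are well defined; this is consistent with the reduction to the symmetric case used elsewhere in the analysis.

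First I would recall Courant--Fischer: for any symmetric $n\times n$ matrix $A$ with eigenvalues $\sigma_1\geq\cdots\geq\sigma_n$,
\[
\sigma_i \;=\; \max_{\substack{V\subseteq\R^n\\ \dim V = i}}\ \min_{\substack{x\in V\\ \twonorm{x}=1}}\ \trans{x}Ax .
\]
Next, apply this identity to $B = A+E$. For any fixed subspace $V$ and any unit vector $x$, we have $\trans{x}Bx = \trans{x}Ax + \trans{x}Ex$ with $\abs{\trans{x}Ex}\leq\twonorm{E}$ since $\twonorm{x}=1$. Hence for every $i$-dimensional subspace $V$,
\[
\min_{\substack{x\in V\\ \twonorm{x}=1}} \trans{x}Bx \;\leq\; \min_{\substack{x\in V\\ \twonorm{x}=1}} \trans{x}Ax \;+\; \twonorm{E},
\]
and taking the maximum over all such $V$ on both sides gives $\lambda_i \leq \sigma_i + \twonorm{E}$.

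Finally, the reverse inequality follows by the same argument applied to $A = B + (-E)$ (equivalently, by using $\trans{x}Ex \geq -\twonorm{E}$), which yields $\sigma_i \leq \lambda_i + \twonorm{E}$. Combining the two bounds gives $\abs{\lambda_i - \sigma_i}\leq\twonorm{E}$ for all $i\in[n]$, as claimed. I do not expect a genuine obstacle here; the only point requiring care is to use the same ``max over $i$-dimensional subspaces'' form of Courant--Fischer for both $A$ and $B$, so that the extremal subspace can be reused across the two matrices, or alternatively to invoke the dual ``min over $(n-i+1)$-dimensional subspaces'' form for one of the two directions.
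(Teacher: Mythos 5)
Your proof is correct. The paper does not actually prove this lemma: it states it as the classical Weyl perturbation inequality and cites Bhatia's \emph{Matrix Analysis}, so there is no in-paper argument to compare against. Your Courant--Fischer derivation is the standard textbook proof of exactly this statement, and you are right to flag the implicit symmetry assumption --- the lemma is only invoked after the paper's reduction to symmetric matrices, where the ordered real eigenvalue lists make sense. The one step worth stating explicitly is the pointwise bound $\trans{x}Bx \leq \trans{x}Ax + \twonorm{E}$ for unit $x$, which lets you pass the $\twonorm{E}$ term outside both the inner minimum and the outer maximum; you do this correctly, and the reverse direction by swapping the roles of $A$ and $B$ is likewise fine.
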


The following lemma is the Davis-Kahan theorem\cite{bhatia},
specialized for rank-$1$ matrices.
\begin{lemma}\label{lem:daviskahan}
Suppose $\mB = \A + \E$. Let $\A = \uo \trans{(\uo)}$ be a rank-$1$ matrix with unit spectral norm.
Suppose further that $\twonorm{\E}< \frac{1}{2}$. Then, we have:
\begin{align*}
  \abs{\lambda - 1} &< \twonorm{\E}, \mbox{ and }\\
  \abs{\iprod{\u}{\uo}^2 - 1} &< 4 \twonorm{\E},
\end{align*}
where $\lambda$ and $\u$ are the top eigenvalue eigenvector pair of $\mB$.
\end{lemma}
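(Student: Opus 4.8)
The plan is to prove the two estimates separately: the eigenvalue bound by a direct appeal to Weyl's inequality, and the eigenvector bound by the standard ``eigenvalue equation plus projection onto $(\uo)^{\perp}$'' argument underlying Davis--Kahan.

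For the eigenvalue bound, apply Lemma~\ref{lem:weyl-perturbation} with $\A = \uo\trans{(\uo)}$, whose eigenvalues are $1,0,\dots,0$. The top eigenvalue $\lambda$ of $\mB$ then satisfies $\abs{\lambda-1}\le\twonorm{\E}$, which is the claimed bound. As a by-product, $\lambda \ge 1-\twonorm{\E} > \tfrac12$ while every other eigenvalue of $\mB$ lies in $(-\tfrac12,\tfrac12)$; in particular $\lambda$ is positive, simple, and of largest magnitude, so the pair $(\lambda,\u)$ is well defined (up to sign of $\u$) and $\abs{\lambda}=\lambda>\tfrac12$. I will record this lower bound, as it is the quantitative input that makes the eigenvector estimate work.

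For the eigenvector bound, write $\u = \cos\theta\,\uo + \sin\theta\,\bm{w}$, where $\bm{w}$ is a unit vector orthogonal to $\uo$ and $\theta\in[0,\pi/2]$ is the angle between $\u$ and $\uo$, with the sign of $\u$ chosen so that $\iprod{\u}{\uo}\ge 0$ (this does not affect $\iprod{\u}{\uo}^2$). From $\mB\u=\lambda\u$ together with $\A\u = \bigl(\trans{(\uo)}\u\bigr)\uo = \cos\theta\,\uo$ we obtain $\lambda\u = \cos\theta\,\uo + \E\u$. Applying the orthogonal projection $P = I - \uo\trans{(\uo)}$ onto $(\uo)^{\perp}$ annihilates the $\uo$ term and yields $\lambda\sin\theta\,\bm{w} = P\E\u$; taking norms and using $\twonorm{P}\le 1$ and $\twonorm{\u}=1$ gives $\abs{\lambda}\sin\theta \le \twonorm{\E}$. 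Combined with $\abs{\lambda}>\tfrac12$ this gives $\sin\theta < 2\twonorm{\E}$, and hence $\abs{\iprod{\u}{\uo}^2 - 1} = \sin^2\theta < 4\twonorm{\E}^2 \le 4\twonorm{\E}$, as desired.

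There is no serious obstacle here; the only points requiring care are (i) identifying which eigenvalue/eigenvector of $\mB$ the statement refers to — settled by the Weyl by-product above — and (ii) using the hypothesis $\twonorm{\E}<\tfrac12$ to keep $\lambda$ bounded away from $0$. This spectral separation is exactly what prevents the top eigenvector from being arbitrarily perturbed, and it is the reason the bound degrades (through the $1/\abs{\lambda}$ factor) as $\twonorm{\E}\to\tfrac12$.
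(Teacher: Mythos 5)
Your proof is correct. Note that the paper itself gives no proof of this lemma --- it simply states it as ``the Davis--Kahan theorem [Bha97], specialized for rank-$1$ matrices'' --- so there is no in-paper argument to compare against; your self-contained derivation (Weyl's inequality for the eigenvalue, then the eigenvalue equation projected onto $(\uo)^{\perp}$ with the lower bound $\abs{\lambda} > \tfrac12$ for the eigenvector) is exactly the standard route and all the steps check out, including the spectral-separation by-product that makes the top pair well defined. The only cosmetic discrepancy is that Weyl yields $\abs{\lambda-1}\leq\twonorm{\E}$ rather than the strict inequality in the statement, and your chain in fact proves the stronger bound $\abs{\iprod{\u}{\uo}^2-1}<4\twonorm{\E}^2\leq 2\twonorm{\E}$; neither affects correctness.
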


As outlined in Section~\ref{sec:thmmain-outline} (and formalized in the proof of Theorem~\ref{thm:main}), it is sufficient to prove the correctness of Algorithm~\ref{algo:alg1}
for the case of symmetric matrices. So, most of the lemmas we prove in this section assume that the matrices are
symmetric.
\begin{lemma}
  \label{lem:spec1}
Let $\S\in \R^{n \times n}$ satisfy assumption $(S1)$.
Then, $\twonorm{S} \leq \alpha n \infnorm{S}$.
\end{lemma}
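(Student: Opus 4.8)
The plan is to control $\twonorm{S}$ by $\infnorm{S}$ through the sparsity structure, via the elementary fact that the spectral norm of a matrix is dominated by (the geometric, hence also the arithmetic, mean of) its maximum absolute row sum and maximum absolute column sum. Write $R := \max_i \sum_j |S_{ij}|$ and $C := \max_j \sum_i |S_{ij}|$ for these two quantities. The single substantive input from $(S1)$ is that every row and every column of $S$ has at most $\alpha n$ nonzero entries, each of absolute value at most $\infnorm{S}$, so that $R \le \alpha n\,\infnorm{S}$ and $C \le \alpha n\,\infnorm{S}$.

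First I would establish the bound $\twonorm{S}\le \tfrac12(R+C)$ in one line: for any unit vectors $x,y\in\R^n$,
\[
|\trans{x}Sy| \le \sum_{i,j}|S_{ij}|\,|x_i|\,|y_j| \le \tfrac12\sum_{i,j}|S_{ij}|\bigl(x_i^2+y_j^2\bigr) = \tfrac12\Bigl(\sum_i x_i^2\sum_j|S_{ij}| + \sum_j y_j^2\sum_i|S_{ij}|\Bigr) \le \tfrac12(R+C),
\]
using $|x_i||y_j|\le\tfrac12(x_i^2+y_j^2)$ and $\twonorm{x}=\twonorm{y}=1$; taking the supremum over such $x,y$ gives $\twonorm{S}\le\tfrac12(R+C)$. (Alternatively one could just cite the standard interpolation inequality $\twonorm{S}\le\sqrt{RC}$.)

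Then I would simply combine the two steps: $\twonorm{S}\le\tfrac12(R+C)\le\tfrac12\bigl(\alpha n\,\infnorm{S}+\alpha n\,\infnorm{S}\bigr)=\alpha n\,\infnorm{S}$, which is exactly the claim.

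There is essentially no obstacle: the entire argument is a two-line computation, and it makes no use of symmetry of $S$, so it applies in the stated generality. The only point requiring care is the notational convention of the paper — $\infnorm{\cdot}$ here is the \emph{entrywise} maximum, whereas $R$ and $C$ are the $\infty\!\to\!\infty$ and $1\!\to\!1$ operator norms — so one must not conflate $\infnorm{S}$ with a row/column sum; keeping these distinct, the bound $R,C\le\alpha n\,\infnorm{S}$ is immediate from counting at most $\alpha n$ nonzeros per line.
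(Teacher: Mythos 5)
Your proof is correct and follows essentially the same route as the paper's: both pick unit vectors $x,y$ attaining the spectral norm, apply $|x_i||y_j|\le\tfrac12(x_i^2+y_j^2)$, and bound the resulting row and column sums by $\alpha n\,\infnorm{S}$ using $(S1)$. Your write-up is in fact slightly more careful, since you keep absolute values on the entries $S_{ij}$ throughout (the paper's displayed inequality omits them), and your explicit separation of $R$, $C$ from the entrywise norm $\infnorm{S}$ is a useful clarification but not a different argument.
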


\begin{proof}[Proof of Lemma~\ref{lem:spec1}]
  Let $x, y$ be unit vectors such that $\|S\|_2=x^T S y=\sum_{ij}x_i y_j S_{ij}$. Then, using $a\cdot b \leq (a^2+b^2)/2$,
  we have:
  \begin{align}
    \twonorm{S} \leq \frac{1}{2}\sum_{ij} (x_i^2 + y_j^2)S_{ij}\leq \frac{1}{2} (\alpha n \|S\|_\infty+\alpha n \|S\|_\infty),
  \end{align}
where the last inequality follows from the fact that $S$ has at most $\alpha n$ non-zeros per row and per column. 
\end{proof}

\begin{lemma}\label{lem:inf}
  Let $S\in \R^{n\times n}$ satisfy assumption ($S1$). Also, let $U\in \R^{n\times r}$ be a $\mu$-incoherent
  orthogonal matrix, i.e., $\max_i \twonorm{\trans{\e_i} U} \leq \frac{\mu\sqrt{r}}{\sqrt{n}}$, where $\e_i$ stands
  for the $i^{th}$ standard basis vector.
  Then, $\forall p\geq 0$, the following holds: $$\max_i \twonorm{\trans{\e_i} S^p U} \leq \frac{\mu\sqrt{r}}{\sqrt{n}}(\alpha\cdot n\cdot \infnorm{S})^p.$$
\end{lemma}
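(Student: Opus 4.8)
The plan is to proceed by induction on $p$. The base case $p=0$ is exactly the incoherence assumption $\max_i \twonorm{\trans{\e_i} U} \leq \frac{\mu\sqrt{r}}{\sqrt{n}}$, so there is nothing to prove. For the inductive step, I would write $S^{p} U = S \cdot (S^{p-1} U)$ and estimate one row at a time. Fix a coordinate $i$; then $\trans{\e_i} S^{p} U = \trans{\e_i} S (S^{p-1}U) = \sum_{j} S_{ij} \cdot \trans{\e_j}(S^{p-1}U)$, where the sum runs over the (at most $\alpha n$ by assumption (S1)) indices $j$ for which $S_{ij} \neq 0$. Taking norms and using the triangle inequality gives
\begin{align*}
\twonorm{\trans{\e_i} S^{p} U} \;\leq\; \sum_{j\,:\,S_{ij}\neq 0} \abs{S_{ij}} \cdot \twonorm{\trans{\e_j}(S^{p-1}U)} \;\leq\; \infnorm{S} \sum_{j\,:\,S_{ij}\neq 0} \twonorm{\trans{\e_j}(S^{p-1}U)}.
\end{align*}
By the inductive hypothesis, each term $\twonorm{\trans{\e_j}(S^{p-1}U)}$ is at most $\frac{\mu\sqrt{r}}{\sqrt{n}}(\alpha n \infnorm{S})^{p-1}$, and there are at most $\alpha n$ nonzero terms in the sum, so the right-hand side is bounded by $\infnorm{S}\cdot \alpha n \cdot \frac{\mu\sqrt{r}}{\sqrt{n}}(\alpha n \infnorm{S})^{p-1} = \frac{\mu\sqrt{r}}{\sqrt{n}}(\alpha n \infnorm{S})^{p}$, which is the claimed bound. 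Taking the maximum over $i$ completes the induction.

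There is essentially no serious obstacle here: the key observation is simply that each application of $S$ can, via (S1), mix in contributions from at most $\alpha n$ rows, and each such contribution is scaled by an entry of magnitude at most $\infnorm{S}$; so one full multiplication by $S$ inflates the per-row norm bound by a factor of exactly $\alpha n \infnorm{S}$. The only point requiring any care is making sure the row-sparsity of $S$ (rather than column-sparsity) is the one being invoked — since we are bounding $\trans{\e_i} S$, we use the fact that row $i$ of $S$ has at most $\alpha n$ nonzeros — and that the crude bound $\abs{S_{ij}} \le \infnorm{S}$ suffices rather than anything sharper. One could equivalently phrase the argument as a matrix–norm bound using Lemma~\ref{lem:spec1}-style reasoning, but the direct row-by-row induction above is cleanest and gives precisely the stated constant.
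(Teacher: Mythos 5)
Your proof is correct and follows essentially the same route as the paper's: induction on $p$, peeling off one factor of $S$ and using the row-sparsity from (S1) together with $\abs{S_{ij}}\le\infnorm{S}$ to gain a factor of $\alpha n\infnorm{S}$ per step. The only difference is cosmetic — you apply the triangle inequality directly to the vector sum $\sum_j S_{ij}\,\trans{\e_j}(S^{p-1}U)$, whereas the paper expands $\twonorm{\trans{\e_i}S^{p+1}U}^2$ and uses Cauchy--Schwarz; your version is arguably cleaner.
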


\begin{proof}[Proof of Lemma~\ref{lem:inf}]
  We prove the lemma using mathematical induction. 

Base Case ($p=0$): This is just a restatement of the incoherence of $U$. 

Induction step: We have:
\begin{align*}
  \twonorm{\trans{\e_i} (S)^{p+1} U}^2 &= \|\trans{\e_i} S (S^p U)\|_2^2 = \sum_\ell (\trans{\e_i} S (S^p U) \e_\ell)^2
  =\sum_\ell (\sum_j S_{ij}\trans{\e_j} (S^p U) \e_\ell)^2\\
&=\sum_{j_1 j_2}S_{ij_1}S_{ij_2}\sum_{\ell}(\trans{\e_{j_1}}(S^p U) \e_\ell)(\trans{\e_\ell}\trans{(S^p U)} \e_{j_2})\\
&\stackrel{\zeta_1}{\leq}\sum_{j_1 j_2}S_{ij_1}S_{ij_2}(\trans{\e_{j_1}}(S^p U) \trans{(S^p U)} \e_{j_2})
\leq\sum_{j_1 j_2}S_{ij_1}S_{ij_2} \|\e_{j_1}^T(S^p U)\|_2 \|\trans{\e_{j_2}}(S^p U)\|_2\\
&\stackrel{\zeta_2}{\leq}\frac{\mu^2 r}{n} (\alpha\cdot n\cdot \|S\|_\infty)^{2p}, 
\end{align*}
where $\zeta_1$ follows by $\sum_{\ell=1}^t \e_\ell \trans{\e_\ell}=I$, and $\zeta_2$ follows from assumption
$(S1)$ on $S$ and from the inductive hypothesis on $\twonorm{\trans{\e_i} S^p U}$.
\end{proof}

In what follows, we prove a number of lemmas concerning the structure of $\Lt$ and $\Et\defas\So-\St$.
The following lemma shows that the threshold in \eqref{eqn:threshold} is close to
that with $\M-\St$ replaced by $\Lo$.
\begin{lemma}\label{lem:threshold_est}
Let $\Lo, \So$ be symmetric and satisfy the assumptions of Theorem~\ref{thm:main} and let $\St$ be the
$t^{\textrm{th}}$ iterate of the $k^{\textrm{th}}$ stage of Algorithm~\ref{algo:alg1}. Let $\sigma_1^*, \dots,
\sigma_r^*$ be the eigenvalues of $\Lo$, such that $|\sigma_1^*|\geq \dots \geq |\sigma_r^*|$ and $\lambda_1,
\cdots,\lambda_n$ be the eigenvalues of $\M-\St$ such that $\abs{\lambda_1} \geq \cdots \geq \abs{\lambda_n}$.
Recall that $\Et \defas \So-\St$. Suppose further that
\begin{enumerate}
  \item	$\infnorm{\Et} \leq \frac{8\mu^2 r}{n}\left(|\sigma_{k+1}^*|+\left(\frac{1}{2}\right)^{t-1}|\sigma_k^*|\right)$, and
  \item	$\supp{\Et} \subseteq \supp{\So}$.
\end{enumerate}
Then,
\begin{align}
  \label{eq:supp_1}
  \frac{7}{8}\left(|\sigma_{k+1}^*|+\left(\frac{1}{2}\right)^{t}|\sigma_k^*|\right)
  \leq \left(|\lambda_{k+1}|+\left(\frac{1}{2}\right)^{t}|\lambda_k|\right)
  \leq \frac{9}{8} \left(|\sigma_{k+1}^*|+\left(\frac{1}{2}\right)^{t}|\sigma_k^*|\right).
\end{align}
\end{lemma}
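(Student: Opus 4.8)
The plan is to relate the eigenvalues $\lambda_i$ of $\M - \St = \Lo + \Et$ to the eigenvalues $\sigma_i^*$ of $\Lo$ by applying Weyl's inequality (Lemma~\ref{lem:weyl-perturbation}), which gives $|\lambda_i - \sigma_i^*| \leq \twonorm{\Et}$ for all $i$ (with the convention $\sigma_i^* = 0$ for $i > r$, so in particular $|\lambda_{k+1} - \sigma_{k+1}^*| \leq \twonorm{\Et}$ and $|\lambda_k - \sigma_k^*| \leq \twonorm{\Et}$). So the entire lemma reduces to showing that the spectral-norm perturbation $\twonorm{\Et}$ is small relative to the quantity $|\sigma_{k+1}^*| + (1/2)^t |\sigma_k^*|$ — specifically, it suffices to show
\begin{align*}
\twonorm{\Et} \leq \frac{1}{16}\left(|\sigma_{k+1}^*| + \left(\tfrac12\right)^t |\sigma_k^*|\right),
\end{align*}
since then $|\lambda_{k+1}| + (1/2)^t|\lambda_k|$ differs from $|\sigma_{k+1}^*| + (1/2)^t|\sigma_k^*|$ by at most $2\twonorm{\Et} \leq \frac18(|\sigma_{k+1}^*| + (1/2)^t|\sigma_k^*|)$, which is within the claimed factors $7/8$ and $9/8$ (using $(1/2)^t \leq 1$ to absorb the two perturbation terms into the single bound).

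Next I would bound $\twonorm{\Et}$ using the sparsity structure. Since $\supp{\Et} \subseteq \supp{\So}$ by hypothesis~2, and $\So$ satisfies $(S1)$ (at most $\alpha n$ nonzeros per row/column), the matrix $\Et$ also has at most $\alpha n$ nonzeros per row and column. Hence Lemma~\ref{lem:spec1} applies to $\Et$, giving $\twonorm{\Et} \leq \alpha n \infnorm{\Et}$. Combining this with hypothesis~1,
\begin{align*}
\twonorm{\Et} \leq \alpha n \cdot \frac{8\mu^2 r}{n}\left(|\sigma_{k+1}^*| + \left(\tfrac12\right)^{t-1}|\sigma_k^*|\right) = 8\alpha \mu^2 r\left(|\sigma_{k+1}^*| + 2\left(\tfrac12\right)^{t}|\sigma_k^*|\right).
\end{align*}
Now invoke assumption $(S1)$, $\alpha \leq \frac{1}{512\mu^2 r}$, so $8\alpha\mu^2 r \leq \frac{1}{64}$, and the factor $2$ in front of the $(1/2)^t$ term is absorbed to yield $\twonorm{\Et} \leq \frac{1}{32}\left(|\sigma_{k+1}^*| + (1/2)^t|\sigma_k^*|\right)$, which is even better than the $\frac{1}{16}$ we needed. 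Plugging back into the Weyl estimate finishes the proof.

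The bookkeeping of constants — carrying the factor-of-two discrepancy between the $(1/2)^{t-1}$ in the hypothesis and the $(1/2)^t$ in the conclusion, and checking that the $7/8$ and $9/8$ windows are genuinely respected rather than just approximately — is the only place requiring care, but it is entirely routine given the generous slack ($\frac{1}{32}$ versus the required $\frac{1}{16}$). The one conceptual point worth stating explicitly is the use of the convention $\sigma_i^* = 0$ for $i > r$ when $k+1 > r$, so that Weyl applies uniformly; this is consistent with Lemma~\ref{lem:weyl-perturbation} since $\Lo$ has rank at most $r$ by $(L1)$. I do not anticipate any real obstacle here — this lemma is a straightforward consequence of the two preceding structural lemmas (Weyl and Lemma~\ref{lem:spec1}) together with the sparsity budget.
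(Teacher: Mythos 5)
Your proposal is correct and follows essentially the same route as the paper's proof: Weyl's inequality (Lemma~\ref{lem:weyl-perturbation}) combined with the sparsity-to-spectral-norm bound of Lemma~\ref{lem:spec1}, the hypothesis on $\infnorm{\Et}$, and the bound $\alpha\leq \frac{1}{512\mu^2 r}$ to control the total perturbation of $|\lambda_{k+1}|+(1/2)^t|\lambda_k|$ within the $\pm\frac18$ window. The only cosmetic difference is that you isolate an explicit intermediate target of $\frac{1}{16}$ on $\twonorm{\Et}$ before combining, whereas the paper carries $\gamma_t$ through directly; the constants check out either way.
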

\begin{proof}
Note that $\M-\St=\Lo+\Et$. Now, using Lemmas~\ref{lem:weyl-perturbation} and~\ref{lem:spec1}, we have:
\begin{align*}
\abs{\lambda_{k+1}-\sigma_{k+1}^*} \leq \twonorm{\Et} \leq \alpha n \infnorm{\Et} \leq 8\mu^2 r \alpha \gamma_t,
\end{align*}
where $\gamma_t \defas \left(|\sigma_{k+1}^*|+\left(\frac{1}{2}\right)^{t-1}|\sigma_k^*|\right)$.
That is,
$\abs{|\lambda_{k+1}|-|\sigma_{k+1}^*|}\leq  8\mu^2 r \alpha \gamma_t$.
Similarly, $\abs{|\lambda_{k}|-|\sigma_{k}^*|} \leq  8\mu^2 r \alpha \gamma_t$.
So we have:
\begin{align*}
\abs{ \left(|\lambda_{k+1}|+\left(\frac{1}{2}\right)^{t}|\lambda_k|\right) - \left(|\sigma_{k+1}^*|+\left(\frac{1}{2}\right)^{t}|\sigma_k^*|\right)} &\leq 8\mu^2 r \alpha \gamma_t \left(1+\left(\frac{1}{2}\right)^t\right) \\
&\leq 16 \mu^2 r \alpha \gamma_t \\
&\leq \frac{1}{8}\left(|\sigma_{k+1}^*|+\left(\frac{1}{2}\right)^{t}|\sigma_k^*|\right),
\end{align*}
where the last inequality follows from the bound $\alpha\leq \frac{1}{512\mu^2 r}$. 
\end{proof}

The following lemma shows that under the same assumptions as in Lemma~\ref{lem:threshold_est},
we can obtain a bound on the $\ell_{\infty}$ norm of $\Ltn-\Lo$.
This is the most crucial step in our analysis since we bound $\ell_{\infty}$ norm of errors which are quite hard
to obtain.
\begin{lemma}\label{lem:dec1}
Assume the notation of Lemma~\ref{lem:threshold_est}. Also, let $\Lt, \St$ be the $t^{\textrm{th}}$ iterates of
$k^{\textrm{th}}$ stage of Algorithm~\ref{algo:alg1} and $\Ltn, \Stn$ be the $(t+1)^{\textrm{th}}$ iterates of the same
stage. Also, recall that $\Et \defas \So-\St$ and $\Etn \defas \So-\Stn$.
Suppose further that
\begin{enumerate}
  \item	$\infnorm{\Et} \leq \frac{8\mu^2 r}{n}\left(|\sigma_{k+1}^*|+\left(\frac{1}{2}\right)^{t-1}|\sigma_k^*| \right)$, and
  \item	$\supp{\Et}\subseteq \supp{\So}$.
\end{enumerate}
Then, we have:
  $$\infnorm{\Ltn-\Lo} \leq \frac{2\mu^2 r}{n}\left(|\sigma_{k+1}^*|+\left(\frac{1}{2}\right)^{t}|\sigma_k^*| \right).$$
\end{lemma}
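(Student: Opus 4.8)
The plan is to control the best rank-$k$ approximation $\Ltn = P_k(\M-\St) = P_k(\Lo+E)$ entrywise, where I write $E \defas \Et = \So-\St$; by Step~I we may assume all matrices are symmetric, so $\Lo = \Uo\Sigmao(\Uo)^\top$ with the rows of $\Uo\in\R^{n\times r}$ being $\mu$-incoherent, $\Lo+E$ is symmetric, and $E$ inherits assumption $(S1)$ from hypothesis~2. \textbf{Step 1 (locate the eigenvalues).} Since $E$ satisfies $(S1)$, Lemma~\ref{lem:spec1} gives $\twonorm{E}\le\alpha n\infnorm{E}$; plugging in hypothesis~1 and $\alpha\le\frac1{512\mu^2 r}$ yields $\twonorm{E}\le\frac1{16}\abs{\sigma_k^*}$ (the arithmetic is the one already in Lemma~\ref{lem:threshold_est}). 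Weyl's inequality (Lemma~\ref{lem:weyl-perturbation}) then shows that the eigenvalues $\lambda_1,\dots,\lambda_n$ of $\Lo+E$, ordered so that $\abs{\lambda_1}\ge\cdots\ge\abs{\lambda_n}$, satisfy $\abs{\lambda_i}\ge\frac{15}{16}\abs{\sigma_k^*}>0$ for $i\le k$ (in particular $\lambda_i I-E$ is invertible), and $\abs{\lambda_i}\le\abs{\sigma_{k+1}^*}+\frac1{16}\abs{\sigma_k^*}$ for $i>k$.

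\textbf{Step 2 (fixed-point expansion of the top eigenvectors).} For $i\le k$ the eigenvector equation $(\Lo+E)u_i=\lambda_i u_i$ gives $\Lo u_i=(\lambda_i I-E)u_i$, hence
\[
u_i=(\lambda_i I-E)^{-1}\Lo u_i=\sum_{p\ge0}\lambda_i^{-(p+1)}E^p\Uo\,y_i,\qquad y_i\defas\Sigmao(\Uo)^\top u_i\in\R^r .
\]
The crucial observation is the normalization $\twonorm{y_i}=\twonorm{\Lo u_i}=\twonorm{(\lambda_i I-E)u_i}\le\abs{\lambda_i}+\twonorm{E}\le\frac{20}{19}\abs{\lambda_i}$, which is precisely what keeps the condition number $\abs{\sigma_1^*}/\abs{\sigma_k^*}$ from entering the per-entry estimates. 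Combining this with Lemma~\ref{lem:inf}, i.e.\ $\twonorm{e_j^\top E^p\Uo}\le\frac{\mu\sqrt r}{\sqrt n}(\alpha n\infnorm{E})^p$, and writing $\delta\defas\alpha n\infnorm{E}/\abs{\lambda_k}\le\frac1{16}$, one reads off that each top-$k$ eigenvector is still incoherent up to a constant: $\infnorm{u_i}\le\frac{2\mu\sqrt r}{\sqrt n}$.

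\textbf{Step 3 (assemble, isolate the main term, sum the tail).} Substituting the expansion of each $u_i$ into $\Ltn=\sum_{i\le k}\lambda_i u_iu_i^\top$ and collecting powers of $E$ gives
\[
\Ltn=\sum_{p,q\ge0}E^p\,\Uo\,N_{p,q}\,(\Uo)^\top E^q,\qquad N_{p,q}\defas\sum_{i\le k}\lambda_i^{-(p+q+1)}y_iy_i^\top .
\]
For the tail $(p,q)\ne(0,0)$: since the rows of $\Uo$ have $\ell_2$-norm at most $\frac{\mu\sqrt r}{\sqrt n}$ and $E$ is symmetric, every entry of $E^p\Uo N_{p,q}(\Uo)^\top E^q$ is at most $\frac{\mu^2 r}{n}(\alpha n\infnorm{E})^{p+q}\twonorm{N_{p,q}}$, and I will show $\twonorm{N_{p,q}}\le C\abs{\sigma_k^*}^{1-p-q}$ for $p+q\ge1$; summing the geometric series $\sum_{s\ge1}(s+1)\delta^s$ then bounds the whole tail in $\ell_\infty$ by $\frac{\mu^2 r}{n}\cdot O(\alpha n\infnorm{E})$, which by hypothesis~1 and $\alpha\le\frac1{512\mu^2 r}$ is at most $\frac{\mu^2 r}{n}\cdot\frac1{32}\bigl(\abs{\sigma_{k+1}^*}+(1/2)^t\abs{\sigma_k^*}\bigr)$. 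For the main term $(p,q)=(0,0)$: when $E=0$ one has $u_i=(\Uo)_i$, $y_i=\sigma_i^*e_i$, so $N_{0,0}=\mathrm{diag}(\sigma_1^*,\dots,\sigma_k^*,0,\dots,0)$ and $\Uo N_{0,0}(\Uo)^\top=P_k(\Lo)$; since $\infnorm{P_k(\Lo)-\Lo}\le\frac{\mu^2 r}{n}\abs{\sigma_{k+1}^*}$ by Cauchy--Schwarz over the columns of $\Uo$, it remains to show that $\Uo N_{0,0}(\Uo)^\top$ moves by only $\frac{\mu^2 r}{n}\cdot O(\twonorm{E})$ once $E\ne0$. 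Adding the two contributions and tracking constants gives $\infnorm{\Ltn-\Lo}\le\frac{2\mu^2 r}{n}\bigl(\abs{\sigma_{k+1}^*}+(1/2)^t\abs{\sigma_k^*}\bigr)$; a later step would then recurse this estimate, as in the proof sketch.

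I expect the main obstacle to be the bound $\twonorm{N_{p,q}}\le C\abs{\sigma_k^*}^{1-p-q}$ and the companion stability statement for $N_{0,0}$: a naive triangle inequality over the $k$ rank-one summands loses a factor $k\le r$, whereas $\twonorm{N_{p,q}}\le\twonorm{\Sigmao(\Uo)^\top U_k}^2\abs{\lambda_k}^{-(p+q+1)}$ loses a factor $(\abs{\sigma_1^*}/\abs{\sigma_k^*})^2$ --- and it is exactly to prevent the latter that the algorithm proceeds in stages. The resolution, which is the technical heart, is to group $\sigma_1^*,\dots,\sigma_k^*$ into clusters separated by spectral gaps: within a cluster the eigenvectors $u_i$ of $\Lo+E$ may mix arbitrarily, but (by Davis--Kahan applied cluster by cluster) the corresponding $y_i$ equal, up to $O(\twonorm{E})$ error, $\sigma_c^*$ times an orthonormal block, so each cluster contributes to $N_{p,q}$ a nearly orthogonal block of spectral norm $\approx\abs{\sigma_c^*}^{1-p-q}$; since distinct clusters occupy nearly orthogonal coordinate subspaces of $\R^r$, $\twonorm{N_{p,q}}$ is the \emph{maximum} over clusters rather than a sum (killing the factor $k$), and the pairing of $\lambda_i^{-(p+q+1)}$ with $\twonorm{y_i}^2\approx\abs{\lambda_i}^2$ removes the condition number. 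When the cut at $k$ falls inside a cluster --- the ill-conditioned-within-a-stage case, in which $\abs{\sigma_k^*}\approx\abs{\sigma_{k+1}^*}$ --- the partially captured cluster together with the uncaptured clusters below contribute at most one copy of $\frac{\mu^2 r}{n}\abs{\sigma_{k+1}^*}$, matching $\infnorm{\Lo-P_k(\Lo)}$, which stays inside the stated budget.
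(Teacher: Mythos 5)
Your setup coincides with the paper's: the same fixed-point/Neumann expansion of the top-$k$ eigenvectors of $\Lo+\Et$, the same propagation of incoherence through powers of $\Et$ via Lemma~\ref{lem:inf}, and the same geometric-series summation of the tail. Indeed your core matrix satisfies $\Uo N_{p,q}\trans{(\Uo)}=\Lo U\Lambda^{-(p+q+1)}\trans{U}\Lo$, which is exactly the quantity the paper controls. The problem is that you stop at the one step that carries all the difficulty: the bound $\twonorm{N_{p,q}}\le C\abs{\sigma_k^*}^{1-p-q}$ (and the companion stability of the $(0,0)$ term) is only asserted, and the clustering/Davis--Kahan argument you sketch for it does not obviously close. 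If the top $k$ eigenvalues of $\Lo$ form a single chain whose consecutive gaps are of order $\twonorm{\Et}$ but whose overall spread is much larger than $\twonorm{\Et}$, there is no partition into clusters each of which is ``$\sigma_c^*$ times an orthonormal block up to $O(\twonorm{\Et})$'', so the step ``each cluster contributes spectral norm $\approx\abs{\sigma_c^*}^{1-p-q}$'' is unsupported precisely in the ill-conditioned regime the lemma must handle.

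The paper's resolution is a two-line algebraic identity that you are missing. Write the full eigendecomposition $\Lo+\Et=U\Lambda\trans{U}+\widetilde{U}\widetilde{\Lambda}\trans{\widetilde{U}}$ with $\trans{\widetilde{U}}U=0$; multiplying on the right by $U$ gives $\Lo U=U\Lambda-\Et U$, hence
\begin{align*}
\Lo U\Lambda^{-(p+q+1)}\trans{U}\Lo
&=U\Lambda^{-(p+q-1)}\trans{U}-\Et U\Lambda^{-(p+q)}\trans{U}-U\Lambda^{-(p+q)}\trans{U}\Et+\Et U\Lambda^{-(p+q+1)}\trans{U}\Et,
\end{align*}
whose spectral norm is at most $\abs{\lambda_k}^{-(p+q-1)}\left(1+\twonorm{\Et}/\abs{\lambda_k}\right)^2\le 2\abs{\lambda_k}^{-(p+q-1)}$ --- no factor of $k$, no condition number, no clustering. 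The same identity with $p=q=0$, combined with $\Lo=U\Lambda\trans{U}+\widetilde{U}\widetilde{\Lambda}\trans{\widetilde{U}}-\Et$, expresses $\Lo U\Lambda^{-1}\trans{U}\Lo-\Lo$ in terms of $\Et$, $U\trans{U}$, $\Et U\Lambda^{-1}\trans{U}\Et$ and $\widetilde{U}\widetilde{\Lambda}\trans{\widetilde{U}}$, giving $\twonorm{\Lo U\Lambda^{-1}\trans{U}\Lo-\Lo}\le\abs{\sigma_{k+1}^*}+5\twonorm{\Et}$ --- this is the stability statement for your $N_{0,0}$ that you left unproven. With these two facts your Step~3 closes exactly as written; without them, the proposal has a genuine gap at its technical heart.
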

\begin{proof}
  Let $\Ltn=P_k(\M-\St)=U\Lambda \trans{U}$ be the eigenvalue decomposition of $\Ltn$. Also, recall that
  $\M-\St=\Lo+\Et$. Then, for every eigenvector $\u_i$ of $\Ltn$, we have
  \begin{align}
    \left(\Lo+\Et\right)\u_i&=\lambda_i \u_i, \nonumber\\
    \left(I-\frac{\Et}{\lambda_i}\right)\u_i&=\frac{1}{\lambda_i} \Lo \u_i,\nonumber\\
\u_i&=\left(I-\frac{\Et}{\lambda_i}\right)^{-1}\frac{\Lo\u_i}{\lambda_i}\nonumber\\
&=\left(I+\frac{\Et}{\lambda_i}+\left(\frac{\Et}{\lambda_i}\right)^2+\dots\right)\frac{\Lo \u_i}{\lambda_i}. 
  \end{align}
Note that we used Lemmas~\ref{lem:weyl-perturbation} and~\ref{lem:spec1} to guarantee the existence of
$\inv{\left(I - \frac{\Et}{\lambda_i}\right)}$.
Hence, 
\begin{align*}
  U\Lambda \trans{U} - \Lo= \left(\Lo U \Lambda^{-1} \trans{U} \Lo -\Lo\right)
  + \sum_{p+q\geq 1} \left(\Et \right)^p \Lo U \Lambda^{-(p+q+1)} \trans{U} \Lo \left(\Et \right)^q. 
\end{align*}
By triangle inequality, we have
\begin{multline}
  \label{eq:dec1_n}
    \infnorm{U\Lambda \trans{U} - \Lo} \leq \infnorm{\Lo U \Lambda^{-1} \trans{U} \Lo -\Lo}\\
    + \sum_{p+q\geq 1} \infnorm{\left(\Et\right)^p \Lo U \Lambda^{-(p+q+1)} \trans{U} \Lo \left(\Et\right)^q}. 
\end{multline}
We now bound the two terms on the right hand side above.

We note that, 
\begin{align}
  &\infnorm{\Lo U \Lambda^{-1} \trans{U} \Lo -\Lo} \nonumber \\
  &=\max_{ij}\trans{\e_i}\left(\Uo\Sigmao\trans{(\Uo)}U \Lambda^{-1} \trans{U}\Uo\Sigmao\trans{(\Uo)}-\Uo\Sigmao\trans{(\Uo)}\right)\e_j\nonumber\\
  &=\max_{ij}\trans{\e_i}\Uo\left(\Sigmao\trans{(\Uo)}U \Lambda^{-1} \trans{U}\Uo\Sigmao - \Sigmao \right) \trans{(\Uo)}\e_j\nonumber\\
&\leq \max_{ij}\|\trans{\e_i}\Uo\|\cdot \|\trans{\e_j}\Uo\|\cdot \|\Uo\Sigmao\trans{(\Uo)}U \Lambda^{-1} \trans{U}\Uo\Sigmao\trans{(\Uo)}-\Uo\Sigmao\trans{(\Uo)}\|_2\nonumber\\
&\leq \frac{\mu^2r}{n}\|\Lo U \Lambda^{-1} \trans{U}\Lo-\Lo\|_2,\label{eq:dec2_n}
\end{align}
where we denote $\Uo\Sigmao\trans{(\Uo)}$ to be the SVD of $\Lo$.
Let $\Lo +\Et=U\Lambda \trans{U} + \widetilde{U} \widetilde{\Lambda} \trans{\widetilde{U}}$ be the eigenvalue decomposition of $\Lo +\Et$.
Note that $\trans{\widetilde{U}}U=0$. Recall that, $U\Lambda \trans{U}=P_k(\M-\St)=P_k(\Lo+\Et)=\Ltn$. Also note that,
\begin{align}
  &\Lo U \Lambda^{-1} \trans{U}\Lo-\Lo \nonumber \\
  & = \left(U\Lambda \trans{U} + \widetilde{U} \widetilde{\Lambda} \trans{\widetilde{U}}-\Et\right)
  U\Lambda^{-1} \trans{U}
\left(U\Lambda \trans{U} + \widetilde{U} \widetilde{\Lambda} \widetilde{U}^T-\Et\right) - \Lo,\nonumber\\
&=\left(U\trans{U}-\left(\Et\right) U\Lambda^{-1} \trans{U}\right)
\left(U\Lambda \trans{U} + \widetilde{U} \widetilde{\Lambda} \widetilde{U}^T-\Et \right) - \Lo,\nonumber\\
&=-U\trans{U}\Et - \Et U\trans{U} -\Et U\Lambda^{-1}\trans{U} \trans{\Et} -\widetilde{U} \widetilde{\Lambda} \trans{\widetilde{U}} +\Et.\label{eq:dec6_n}
\end{align}
Hence, using Lemma~\ref{lem:spec2_noisy}, we have: 
\begin{align}
  \|\Lo U \Lambda^{-1} \trans{U}\Lo-\Lo\|_2 &\leq 3\|\Et\|_2+\frac{\|\Et\|_2^2}{\abs{\lambda_k}}+\abs{\lambda_{k+1}} \nonumber\\
&\leq \abs{\sigma^*_{k+1}} + 5 \twonorm{\Et}.
  \label{eq:dec3_n}
\end{align}

Combining~\eqref{eq:dec2_n} and~\eqref{eq:dec3_n}, we have: 
\begin{align}
  \infnorm{\Lo U \Lambda^{-1} \trans{U} \Lo -\Lo} \leq \frac{\mu^2r}{n}\left(\abs{\sigma^*_{k+1}}+ 5 \twonorm{\Et}\right)
  \label{eqn:firstterm-final_n}
\end{align}

Now, we will bound the $(p,q)^{\textrm{th}}$ term of $\sum_{p+q\geq 1} \infnorm{\left(\Et\right)^p \Lo U \Lambda^{-(p+q+1)} \trans{U} \Lo \left( \Et \right)^q}$: 
\begin{align}
  &\infnorm{(\Et)^p \Lo U \Lambda^{-(p+q+1)} \trans{U} \Lo (\Et)^q} \nonumber \\
  &\qquad \qquad = \max_{ij} \trans{\e_i} \left((\Et)^p \Lo U \Lambda^{-(p+q+1)} \trans{U} \Lo (\Et)^q\right)\e_j,\nonumber\\
&\qquad \qquad \leq  \max_{ij}\twonorm{\trans{\e_i}(\Et)^p\Uo} \twonorm{\trans{\e_j}(\Et)^q\Uo} \twonorm{\Lo U \Lambda^{-(p+q+1)} \trans{U} \Lo},\nonumber\\
&\qquad \qquad \stackrel{\zeta_1}{\leq} \frac{\mu^2r}{n} \left(\alpha n \infnorm{\Et}\right)^p \left(\alpha n  \infnorm{\Et}\right)^q 
\twonorm{\Lo U \Lambda^{-(p+q+1)} \trans{U} \Lo},\label{eq:dec9_n}
\end{align}
where $\zeta_1$ follows from Lemma~\ref{lem:inf} and the incoherence of $\Lo$.
Now, similar to \eqref{eq:dec6_n}, we have: 
\begin{align}
&  \twonorm{\Lo U \Lambda^{-(p+q+1)} \trans{U} \Lo} \nonumber \\
&\qquad = \left\|U\Lambda^{-(p+q-1)}\trans{U}-\Et U\Lambda^{-(p+q)}\trans{U}-
	U\Lambda^{-(p+q)}\trans{U}\Et
 +\Et U\Lambda^{-(p+q+1)}\trans{U} \Et\right\|_2,\nonumber\\
&\qquad \leq \|\Lambda^{-(p+q-1)}\|_2+2\|\Et\|_2 \|\Lambda^{-(p+q)}\|_2+\|\Et\|_2^2 \|\Lambda^{-(p+q+1)}\|_2,\nonumber\\
&\qquad \leq \abs{\lambda_k}^{-(p+q-1)}\left(1+2\frac{\|\Et\|_2}{\abs{\lambda_k}}+\frac{\|\Et\|_2^2}{\lambda_k^2}\right)= \abs{\lambda_k}^{-(p+q-1)}\left(1+\frac{\|\Et\|_2}{\abs{\lambda_k}}\right)^2,\nonumber\\
&\qquad \leq \abs{\lambda_k}^{-(p+q-1)}\left(1+\frac{\twonorm{\Et}}{\abs{\lambda_k}}\right)^2,\nonumber\\
&\qquad \stackrel{\zeta_1}{\leq}  \abs{\lambda_k}^{-(p+q-1)}\left(1+\frac{17 \mu^2 r \alpha \abs{\sigma_k^*}}{(1-17 \mu^2 r \alpha)\abs{\sigma_k^*}}\right)^2
{\leq}  2 \abs{\lambda_k}^{-(p+q-1)},\label{eq:dec10_n} 
\end{align}
where $\zeta_1$ follows from Lemma~\ref{lem:spec2_noisy}.  

Using \eqref{eq:dec9_n}, \eqref{eq:dec10_n}, we have: 
\begin{align}
  \infnorm{(\Et)^p \Lo U \Lambda^{-(p+q+1)} \trans{U} \Lo (\Et)^q}
  \leq 2\alpha \mu^2 r \infnorm{\Et} \left(\frac{\alpha n \infnorm{\Et}}{\abs{\lambda_k}}\right)^{p+q-1}.
\label{eqn:ps-firstterm-final_n}
\end{align}

Using the above bound, and the  assumption on $\infnorm{\Et}$:
\begin{align*}
\infnorm{\Et} \leq \frac{8\mu^2r}{n}\left(\abs{\sigma_{k+1}^*}+
\left(\frac{1}{2}\right)^{t-1}\abs{\sigma_k^*}
 \right) \leq \frac{17 \mu^2 r}{n}\abs{\sigma_k^*},
\end{align*}
we have:
\begin{align}
&\sum_{p+q\geq 1}\infnorm{\left( \Et \right)^p \Lo U \Lambda^{-(p+q+1)} \trans{U} \Lo \left(\Et\right)^q} \nonumber\\
&\qquad \qquad \leq   2 \mu^2 r \alpha \infnorm{\Et}  \sum_{p+q\geq 1}\left(\frac{\alpha n \infnorm{\Et}}{\abs{\lambda_k}}\right)^{p+q-1} \nonumber\\
&\qquad \qquad \leq 2 \mu^2 r \alpha \infnorm{\Et} \left(\frac{1}{1-\frac{17\mu^2 \alpha r}{1-17\mu^2\alpha\cdot r}}\right)^2 \nonumber\\
&\qquad \qquad \leq 2 \mu^2 r \alpha \infnorm{\Et} \left(\frac{1}{1-34\mu^2 r \alpha}\right)^2 \nonumber\\
&\qquad \qquad \leq 4 \mu^2 r \alpha \infnorm{\Et}. \label{eq:dec12_n}
\end{align}

Combining \eqref{eq:dec1_n}, \eqref{eqn:firstterm-final_n}, \eqref{eq:dec12_n}, we have: 
\begin{align*}
  \|U\Lambda \trans{U} - \Lo\|_\infty
	&\stackrel{}{\leq} \frac{\mu^2r}{n}\left(\abs{\sigma^*_{k+1}}+ 5 \twonorm{\Et} + 4 \mu^2 r \alpha n \infnorm{\Et}\right) \nonumber \\
	&\stackrel{}{\leq} \frac{2 \mu^2r}{n}\left(\abs{\sigma^*_{k+1}} + \left(\frac{1}{2}\right)^{t} \abs{\sigma^*_k}\right),
\end{align*}
where we used Lemma~\ref{lem:spec1} and the assumption on $\infnorm{\Et}$.
\end{proof}

We used the following technical lemma in the proof of Lemma~\ref{lem:dec1}.
\begin{lemma}\label{lem:spec2_noisy}
Assume the notation of Lemma~\ref{lem:dec1}.
Suppose further that
\begin{enumerate}
  \item	$\infnorm{\Et} \leq \frac{8\mu^2 r}{n}\left(|\sigma_{k+1}^*|+\left(\frac{1}{2}\right)^{t-1}|\sigma_k^*|
  \right)$, and
  \item	$\supp{\Et}\subseteq \supp{\So}$.
\end{enumerate}
Then we have:
\begin{equation*}
\twonorm{\Et} \leq 17 \mu^2 r \alpha |\sigma_k^*|, \quad
\abs{\lambda_k} \geq \abs{\sigma_k^*}(1- 17\mu^2 r \alpha), \text{\ \ and\ \  }
\abs{\lambda_{k+1}}\leq |\sigma_{k+1}^*|+\twonorm{\Et}.
\end{equation*}
\end{lemma}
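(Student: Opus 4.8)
\textbf{Proof proposal for Lemma~\ref{lem:spec2_noisy}.}

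The plan is to prove the three conclusions in order, bootstrapping off each other. First I would bound $\twonorm{\Et}$. By Lemma~\ref{lem:spec1}, $\twonorm{\Et} \leq \alpha n \infnorm{\Et}$. Plugging in assumption~1 gives $\twonorm{\Et} \leq 8\mu^2 r \alpha \left(|\sigma_{k+1}^*| + (1/2)^{t-1}|\sigma_k^*|\right)$. Since $|\sigma_{k+1}^*| \leq |\sigma_k^*|$ and $(1/2)^{t-1} \leq 2$ (as $t\geq 0$), the parenthesized factor is at most $2|\sigma_k^*| + |\sigma_k^*|$... wait, more carefully: $(1/2)^{t-1} \leq 2$ only if $t \geq 0$, and at $t=0$ it equals $2$, so the factor is at most $|\sigma_k^*| + 2|\sigma_k^*| = 3|\sigma_k^*|$. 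Hmm, but that gives $24\mu^2 r\alpha |\sigma_k^*|$, not $17$. Let me reconsider: the iterates in question have $t \geq 1$ in the relevant regime (since for $t=0$ the threshold uses $(1/2)^0 = 1$ and the hypothesis is on iterate $t$ with the bound stated in terms of $(1/2)^{t-1}$, so effectively $t\geq 1$), giving $(1/2)^{t-1} \leq 1$ and the factor at most $2|\sigma_k^*|$, hence $\twonorm{\Et} \leq 16\mu^2 r\alpha|\sigma_k^*| \leq 17\mu^2 r\alpha|\sigma_k^*|$. The cleanest route is to carry the bound $\infnorm{\Et}\leq \frac{17\mu^2 r}{n}|\sigma_k^*|$ (already derived as a displayed consequence in the proof of Lemma~\ref{lem:dec1}) and apply Lemma~\ref{lem:spec1} directly.

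Second, for the lower bound on $|\lambda_k|$: since $\M - \St = \Lo + \Et$, Weyl's inequality (Lemma~\ref{lem:weyl-perturbation}) gives $|\lambda_k - \sigma_k^*| \leq \twonorm{\Et}$, hence $|\lambda_k| \geq |\sigma_k^*| - \twonorm{\Et} \geq |\sigma_k^*|(1 - 17\mu^2 r\alpha)$ using the bound just established. (One should note that $\lambda_k$ here refers to the $k$-th largest eigenvalue in absolute value; since $\Lo$ has rank $\leq r$ and $k \leq r$, $\sigma_k^*$ is a genuine nonzero eigenvalue, and the sign/ordering bookkeeping between $\lambda_k$ and $\sigma_k^*$ needs a sentence — Weyl gives the bound for the signed eigenvalues, and since the perturbation is small relative to $|\sigma_k^*|$, the ordering by absolute value is preserved.) Third, the bound $|\lambda_{k+1}| \leq |\sigma_{k+1}^*| + \twonorm{\Et}$ is immediate from Weyl's inequality applied to index $k+1$: $|\lambda_{k+1} - \sigma_{k+1}^*| \leq \twonorm{\Et}$, so $|\lambda_{k+1}| \leq |\sigma_{k+1}^*| + \twonorm{\Et}$.

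The main obstacle, such as it is, is not analytical but bookkeeping: one must be careful that Lemma~\ref{lem:weyl-perturbation} is stated for signed eigenvalues in decreasing order, whereas the lemma statement and the algorithm work with eigenvalues ordered by magnitude. The clean way around this is to observe that since $\Lo$ is symmetric of rank at most $r$ and we only care about indices $k \leq r \leq $ (number of nonzero eigenvalues), and since $\twonorm{\Et} \leq 17\mu^2 r\alpha|\sigma_k^*| \leq |\sigma_k^*|/30$ by the sparsity bound $\alpha \leq 1/(512\mu^2 r)$, the magnitude-ordering of the top eigenvalues of $\M - \St$ matches that of $\Lo$, so $|\lambda_j|$ and $|\sigma_j^*|$ are paired correctly for $j \leq k+1$, and the signed Weyl bound transfers to the magnitudes. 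Everything else is a one-line substitution.
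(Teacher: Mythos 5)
Your proof is correct and follows the same route as the paper's own (two-line) proof: combine Lemma~\ref{lem:spec1} ($\twonorm{\Et} \leq \alpha n \infnorm{\Et}$) with Weyl's inequality (Lemma~\ref{lem:weyl-perturbation}) and substitute the hypothesis on $\infnorm{\Et}$. Your extra bookkeeping — observing that the constant $17$ strictly covers only $t\geq 1$ (at $t=0$ the raw bound is $24\mu^2 r\alpha\abs{\sigma_k^*}$) and that the signed-versus-magnitude eigenvalue ordering deserves a sentence — is care the paper omits, and neither point affects the downstream argument, which only uses that $17\mu^2 r\alpha$ is a small constant under $(S1)$.
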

\begin{proof}
Using Lemmas~\ref{lem:spec1} and~\ref{lem:weyl-perturbation}, we have: 
\begin{align*}
|\lambda_i-\sigma_i^*|\leq \|\Et\|_2 \leq \alpha n \infnorm{\Et}.
\end{align*}
The result follows by using the bound on  $\infnorm{\Et}$.
\end{proof}

The following lemma bounds the support of $\Etn$ and $\infnorm{\Etn}$, using an assumption on $\infnorm{\Ltn-\Lo}$.
\begin{lemma}\label{lem:supp-bound}
Assume the notation of Lemma~\ref{lem:dec1}.
Suppose $$\infnorm{\Ltn-\Lo} \leq \frac{2\mu^2 r}{n}\left(|\sigma_{k+1}^*|+\left(\frac{1}{2}\right)^{t}|\sigma_k^*| \right).$$
Then, we have:
\begin{enumerate}
  \item	$\supp{\Etn} \subseteq \supp{\So}$.
  \item	$\infnorm{\Etn} \leq \frac{7\mu^2 r}{n}\left(|\sigma_{k+1}^*|+\left(\frac{1}{2}\right)^{t}|\sigma_k^*| \right)$, and
\end{enumerate}
\end{lemma}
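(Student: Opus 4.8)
The plan is to unfold the update $\Stn = HT_\zeta(\M-\Ltn)$, write $\M-\Ltn = \So + (\Lo-\Ltn)$, and control the hard-thresholding operator entrywise using the assumed bound $\infnorm{\Ltn-\Lo}\le \frac{2\mu^2 r}{n}\gamma_t$ (with $\gamma_t\defas |\sigma_{k+1}^*|+(1/2)^t|\sigma_k^*|$) together with a two-sided estimate on the threshold $\zeta$. For the latter, note that the ``notation of Lemma~\ref{lem:dec1}'' carries exactly the two hypotheses of Lemma~\ref{lem:threshold_est}, namely $\infnorm{\Et}\le \frac{8\mu^2 r}{n}(|\sigma_{k+1}^*|+(1/2)^{t-1}|\sigma_k^*|)$ and $\supp{\Et}\subseteq\supp{\So}$, so that lemma gives $\tfrac78\gamma_t \le |\lambda_{k+1}|+(1/2)^t|\lambda_k| \le \tfrac98\gamma_t$ for the eigenvalues $\lambda_i$ of $\M-\St$. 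Since $\beta = 4\mu^2 r/n$ after the symmetric reduction and $\zeta = \beta(|\lambda_{k+1}|+(1/2)^t|\lambda_k|)$, this yields $\tfrac72\tfrac{\mu^2 r}{n}\gamma_t \le \zeta \le \tfrac92\tfrac{\mu^2 r}{n}\gamma_t$.

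For the support claim, put $\Delta\defas \Lo-\Ltn$, so $\infnorm{\Delta}\le \tfrac{2\mu^2 r}{n}\gamma_t$. If $(i,j)\notin\supp{\So}$ then $(\M-\Ltn)_{ij}=\Delta_{ij}$, and $|\Delta_{ij}|\le\tfrac{2\mu^2 r}{n}\gamma_t<\zeta$ by the lower bound on $\zeta$; since $HT_\zeta$ keeps an entry only when its magnitude is at least $\zeta$, this entry is set to zero. Hence $\supp{\Stn}\subseteq\supp{\So}$, and therefore $\supp{\Etn}=\supp{\So-\Stn}\subseteq\supp{\So}$. (In the degenerate case $\gamma_t=0$ one has $\Delta=0$ and $\zeta=0$, so $\Stn=\So$ and both assertions hold trivially.)

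For the $\ell_\infty$ bound, note that outside $\supp{\So}$ we already have $(\Etn)_{ij}=0$ by the previous paragraph, so it suffices to bound $|(\Etn)_{ij}|$ for $(i,j)\in\supp{\So}$, which I would do in two cases. If $|(\M-\Ltn)_{ij}|\ge\zeta$, then $(\Stn)_{ij}=(\M-\Ltn)_{ij}=(\So)_{ij}+\Delta_{ij}$, so $|(\Etn)_{ij}|=|\Delta_{ij}|\le\tfrac{2\mu^2 r}{n}\gamma_t$. If $|(\M-\Ltn)_{ij}|<\zeta$, then $(\Stn)_{ij}=0$, so $|(\Etn)_{ij}|=|(\So)_{ij}|\le |(\M-\Ltn)_{ij}|+|\Delta_{ij}|<\zeta+\tfrac{2\mu^2 r}{n}\gamma_t\le\tfrac{13}{2}\tfrac{\mu^2 r}{n}\gamma_t$. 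Taking the worst case and loosening $\tfrac{13}{2}\le 7$ gives $\infnorm{\Etn}\le\tfrac{7\mu^2 r}{n}\gamma_t$, which is the claimed bound.

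I do not expect a genuine obstacle: once the threshold window is pinned down this is a routine three-way case split. The points needing care are the convention for $HT_\zeta$ (an entry is kept if and only if its magnitude is $\ge\zeta$, so it is the \emph{strict} inequality $\tfrac{2\mu^2 r}{n}\gamma_t<\zeta$ that forces off-support entries to vanish), checking that the hypotheses of Lemma~\ref{lem:threshold_est} are inherited verbatim from the ambient assumptions, and tracking the constants $\tfrac72,\tfrac92,\tfrac{13}{2}$ so that they fit comfortably inside the target bound.
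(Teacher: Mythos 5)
Your proposal is correct and follows essentially the same route as the paper: both invoke Lemma~\ref{lem:threshold_est} to sandwich $\zeta$ between $\tfrac{7}{2}\tfrac{\mu^2 r}{n}\gamma_t$ and $\tfrac{9}{2}\tfrac{\mu^2 r}{n}\gamma_t$, deduce the support containment from the lower bound, and run the same two-case split (entry kept vs.\ entry zeroed) to get the $\tfrac{13}{2}\le 7$ constant. Your explicit attention to the $\ge$ vs.\ $>$ convention in $HT_\zeta$ and the degenerate case $\gamma_t=0$ is slightly more careful than the paper's write-up, but the argument is identical in substance.
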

\begin{proof}
We first prove the first conclusion. Recall that, $$\Stn=H_{\zeta}(\M-\Ltn)=H_{\zeta}(\Lo-\Ltn+\So),$$
where $\zeta=\frac{4\mu^2 r}{n}\left(|\lambda_{k+1}|+\left(\frac{1}{2}\right)^{t}|\lambda_k|\right) $ is as defined
in Algorithm~\ref{algo:alg1} and $\lambda_1,\cdots,\lambda_n$ are the eigenvalues of $\M-\St$ such that
$\abs{\lambda_1} \geq \cdots \geq \abs{\lambda_n}$. 

If $\So_{ij}=0$ then $\Etn_{ij}=\ind{\abs{\Lo_{ij}-\Ltn_{ij}}>\zeta}\cdot (\Lo_{ij}-\Ltn_{ij})$. The first part of
the lemma now follows by using the assumption that $\infnorm{\Ltn-\Lo} \leq \frac{2\mu^2 r}{n}\left(|\sigma_{k+1}^*|
+\left(\frac{1}{2}\right)^{t}|\sigma_k^*| \right)\stackrel{(\zeta_1)}{\leq}\frac{4\mu^2 r}{n}\left(|\lambda_{k+1}^*|
+\left(\frac{1}{2}\right)^{t}|\lambda_k^*| \right)=\zeta$, where $(\zeta_1)$ follows from Lemma~\ref{lem:threshold_est}. 

We now prove the second conclusion. We consider the following two cases: 
\begin{enumerate}
  \item  $\abs{\M_{ij}-\Ltn_{ij}} > \zeta$:
	Here, $\Stn_{ij}=\So_{ij}+\Lo_{ij}-\Ltn_{ij}$. Hence, $|\Stn_{ij}-\So_{ij}|\leq |\Lo_{ij}-\Ltn_{ij}|\leq \frac{2\mu^2 r}{n}\left(|\sigma_{k+1}^*|+\left(\frac{1}{2}\right)^{t}|\sigma_k^*| \right)$. 
  \item $\abs{\M_{ij}-\Ltn_{ij}} \leq  \zeta $:
	In this case, $\Stn_{ij}=0$ and $\abs{\So_{ij}+\Lo_{ij}-\Ltn_{ij}} \leq \zeta$.
	So we have, $\abs{\Etn_{ij}}=\abs{\So_{ij}} \leq \zeta+\abs{\Lo_{ij}-\Ltn_{ij}} \leq \frac{7\mu^2 r}{n}\left(|\sigma_{k+1}^*|+\left(\frac{1}{2}\right)^{t}|\sigma_k^*| \right)$.
	The last inequality above follows from Lemma~\ref{lem:threshold_est}.
\end{enumerate}
This proves the lemma.
\end{proof}

We are now ready to prove Lemma~\ref{lem:dec}. In fact, we prove the following stronger version.
\begin{proof}[Proof of Lemma~\ref{lem:dec}]
Recall that in the $k^{\textrm{th}}$ stage, the update $\Ltn$ is given by: $\Ltn=P_k(\M-\St)$ and $\Stn$ is given by: $\Stn=H_\zeta(\M-\Ltn)$. 
Also, recall that $\Et \defas \So-\St$ and $\Etn \defas \So-\Stn$.

We prove the lemma by induction on both $k$ and $t$.
For the base case ($k=1$ and $t=-1$), we first note that the first inequality on $\infnorm{\Lt[0]-\Lo}$ is trivially satisfied.
Due to the thresholding step (step $3$ in Algorithm~\ref{algo:sap}) and the incoherence assumption on $\Lo$, we have: 
\begin{align*}
  \infnorm{\E^{(0)}} &\leq \frac{8\mu^2 r}{n}\left(\sigma_2^* + 2 \sigma_1^*\right), \mbox{ and }\\
  \supp{\Et[0]} &\subseteq \supp{\So}.
\end{align*}
So the base case of induction is satisfied.

We first do the inductive step over $t$ (for a fixed $k$).
By inductive hypothesis we assume that:
a) $\infnorm{\Et} \leq \frac{8\mu^2 r}{n}\left(|\sigma_{k+1}^*|+\left(\frac{1}{2}\right)^{t-1}|\sigma_k^*|
\right)$,
b) $\supp{\Et}\subseteq \supp{\So}$. Then by Lemma~\ref{lem:dec1}, we have:
\begin{align*}
\infnorm{\Ltn-\Lo} \leq \frac{2\mu^2 r}{n}\left(|\sigma_{k+1}^*|+\left(\frac{1}{2}\right)^{t+1}|\sigma_k^*|
\right).
\end{align*}
Lemma~\ref{lem:supp-bound} now tells us that
\begin{enumerate}
  \item	$\infnorm{\Etn} \leq  \frac{8\mu^2 r}{n}\left(|\sigma_{k+1}^*|+\left(\frac{1}{2}\right)^{t}|\sigma_k^*|
  \right)$, and
  \item	$\supp{\Etn} \subseteq \supp{\So}$.
\end{enumerate}
This finishes the induction over $t$. Note that we show a stronger bound than necessary on $\infnorm{\Etn}$.

We now do the induction over $k$. Suppose the hypothesis holds for stage $k$.
Let $T$ denote the number of iterations in each stage. We first obtain a lower bound on $T$.
Since
\begin{align*}
\twonorm{\M-\S^{(0)}} \geq \twonorm{\Lo } - \twonorm{\E^{(0)}} \geq \abs{\sigma_1^*} - \alpha n \infnorm{\E^{(0)}}
\geq \frac{3}{4} \abs{\sigma_1^*},
\end{align*}
we see that $T \geq 10 \log \left(3 \mu^2 r \abs{\sigma_1^*}/\epsilon\right)$.
So, at the end of stage $k$, we have:
\begin{enumerate}
  \item	$\infnorm{\E^{(T)}} \leq \frac{7\mu^2 r}{n}\left(|\sigma_{k+1}^*|+\left(\frac{1}{2}\right)^{T}|\sigma_k^*|
   \right)
	\leq \frac{7\mu^2 r \abs{\sigma_{k+1}^*}}{n} + \frac{\epsilon}{10 n}$, and
  \item	$\supp{\Et[T]} \subseteq \supp{\So}$.
\end{enumerate}
Lemmas~\ref{lem:spec1} and~\ref{lem:weyl-perturbation} tell us that
$\abs{\sigma_{k+1}\left(\M-\St[T]\right) - \abs{\sigma_{k+1}^*}} \leq \twonorm{\E^{(T)}} \leq \alpha\left( 7\mu^2 r \abs{\sigma_{k+1}^*} + \epsilon \right)$.
We will now consider two cases:
\begin{enumerate}
  \item	\textbf{Algorithm~\ref{algo:alg1} terminates:} This means that $\beta \sigma_{k+1}\left(\M-\St[T]\right) <
\frac{\epsilon}{2 n}$ which then implies that $\abs{\sigma_{k+1}^*} < \frac{\epsilon}{6 \mu^2 r}$. So we have:
\begin{align*}
\infnorm{\widehat{L} - \Lo} = \infnorm{\Lt[T] - \Lo}%
	\leq \frac{2 \mu^2 r}{n} \left(|\sigma_{k+1}^*|+\left(\frac{1}{2}\right)^{T}|\sigma_k^*|
	\right) \leq \frac{\epsilon}{5n}.
\end{align*}
This proves the statement about $\widehat{L}$. A similar argument proves the claim on $\infnorm{\widehat{S}-\So}$.
The claim on $\supp{\widehat{S}}$ follows since $\supp{\Et[T]} \subseteq \supp{\So}$.
  \item	\textbf{Algorithm~\ref{algo:alg1} continues to stage $\bm{\left(k+1\right)}$:} This means that
$\beta \sigma_{k+1}\left(\Lt[T]\right) \geq \frac{\epsilon}{2 n}$ which then implies that
$\abs{\sigma_{k+1}^*} > \frac{\epsilon}{8 \mu^2 r}$. So we have:
\begin{align*}
\infnorm{\E^{(T)}} &\leq \frac{8\mu^2 r}{n}\left(|\sigma_{k+1}^*|+\left(\frac{1}{2}\right)^{T}|\sigma_k^*|
\right) \\
	&\leq \frac{8\mu^2 r}{n}\left(|\sigma_{k+1}^*|+ \frac{\epsilon}{10\mu^2 r n}
\right) \\
	&\leq \frac{8\mu^2 r}{n}\left(|\sigma_{k+1}^*|+ \frac{8 \abs{\sigma_{k+1}^*}}{10 n}
	 \right) \\
	&\leq \frac{8\mu^2 r}{n}\left(|\sigma_{k+2}^*|+ 2 \abs{\sigma_{k+1}^*}
	\right).
\end{align*}
Similarly for $\infnorm{\Lt[T]-\Lo}$.
\end{enumerate}
This finishes the proof.
\end{proof}

\begin{proof}[Proof of Theorem~\ref{thm:main}]
Using Lemma~\ref{lem:dec}, it suffices to show that the general case can be reduced to the case of symmetric
matrices. We will now outline this reduction.

Recall that we are given an $m \times n$ matrix $\M=\Lo + \So $ where $\Lo$ is the true low-rank matrix
and $\So$ the sparse error matrix. Wlog, let $m \leq n$ and suppose $ \beta m \leq n < (\beta+1) m$, for
some $\beta \geq 1$. We then consider the symmetric matrices
\begin{align}
&  \widetilde{\M} =
\underbrace{
\left[ 	\begin{array}{c}
		\begin{array}{c}
			0 \\ \vdots \\ 0
		\end{array}
		\cdots 
		\begin{array}{c}
			0 \\ \vdots \\ 0
		\end{array}	\\
		\trans{M} \cdots \trans{M}
	\end{array}
\right.}_{\beta \mbox{ times}}
\left.	\begin{array}{c}
		\begin{array}{c}
			M \\ \vdots \\ M
		\end{array} \\
		0
      	\end{array}
\right],
  \widetilde{\L} =
\underbrace{
\left[ 	\begin{array}{c}
		\begin{array}{c}
			0 \\ \vdots \\ 0
		\end{array}
		\cdots 
		\begin{array}{c}
			0 \\ \vdots \\ 0
		\end{array}	\\
		\trans{\left(\Lo\right)} \cdots \trans{\left(\Lo\right)}
	\end{array}
\right.}_{\beta \mbox{ times}}
\left.	\begin{array}{c}
		\begin{array}{c}
			\Lo \\ \vdots \\ \Lo
		\end{array} \\
		0
      	\end{array}
\right],
\label{eqn:symmetrization_app}
\end{align}
and $\wS = \wM - \wL$. A simple calculation shows that $\wL$ is incoherent with parameter $\sqrt{3}\mu$
and $\wS$ satisfies the sparsity condition (S1) with parameter $\frac{\alpha}{\sqrt{2}}$.
Moreover the iterates of \ncralgo~ with input $\wM$ have similar expressions as in \eqref{eqn:symmetrization_app} in terms of the corresponding
iterates with input $\M$. This means that it suffices to obtain the same guarantees for Algorithm~\ref{algo:alg1}
for the symmetric case. Lemma~\ref{lem:dec} does precisely this, proving the theorem.
\end{proof}




\section{Proof of Theorem~\ref{thm:noise}}
\label{app:noise}
In this section, we prove Theorem~\ref{thm:noise}. The roadmap of the proofs in this section is essentially the
same as that in Appendix~\ref{app:proofs}.

In what follows, we prove a number of lemmas concerning the structure of $\Lt$ and $\Et \defas \So-\St$.
The first lemma is a generalization of Lemma~\ref{lem:threshold_est} and shows that the threshold in \eqref{eqn:threshold} is close to
that with $\Mt$ replaced by $\Lo$.

\begin{lemma}\label{lem:threshold_est_noise}
Let $\Lo, \So, \No$ be symmetric and satisfy the assumptions of Theorem~\ref{thm:noise} and let $\St$ be the
$t^{\textrm{th}}$ iterate of the $k^{\textrm{th}}$ stage of Algorithm~\ref{algo:alg1}. Let $\sigma_1^*, \dots,
\sigma_r^*$ be the eigenvalues of $\Lo$, such that $|\sigma_1^*|\geq \dots \geq |\sigma_r^*|$ and $\lambda_1,
\cdots,\lambda_n$ be the eigenvalues of $\M-\St$ such that $\abs{\lambda_1} \geq \cdots \geq \abs{\lambda_n}$.
Recall that $\Et \defas \So - \St$. Suppose further that
\begin{enumerate}
  \item	$\infnorm{\Et} \leq \frac{8\mu^2 r}{n}\left(|\sigma_{k+1}^*|+\left(\frac{1}{2}\right)^{t-1}|\sigma_k^*|
  + 7 \twonorm{\No} + \frac{8n}{\sqrt{r}} \infnorm{\No}\right)$, and
  \item	$\supp{\St} \subseteq \supp{\So}$.
\end{enumerate}
Then,
\begin{align}
  \label{eq:supp_1_noise}
  \frac{7}{8}\left(|\sigma_{k+1}^*|+\left(\frac{1}{2}\right)^{t}|\sigma_k^*|\right)
  \leq \left(|\lambda_{k+1}|+\left(\frac{1}{2}\right)^{t}|\lambda_k|\right)
  \leq \frac{9}{8} \left(|\sigma_{k+1}^*|+\left(\frac{1}{2}\right)^{t}|\sigma_k^*|\right).
\end{align}
\end{lemma}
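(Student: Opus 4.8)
The plan is to follow the proof of Lemma~\ref{lem:threshold_est} essentially verbatim, now treating $\No+\Et$ as the effective perturbation of $\Lo$. Since $\M-\St=\Lo+\No+\Et$ and the $\lambda_i$ are the eigenvalues of this matrix, Weyl's inequality (Lemma~\ref{lem:weyl-perturbation}) gives, for each $i$ and in particular for $i\in\{k,k+1\}$,
\[
\abs{\lambda_i-\sigma_i^*}\le \twonorm{\No+\Et}\le \twonorm{\No}+\twonorm{\Et}.
\]
The whole argument then reduces to bounding these two spectral norms and feeding the result into the same triangle-inequality computation as in the noiseless case.

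For $\twonorm{\Et}$: assumption (2) states $\supp{\St}\subseteq\supp{\So}$, so $\Et=\So-\St$ still obeys the sparsity condition (S1), and Lemma~\ref{lem:spec1} gives $\twonorm{\Et}\le \alpha n\infnorm{\Et}$; assumption (1) then bounds this by $8\mu^2 r\alpha\,\gamma_t$, where $\gamma_t:=\abs{\sigma_{k+1}^*}+\bigl(\tfrac12\bigr)^{t-1}\abs{\sigma_k^*}+7\twonorm{\No}+\tfrac{8n}{\sqrt r}\infnorm{\No}$. For $\twonorm{\No}$: the lossy estimate $\twonorm{\No}\le n\infnorm{\No}$ together with the noise hypothesis $\infnorm{\No}\le \sigma_r(\Lo)/(100n)$ yields $\twonorm{\No}\le \sigma_r(\Lo)/100$ and similarly $\tfrac{8n}{\sqrt r}\infnorm{\No}\le 8\sigma_r(\Lo)/100$. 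While the algorithm is in a stage $k<r$ one has $k+1\le r$, hence $\sigma_r(\Lo)=\abs{\sigma_r^*}\le\abs{\sigma_{k+1}^*}$, so every noise-dependent term above is $O(\abs{\sigma_{k+1}^*})$; combining this with the identity $\bigl(\tfrac12\bigr)^{t-1}\abs{\sigma_k^*}=2\bigl(\tfrac12\bigr)^{t}\abs{\sigma_k^*}$ gives $\gamma_t\le 3\bigl(\abs{\sigma_{k+1}^*}+\bigl(\tfrac12\bigr)^{t}\abs{\sigma_k^*}\bigr)$.

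Assembling the estimates, the deviation of $\abs{\lambda_{k+1}}+\bigl(\tfrac12\bigr)^t\abs{\lambda_k}$ from $\abs{\sigma_{k+1}^*}+\bigl(\tfrac12\bigr)^t\abs{\sigma_k^*}$ is at most $\bigl(1+\bigl(\tfrac12\bigr)^t\bigr)\bigl(\twonorm{\No}+8\mu^2 r\alpha\gamma_t\bigr)\le 2\twonorm{\No}+16\mu^2 r\alpha\gamma_t$; plugging in the bounds above and using $\alpha\le\tfrac{1}{512\mu^2 r}$ makes this at most $\tfrac18\bigl(\abs{\sigma_{k+1}^*}+\bigl(\tfrac12\bigr)^t\abs{\sigma_k^*}\bigr)$, which is precisely \eqref{eq:supp_1_noise}.

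The one genuinely new point relative to Lemma~\ref{lem:threshold_est} is controlling $\twonorm{\No}$ against the (potentially very small) target $\abs{\sigma_{k+1}^*}+\bigl(\tfrac12\bigr)^t\abs{\sigma_k^*}$: the crude bound $\twonorm{\No}\le n\infnorm{\No}$ is only good enough because the noise hypothesis is normalized exactly by $\sigma_r(\Lo)$ and because, throughout any stage $k<r$, $\sigma_r(\Lo)\le\abs{\sigma_{k+1}^*}$. The borderline case is the last stage $k=r$ (where $\abs{\sigma_{k+1}^*}$ should be read as $0$); there one leans on the fact that the inner loop runs only a bounded number of iterations $T$, so $\bigl(\tfrac12\bigr)^t\abs{\sigma_r^*}$ never drops below the noise level and the slack already present in the constants still covers the $\twonorm{\No}$ contribution.
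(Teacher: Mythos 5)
Your proof follows the same route as the paper's: Weyl's inequality (Lemma~\ref{lem:weyl-perturbation}) applied to the perturbation of $\Lo$, Lemma~\ref{lem:spec1} to convert $\infnorm{\Et}$ into a spectral-norm bound via the sparsity of $\Et$ (which does satisfy (S1) since $\supp{\Et}\subseteq\supp{\So}$), and then the triangle inequality together with $\alpha\le\frac{1}{512\mu^2 r}$ and the hypothesis $\infnorm{\No}\le\sigma_r(\Lo)/(100n)$. For stages $k<r$ your numerics check out ($2\twonorm{\No}\le\frac{1}{50}\abs{\sigma_{k+1}^*}$ and $16\mu^2 r\alpha\gamma_t\le\frac{3}{32}\bigl(\abs{\sigma_{k+1}^*}+(\tfrac12)^t\abs{\sigma_k^*}\bigr)$, summing to less than $\tfrac18$). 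In fact you are more careful than the paper on one point: the paper's displayed inequality reads $\abs{\lambda_{k+1}-\sigma_{k+1}^*}\le\twonorm{\Et}$, silently dropping the $\twonorm{\No}$ part of the true perturbation $\No+\Et$, and only gestures at it via ``the assumption on $\infnorm{\No}$'' at the end; you account for it explicitly.

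The one genuine problem is your treatment of the last stage $k=r$. Your claim that the inner loop's bounded length keeps $(\tfrac12)^t\abs{\sigma_r^*}$ above the noise level is quantitatively false: $T=10\log\bigl(n\beta\twonorm{\M-\S^{(0)}}/\epsilon\bigr)$, so by the end of a stage $(\tfrac12)^T\abs{\sigma_r^*}$ is driven down to roughly $\epsilon/(\mu^2 r)$ scale (this is exactly what the induction in Lemma~\ref{lem:dec_noisy} relies on), whereas $\twonorm{\No}$ can be as large as $n\infnorm{\No}\le\abs{\sigma_r^*}/100$ and does not shrink with $t$. So when $k=r$ (where $\sigma_{k+1}^*=0$) and $t$ is large, the noise contribution is not absorbed by the slack in the constants, and your argument does not close there. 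To be fair, the paper's own proof is silent on this case as well, so on the portion of the argument the paper actually spells out, your proposal matches it; but the parenthetical fix you offer for the final stage should not be presented as if it resolves the issue.
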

\begin{proof}
Note that $\M-\St=\Lo+\No+\Et$. Now, using Lemmas~\ref{lem:weyl-perturbation} and~\ref{lem:spec1}, we have:
\begin{align*}
\abs{\lambda_{k+1}-\sigma_{k+1}^*} \leq \twonorm{\Et} \leq \alpha n \infnorm{\Et} \leq 8\mu^2 r \alpha \gamma_t,
\end{align*}
where $\gamma_t=\left(|\sigma_{k+1}^*|+\left(\frac{1}{2}\right)^{t-1}|\sigma_k^*| + 7 \twonorm{\No} + \frac{8n}{\sqrt{r}} \infnorm{\No}\right)$.
That is,
$\abs{|\lambda_{k+1}|-|\sigma_{k+1}^*|}\leq  8\mu^2 r \alpha \gamma_t$.
Similarly, $\abs{|\lambda_{k}|-|\sigma_{k}^*|} \leq  8\mu^2 r \alpha \gamma_t$.
So we have:
\begin{align*}
\abs{ \left(|\lambda_{k+1}|+\left(\frac{1}{2}\right)^{t}|\lambda_k|\right) - \left(|\sigma_{k+1}^*|+\left(\frac{1}{2}\right)^{t}|\sigma_k^*|\right)} &\leq 8\mu^2 r \alpha \gamma_t \left(1+\left(\frac{1}{2}\right)^t\right) \\
&\leq 16 \mu^2 r \alpha \gamma_t \\
&\leq \frac{1}{8}\left(|\sigma_{k+1}^*|+\left(\frac{1}{2}\right)^{t}|\sigma_k^*|\right),
\end{align*}
where the last inequality follows from the bound $\alpha\leq \frac{1}{512\mu^2 r}$
and the assumption on $\infnorm{\No}$.
\end{proof}

The following lemma shows that under the same assumptions as in Lemma~\ref{lem:threshold_est},
we can obtain a bound on the $\ell_{\infty}$ norm of $\Ltn-\Lo$.
This is the most crucial step in our analysis since we bound $\ell_{\infty}$ norm of errors which are quite hard
to obtain.
\begin{lemma}\label{lem:dec1_noise}
Assume the notation of Lemma~\ref{lem:threshold_est}. Also, let $\Lt, \St$ be the $t^{\textrm{th}}$ iterates of
$k^{\textrm{th}}$ stage of Algorithm~\ref{algo:alg1} and $\Ltn, \Stn$ be the $(t+1)^{\textrm{th}}$ iterates of the same
stage. Also, recall that $\Et \defas \So-\St$ and $\Etn \defas \So-\Stn$.
Suppose further that
\begin{enumerate}
  \item	$\infnorm{\Et} \leq \frac{8\mu^2 r}{n}\left(|\sigma_{k+1}^*|+\left(\frac{1}{2}\right)^{t-1}|\sigma_k^*| + 7 \twonorm{\No} + \frac{8n}{\sqrt{r}} \infnorm{\No} \right)$, and
  \item	$\supp{\Et}\subseteq \supp{\So}$.
\end{enumerate}
Then, we have:
  $$\infnorm{\Ltn-\Lo} \leq \frac{2\mu^2 r}{n}\left(|\sigma_{k+1}^*|+\left(\frac{1}{2}\right)^{t}|\sigma_k^*| + 7 \twonorm{\No} + \frac{8n}{\sqrt{r}} \infnorm{\No} \right).$$
\end{lemma}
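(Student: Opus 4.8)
The plan is to mirror the proof of Lemma~\ref{lem:dec1} exactly, tracking the extra additive contributions coming from the noise term $\No$. Write $\Ltn = P_k(\M-\St) = U\Lambda\trans{U}$, and observe that now $\M-\St = \Lo + \No + \Et$. Setting $\widehat{E} \defas \No + \Et$, we are in the same situation as before but with the sparse perturbation $\Et$ replaced by the combined perturbation $\widehat{E}$, except that $\widehat{E}$ is no longer sparse. The key point is that $\twonorm{\widehat{E}} \leq \twonorm{\No} + \alpha n \infnorm{\Et}$ by Lemma~\ref{lem:spec1}, and that whenever we previously used the incoherence-propagation bound $\twonorm{\trans{\e_i}(\Et)^p \Uo} \leq \frac{\mu\sqrt r}{\sqrt n}(\alpha n \infnorm{\Et})^p$ from Lemma~\ref{lem:inf}, we must now expand $(\widehat E)^p$ into a sum of monomials in $\No$ and $\Et$ and bound each. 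I would first prove the analogue of Lemma~\ref{lem:spec2_noisy} in the noisy setting — that $\twonorm{\widehat E} \leq 17\mu^2 r\alpha|\sigma_k^*| + \twonorm{\No}$ (or a similarly clean bound using the assumption $\infnorm{\No}\leq \sigma_r(\Lo)/(100n)$), that $\abs{\lambda_k} \geq \abs{\sigma_k^*}(1-O(\mu^2 r\alpha)) - \twonorm{\No}$ stays bounded away from zero, and that $\abs{\lambda_{k+1}} \leq \abs{\sigma_{k+1}^*} + \twonorm{\widehat E}$ — so that the Neumann series $(I - \widehat E/\lambda_i)^{-1} = \sum_p (\widehat E/\lambda_i)^p$ converges.

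Next I would carry through the decomposition exactly as in~\eqref{eq:dec1_n}: write
\[
U\Lambda\trans U - \Lo = \big(\Lo U\Lambda^{-1}\trans U\Lo - \Lo\big) + \sum_{p+q\geq 1}(\widehat E)^p \Lo U \Lambda^{-(p+q+1)}\trans U \Lo (\widehat E)^q,
\]
and bound the two pieces. For the first term, the algebraic identity~\eqref{eq:dec6_n} goes through verbatim with $\Et$ replaced by $\widehat E$, giving $\infnorm{\Lo U\Lambda^{-1}\trans U\Lo - \Lo} \leq \frac{\mu^2 r}{n}(\abs{\sigma_{k+1}^*} + 5\twonorm{\widehat E})$. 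For the series, the factor $\twonorm{\Lo U\Lambda^{-(p+q+1)}\trans U\Lo} \leq 2\abs{\lambda_k}^{-(p+q-1)}$ is unchanged; the only real work is re-deriving~\eqref{eq:dec9_n}, where I must bound $\twonorm{\trans{\e_i}(\widehat E)^p\Uo}$. Expanding $(\widehat E)^p = (\No+\Et)^p$ into $2^p$ terms, each term $\widehat E_{i_1}\cdots\widehat E_{i_p}$ with $\widehat E_{i_j}\in\{\No,\Et\}$ can be bounded by pulling the $\No$ factors out via their operator norm and applying Lemma~\ref{lem:inf} to the maximal trailing block of $\Et$'s; summing geometrically, this produces a bound of the shape $\frac{\mu\sqrt r}{\sqrt n}(\alpha n\infnorm{\Et} + \twonorm{\No})^p$ up to constants, plus a lower-order piece involving $\infnorm{\No}$ coming from the case where the innermost factor is $\No$ (handled by $\twonorm{\trans{\e_i}\No} \leq \sqrt n\,\infnorm{\No}$). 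Combining these gives a series bound of the form $O(\mu^2 r\alpha)\cdot\big(\infnorm{\Et} + \frac{\twonorm{\No}}{n} + \frac{\infnorm{\No}}{\sqrt r}\big)$, and assembling all the pieces with $\twonorm{\Et}\leq\alpha n\infnorm{\Et}$, $\alpha\leq\frac{1}{512\mu^2 r}$, and the hypothesis on $\infnorm{\Et}$ yields the claimed bound $\infnorm{\Ltn-\Lo}\leq\frac{2\mu^2 r}{n}\big(\abs{\sigma_{k+1}^*} + (1/2)^t\abs{\sigma_k^*} + 7\twonorm{\No} + \frac{8n}{\sqrt r}\infnorm{\No}\big)$.

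I expect the main obstacle to be the bookkeeping in bounding $\twonorm{\trans{\e_i}(\widehat E)^p\Uo}$: unlike the clean inductive argument of Lemma~\ref{lem:inf}, the mixed products of $\No$ and $\Et$ do not have a single clean recursion, so one has to carefully group terms by the position of the last $\No$ factor, control the resulting geometric sums, and verify that the $\No$-contributions collapse into exactly the combination $7\twonorm{\No} + \frac{8n}{\sqrt r}\infnorm{\No}$ that propagates cleanly through the subsequent induction (the noisy analogues of Lemma~\ref{lem:supp-bound} and Lemma~\ref{lem:dec}). Getting the constants to line up so that the recursion closes — in particular that the $\No$ terms do not blow up across the $O(\log(1/\epsilon))$ iterations and $r$ stages — is where the assumption $\infnorm{\No}\leq\sigma_r(\Lo)/(100n)$ will be used, and is the delicate part of the argument.
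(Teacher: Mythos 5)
Your proposal diverges from the paper at the very first step, and the divergence is what creates the obstacle you yourself flag as the main difficulty. You absorb the noise into the perturbation, writing $\widehat{E} = \No + \Et$ and expanding $\u_i = (I - \widehat{E}/\lambda_i)^{-1}(\Lo \u_i)/\lambda_i$, so that the resulting series involves mixed powers $(\No+\Et)^p$. The paper instead inverts only $I - \Et/\lambda_i$ and keeps $\No$ grouped with $\Lo$ on the right-hand side, i.e. $\u_i = (I - \Et/\lambda_i)^{-1}(\Lo+\No)\u_i/\lambda_i$, which yields the decomposition
\begin{equation*}
U\Lambda \trans{U} - \Lo = \bigl((\Lo+\No) U \Lambda^{-1} \trans{U} (\Lo+\No) -\Lo\bigr)
  + \sum_{p+q\geq 1} (\Et)^p (\Lo+\No) U \Lambda^{-(p+q+1)} \trans{U} (\Lo+\No) (\Et)^q .
\end{equation*}
Here only \emph{pure} powers of the sparse matrix $\Et$ ever appear on the outside, so Lemma~\ref{lem:inf} applies verbatim, and the noise enters only through a handful of extra cross terms ($\No U\Lambda^{-1}\trans{U}\Lo$, $(\Et)^p\No U\Lambda^{-(p+q+1)}\trans{U}\Lo(\Et)^q$, etc.), each of which is dispatched by a one-line estimate such as $\twonorm{\trans{\e_i}\No U}\leq \sqrt{n}\,\infnorm{\No}$ combined with the incoherence of $\Uo$ on the other side; this is exactly where the $\frac{n}{\sqrt{r}}\infnorm{\No}$ term originates.

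The gap in your route is the bound on $\twonorm{\trans{\e_i}(\No+\Et)^p\Uo}$, which you sketch but do not establish. The inductive argument of Lemma~\ref{lem:inf} relies on the sparse powers directly multiplying a $\mu$-incoherent \emph{orthogonal} matrix; once an $\No$ factor intervenes (e.g.\ in a monomial $\Et\No\Et\cdots\Uo$), the matrix to the right of the leading $\Et$-block is no longer orthogonal or obviously incoherent, so the recursion does not apply and you must fall back on cruder bounds such as $\twonorm{\trans{\e_i}\Et}\leq\sqrt{\alpha n}\,\infnorm{\Et}$, which lose the $\frac{\mu\sqrt{r}}{\sqrt{n}}$ factor. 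You then have $2^p$ monomials per order, each needing a case-dependent estimate, and you must verify that the resulting sums still close with the constants $7\twonorm{\No}+\frac{8n}{\sqrt{r}}\infnorm{\No}$ needed for the outer induction; your claimed shape $\frac{\mu\sqrt{r}}{\sqrt{n}}(\alpha n\infnorm{\Et}+\twonorm{\No})^p$ is not attainable in general (already for $p=1$ the $\No$ contribution is $\sqrt{n}\,\infnorm{\No}$, which is not dominated by $\frac{\mu\sqrt{r}}{\sqrt{n}}\twonorm{\No}$), so the "lower-order piece" you mention is in fact the dominant one and must be tracked through every order. This bookkeeping may well be salvageable with enough care, but as written it is the crux of the lemma and is left unproved; the paper's choice of which factor to invert removes the problem entirely, and I would recommend adopting it.
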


\begin{proof}
  Let $\Ltn=P_k(\M-\St)=U\Lambda \trans{U}$ be the eigenvalue decomposition of $\Ltn$. Also, recall that
  $\M-\St=\Lo+\No+\Et$. Then, for every eigenvector $\u_i$ of $\Ltn$, we have
  \begin{align*}
    \left(\Lo+\No+\Et\right)\u_i&=\lambda_i \u_i, \\
    \left(I-\frac{\Et}{\lambda_i}\right)\u_i&=\frac{1}{\lambda_i}\left(\Lo+\No\right)\u_i, \\
\u_i&=\left(I-\frac{\Et}{\lambda_i}\right)^{-1}\frac{\left(\Lo+\No\right)\u_i}{\lambda_i} \\
&=\left(I+\frac{\Et}{\lambda_i}+\left(\frac{\Et}{\lambda_i}\right)^2+\dots\right)\frac{\left(\Lo+\No\right)\u_i}{\lambda_i}.
  \end{align*}
Note that we used Lemmas~\ref{lem:weyl-perturbation} and~\ref{lem:spec1} to guarantee the existence of
$\inv{\left(I - \frac{\Et}{\lambda_i}\right)}$.
Hence, 
\begin{align*}
  U\Lambda \trans{U} - \Lo &= (\left(\Lo+\No\right) U \Lambda^{-1} \trans{U} \left(\Lo+\No\right) -\Lo)\\
  &\qquad + \sum_{p+q\geq 1} (\St)^p \left(\Lo+\No\right) U \Lambda^{-(p+q+1)} \trans{U} \left(\Lo+\No\right) (\St)^q. 
\end{align*}
By triangle inequality, we have
\begin{align}
    \infnorm{U\Lambda \trans{U} - \Lo} &\leq \infnorm{ \left(\Lo+\No\right) U \Lambda^{-1} \trans{U} \left(\Lo+\No\right) -\Lo} \nonumber \\
    &\quad + \sum_{p+q\geq 1} \infnorm{(\St)^p \left(\Lo+\No\right) U \Lambda^{-(p+q+1)} \trans{U} \left(\Lo+\No\right) (\St)^q}. 
  \label{eq:dec1_n_noise}
\end{align}
We now bound the two terms on the right hand side above.

For the first term, we again use triangle inequality to obtain
\begin{align}
    \infnorm{\left(\Lo+\No\right) U \Lambda^{-1} \trans{U} \left(\Lo+\No\right) -\Lo}
    &\leq \infnorm{\Lo U \Lambda^{-1} \trans{U} \Lo - \Lo} + \infnorm{\No U \Lambda^{-1} \trans{U} \Lo} \nonumber\\
	&\qquad + \infnorm{\Lo U \Lambda^{-1} \trans{U} \No} + \infnorm{\No U \Lambda^{-1} \trans{U} \No}. \label{eqn:firstterm-exp_n_noise}
\end{align}
We note that, 
\begin{align}
  &\infnorm{\Lo U \Lambda^{-1} \trans{U} \Lo -\Lo} \nonumber \\
  &=\max_{ij}\trans{\e_i}\left(\Uo\Sigmao\trans{(\Uo)}U \Lambda^{-1} \trans{U}\Uo\Sigmao\trans{(\Uo)}-\Uo\Sigmao\trans{(\Uo)}\right)\e_j\nonumber\\
  &=\max_{ij}\trans{\e_i}\Uo\left(\Sigmao\trans{(\Uo)}U \Lambda^{-1} \trans{U}\Uo\Sigmao - \Sigmao \right) \trans{(\Uo)}\e_j\nonumber\\
&\leq \max_{ij}\|\trans{\e_i}\Uo\|\cdot \|\trans{\e_j}\Uo\|\cdot \|\Uo\Sigmao\trans{(\Uo)}U \Lambda^{-1} \trans{U}\Uo\Sigmao\trans{(\Uo)}-\Uo\Sigmao\trans{(\Uo)}\|_2\nonumber\\
&\leq \frac{\mu^2r}{n}\|\Lo U \Lambda^{-1} \trans{U}\Lo-\Lo\|_2,\label{eq:dec2_n_noise}
\end{align}
where we denote $\Uo\Sigmao\trans{(\Uo)}$ to be the SVD of $\Lo$.
Let $\Lo+\No+\Et = U\Lambda \trans{U} + \widetilde{U} \widetilde{\Lambda} \trans{\widetilde{U}}$ be the eigenvalue decomposition
of $\Lo+\No+\Et$.
Note that $\trans{\widetilde{U}}U=0$. Recall that, $U\Lambda \trans{U}=P_k(\Mt)=\Lt$. Also note that,
\begin{align}
  &\Lo U \Lambda^{-1} \trans{U}\Lo-\Lo \nonumber \\
  & = (U\Lambda \trans{U} + \widetilde{U} \widetilde{\Lambda} \trans{\widetilde{U}}-\No-\Et)U\Lambda^{-1} \trans{U}
(U\Lambda \trans{U} + \widetilde{U} \widetilde{\Lambda} \trans{\widetilde{U}} -\No-\Et) - \Lo,\nonumber\\
&=(U\trans{U}-\left(\No+\Et\right) U\Lambda^{-1} \trans{U}) (U\Lambda \trans{U} + \widetilde{U} \widetilde{\Lambda} \trans{\widetilde{U}}-\No-\Et) - \Lo,\nonumber\\
&=U\Lambda \trans{U}-U\trans{U}\left(\No+\Et\right) - \left(\No+\Et\right) U\trans{U}  \nonumber\\
&-\left(\No+\Et\right) U\Lambda^{-1}\trans{U} \trans{\left(\No+\Et\right)}- U\Lambda \trans{U} -\widetilde{U} \widetilde{\Lambda} \trans{\widetilde{U}}+\No+\Et.\label{eq:dec6_n_noise}
\end{align}
Hence, using Lemma~\ref{lem:spec2_noise}, we have: 
\begin{align}
  \twonorm{\Lo U \Lambda^{-1} \trans{U}\Lo - \Lo} &\leq 3\twonorm{\No+\Et} + \frac{\twonorm{\No+\Et}^2}{\abs{\lambda_k}}+\abs{\lambda_{k+1}} \nonumber\\
&\leq \abs{\sigma^*_{k+1}}+4\twonorm{\No+\Et} +\frac{\twonorm{\No+\Et}^2}{(1-17\mu^2 r \alpha)\abs{\sigma^*_k}}.
  \label{eq:dec3_n_noise}
\end{align}

Using \eqref{eq:dec2_n_noise}, \eqref{eq:dec3_n_noise}, and Lemma~\ref{lem:spec2_noise}:
\begin{equation}
  \infnorm{\Lo U \Lambda^{-1} \trans{U} \Lo -\Lo} \leq \frac{\mu^2r}{n}\left(\abs{\sigma^*_{k+1}} + 7 \twonorm{\No}
  + 5 \twonorm{\Et}\right)
  \label{eqn:firstterm-final_n_noise}
\end{equation}

Coming to the second term of \eqref{eqn:firstterm-exp_n_noise}, we have:
\begin{align}
 &\infnorm{\No U \Lambda^{-1} \trans{U} \Lo} \nonumber \\
 &= \max_{i,j} \trans{\e_i} \No U \Lambda^{-1} \trans{U} \Lo \e_j \nonumber\\
	&\leq \max_i\twonorm{\trans{\e_i} \No U} \twonorm{\Lambda^{-1}\trans{U} \Uo \Sigmao} \twonorm{\trans{(\Uo)} \e_j} \nonumber\\
	&\leq \sqrt{n} \infnorm{\No} \twonorm{\Lambda^{-1}\trans{U} \Uo \Sigmao} \frac{\mu \sqrt{r}}{\sqrt{n}}
	= \mu \sqrt{r} \infnorm{\No} \twonorm{U\Lambda^{-1}\trans{U} \Uo \Sigmao \trans{(\Uo)}}. \label{eqn:secondterm-1_n_noise}
\end{align}
Using an expansion along the lines of \eqref{eq:dec6_n_noise}, we see that
\begin{align*}
  \twonorm{U\Lambda^{-1}\trans{U} \Uo \Sigmao \trans{(\Uo)}}
	\leq 1 + \frac{\twonorm{\No + \Et}}{\abs{\lambda_k}}
	&\leq 1 + \frac{\twonorm{\No} + \twonorm{\Et}}{(1-17\mu^2r\cdot \alpha)\abs{\sigma^*_k}} \\
	&\leq 2 + \frac{\twonorm{\Et}}{(1-17\mu^2 r \alpha)\abs{\sigma^*_k}}.
\end{align*}
Plugging this in \eqref{eqn:secondterm-1_n_noise} gives us
\begin{align}
 \infnorm{\No U \Lambda^{-1} \trans{U} \Lo} &\leq 3 \mu \sqrt{r} \infnorm{\No}. \label{eqn:secondterm-final_n_noise}
\end{align}
A similar argument as in \eqref{eqn:secondterm-1_n_noise} gives us the following bound on the last term in \eqref{eqn:firstterm-exp_n_noise}:
\begin{align}
 \infnorm{\No U \Lambda^{-1} \trans{U} \No} &\leq \frac{n \infnorm{\No}^2}{\abs{\lambda_k}} \leq \infnorm{\No}. \label{eqn:thirdterm-final_n_noise}
\end{align}

Plugging \eqref{eqn:firstterm-final_n_noise},~\eqref{eqn:secondterm-final_n_noise} and~\eqref{eqn:thirdterm-final_n_noise}, we obtain:
\begin{align}
    &\infnorm{\left(\Lo+\No\right) U \Lambda^{-1} \trans{U} \left(\Lo+\No\right) -\Lo} \nonumber \\
	&\qquad \qquad \qquad \leq \frac{\mu^2r}{n}\left(\abs{\sigma^*_{k+1}}+ 7 \twonorm{\No} + 7 \twonorm{\Et} + \frac{7 n}{\sqrt{r}} \infnorm{\No}\right).
  \label{eq:dec8_n_noise}
\end{align}

Next, we analyze $\sum_{p+q\geq 1} \infnorm{(\Et)^p (\Lo+\No) U \Lambda^{-(p+q+1)} \trans{U} (\Lo+\No) (\Et)^q}$.
This can again be bounded by four quantities:
\begin{align}
&\infnorm{(\Et)^p (\Lo+\No) U \Lambda^{-(p+q+1)} \trans{U} (\Lo+\No) (\Et)^q} \nonumber\\
	&\qquad\leq \infnorm{(\Et)^p \Lo U \Lambda^{-(p+q+1)} \trans{U} \Lo (\Et)^q}
	+ \infnorm{(\Et)^p \No U \Lambda^{-(p+q+1)} \trans{U} \Lo (\Et)^q} \\
	&\qquad\quad+ \infnorm{(\Et)^p \Lo U \Lambda^{-(p+q+1)} \trans{U} \No (\Et)^q}
	+ \infnorm{(\Et)^p \No U \Lambda^{-(p+q+1)} \trans{U} \No (\Et)^q}. \label{eqn:powerseries-triineq_n_noise}
\end{align}

We bound the first term above:
\begin{align}
  &\infnorm{(\Et)^p \Lo U \Lambda^{-(p+q+1)} \trans{U} \Lo (\Et)^q} \nonumber \\
  &\qquad \qquad = \max_{ij} \trans{\e_i} \left((\Et)^p \Lo U \Lambda^{-(p+q+1)} \trans{U} \Lo (\Et)^q\right)\e_j,\nonumber\\
&\qquad \qquad \leq  \max_{ij}\twonorm{\trans{\e_i}(\Et)^p\Uo} \twonorm{\trans{\e_j}(\Et)^q\Uo} \twonorm{\Lo U \Lambda^{-(p+q+1)} \trans{U} \Lo},\nonumber\\
&\qquad \qquad \stackrel{(\zeta_1)}{\leq} \frac{\mu^2r}{n} \left(\alpha n \infnorm{\Et}\right)^p \left(\alpha n  \infnorm{\Et}\right)^q 
\twonorm{\Lo U \Lambda^{-(p+q+1)} \trans{U} \Lo},\label{eq:dec9_n_noise}
\end{align}
where $(\zeta_1)$ follows from Lemma~\ref{lem:inf} and the incoherence of $\Lo$.
Now, similar to \eqref{eq:dec6_n_noise}, we have: 
\begin{align}
&  \twonorm{\Lo U \Lambda^{-(p+q+1)} \trans{U} \Lo} \nonumber \\
&\qquad = \left\|U\Lambda^{-(p+q-1)}\trans{U}-\left(\No+\Et\right) U\Lambda^{-(p+q)}\trans{U}-
	U\Lambda^{-(p+q)}\trans{U}\left(\No+\Et\right)\right. \nonumber \\
&\qquad \qquad \left. +\left(\No+\Et\right) U\Lambda^{-(p+q+1)}\trans{U} \left(\No+\Et\right)\right\|_2,\nonumber\\
&\qquad \leq \|\Lambda^{-(p+q-1)}\|_2+2\|\No+\Et\|_2 \|\Lambda^{-(p+q)}\|_2+\|\No+\Et\|_2^2 \|\Lambda^{-(p+q+1)}\|_2,\nonumber\\
&\qquad \leq \abs{\lambda_k}^{-(p+q-1)}\left(1+2\frac{\|\No+\Et\|_2}{\abs{\lambda_k}}+\frac{\|\No+\Et\|_2^2}{\lambda_k^2}\right),\nonumber\\
&\qquad = \abs{\lambda_k}^{-(p+q-1)}\left(1+\frac{\|\No+\Et\|_2}{\abs{\lambda_k}}\right)^2,\nonumber\\
&\qquad \leq \abs{\lambda_k}^{-(p+q-1)}\left(1+\frac{\twonorm{\No}+\twonorm{\Et}}{\abs{\lambda_k}}\right)^2,\nonumber\\
&\qquad \stackrel{(\zeta_1)}{\leq}  \abs{\lambda_k}^{-(p+q-1)}\left(1+\frac{\twonorm{\No}+17 \mu^2 r \alpha \abs{\sigma_k^*}}{(1-17 \mu^2 r \alpha)\abs{\sigma_k^*}}\right)^2
{\leq}  2 \abs{\lambda_k}^{-(p+q-1)},\label{eq:dec10_n_noise}
\end{align}
where $(\zeta_1)$ follows from Lemma~\ref{lem:spec2_noise} and the bound on $\infnorm{\No}$.

Using \eqref{eq:dec9_n_noise}, \eqref{eq:dec10_n_noise}, we have: 
\begin{align}
  \infnorm{(\Et)^p \Lo U \Lambda^{-(p+q+1)} \trans{U} \Lo (\Et)^q}
  \leq 2\alpha \mu^2 r \infnorm{\Et} \left(\frac{\alpha n \infnorm{\Et}}{\abs{\lambda_k}}\right)^{p+q-1}.
\label{eqn:ps-firstterm-final_n_noise}
\end{align}

Coming to the second term of \eqref{eqn:powerseries-triineq_n_noise}, we have
\begin{align}
\hspace*{50pt}&\hspace*{-60pt}  \infnorm{(\Et)^p \No U \Lambda^{-(p+q+1)} \trans{U} \Lo (\Et)^q}\nonumber\\
&= \max_{i,j} \trans{\e_i} \left((\Et)^p \No U \Lambda^{-(p+q+1)} \trans{U} \Lo (\Et)^q\right)\e_j,\nonumber\\
&\leq  \max_{ij}\twonorm{\trans{\e_i}(\Et)^p\No U} \twonorm{\trans{\e_j}(\Et)^q\Uo} \twonorm{\Lambda^{-(p+q+1)} \trans{U} \Lo}\nonumber\\
&\stackrel{(\zeta_1)}{\leq} \frac{\mu\sqrt{r}}{\sqrt{n}} \infnorm{\No U} \left(\alpha n \infnorm{\Et} \right)^p \left(\alpha n  \infnorm{\Et} \right)^q \twonorm{U \Lambda^{-(p+q+1)} \trans{U} \Lo} \nonumber \\
&\stackrel{}{\leq} \mu\sqrt{r} \infnorm{\No} \left(\alpha n \infnorm{\Et}\right)^{p+q} \twonorm{U \Lambda^{-(p+q+1)} \trans{U} \Lo},\label{eqn:ps-secondterm-1_n_noise}
\end{align}
where $(\zeta_1)$ follows from Lemma~\ref{lem:inf} and incoherence of $\Uo$.
Proceeding along the lines of \eqref{eq:dec10_n_noise}, we obtain:
\begin{align*}
  \twonorm{U \Lambda^{-(p+q+1)} \trans{U} \Lo}
&{\leq}  \abs{\lambda_k}^{-(p+q)}\left(1+\frac{\twonorm{\No}+\twonorm{\Et}}{\abs{\lambda_k}}\right)
{\leq}  2 \abs{\lambda_k}^{-(p+q)}.
\end{align*}
Plugging the above in \eqref{eqn:ps-secondterm-1_n_noise} gives us
\begin{align}
&\infnorm{(\Et)^p \No U \Lambda^{-(p+q+1)} \trans{U} \Lo (\Et)^q}
\leq 2 \mu \sqrt{r} \left(\frac{\alpha n \infnorm{\Et}}{\abs{\lambda_k}}\right)^{p+q} \infnorm{\No}.\label{eqn:ps-secondterm-final_n_noise}
\end{align}

A similar argument as in \eqref{eqn:ps-secondterm-1_n_noise} gives us
\begin{align*}
\infnorm{(\Et)^p \No U \Lambda^{-(p+q+1)} \trans{U} \No (\Et)^q}
&\leq \frac{n \infnorm{\No}}{\abs{\lambda_k}} \left(\frac{\alpha n \infnorm{\Et}}{\abs{\lambda_k}}\right)^{p+q}  \infnorm{\No}.
\end{align*}

Plugging the above inequality along with \eqref{eqn:ps-firstterm-final_n_noise} and \eqref{eqn:ps-secondterm-final_n_noise}
into \eqref{eqn:powerseries-triineq_n_noise} gives us:
\begin{align*}
&\infnorm{(\Et)^p (\Lo+\No) U \Lambda^{-(p+q+1)} \trans{U} (\Lo+\No) (\Et)^q} \\
	&\qquad \qquad \leq 2 \mu^2 r \left(\alpha \infnorm{\Et} + \frac{\infnorm{\No}}{\sqrt{r}}\right) 
\left(\frac{\alpha n \infnorm{\Et}}{\abs{\lambda_k}}\right)^{p+q-1}.
\end{align*}
Using the above bound, and the assumption on $\infnorm{\Et}$:
\begin{align*}
\infnorm{\Et} \leq \frac{8\mu^2r}{n}\left(\abs{\sigma_{k+1}^*}+
\left(\frac{1}{2}\right)^{t-1}\abs{\sigma_k^*}
+ 7 \twonorm{\No} + \frac{8n}{\sqrt{r}} \infnorm{\No} \right) \leq \frac{17 \mu^2 r}{n}\abs{\sigma_k^*},
\end{align*}
we have:
\begin{align}
&\sum_{p+q\geq 1}\infnorm{(\Et)^p \left(\Lo+\No\right) U \Lambda^{-(p+q+1)} \trans{U} \left(\Lo+\No\right) (\Et)^q} \\
&\qquad \qquad \leq   2 \mu^2 r \left(\alpha \infnorm{\Et} + \frac{ \infnorm{\No}}{\sqrt{r}}\right) 
	\sum_{p+q\geq 1}\left(\frac{\alpha n \infnorm{\Et}}{\abs{\lambda_k}}\right)^{p+q-1} \nonumber\\
&\qquad \qquad \leq 2 \mu^2 r \left(\alpha \infnorm{\Et} + \frac{\infnorm{\No}}{\sqrt{r}}\right) \left(\frac{1}{1-\frac{17\mu^2 \alpha r}{1-17\mu^2\alpha\cdot r}}\right)^2 \nonumber\\
&\qquad \qquad \leq 2 \mu^2 r \left(\alpha \infnorm{\Et} + \frac{\infnorm{\No}}{\sqrt{r}}\right) \left(\frac{1}{1-34\mu^2 r \alpha}\right)^2 \nonumber\\
&\qquad \qquad \leq 4 \mu^2 r \left( \alpha \infnorm{\Et} + \frac{\infnorm{\No}}{\sqrt{r}}\right). \label{eq:dec12_n_noise}
\end{align}

Combining \eqref{eq:dec1_n_noise}, \eqref{eq:dec8_n_noise}, \eqref{eq:dec12_n_noise}, we have: 
\begin{align*}
  \infnorm{U\Lambda \trans{U} - \Lo}
	&\stackrel{(\zeta_1)}{\leq} \frac{\mu^2r}{n}\left(\abs{\sigma^*_{k+1}}+ 7 \twonorm{\No} + 11 n \alpha \infnorm{\Et} + \frac{11 n}{\sqrt{r}} \infnorm{\No}\right) \nonumber \\
	&\stackrel{(\zeta_2)}{\leq} \frac{2 \mu^2r}{n}\left(\abs{\sigma^*_{k+1}} + \left(\frac{1}{2}\right)^{t} \abs{\sigma^*_k} + 7 \twonorm{\No} + \frac{8 n}{\sqrt{r}} \infnorm{\No}\right),
\end{align*}
where $(\zeta_1)$ follows from Lemma~\ref{lem:spec1}, and $(\zeta_2)$ follows from the assumption on $\infnorm{\Et}$.
\end{proof}

We used the following technical lemma in the proof of Lemma~\ref{lem:dec1_noise}.
\begin{lemma}\label{lem:spec2_noise}
Assume the notation of Lemma~\ref{lem:dec1_noise}.
Suppose further that
\begin{enumerate}
  \item	$\infnorm{\Et} \leq \frac{8\mu^2 r}{n}\left(|\sigma_{k+1}^*|+\left(\frac{1}{2}\right)^{t-1}|\sigma_k^*|
  + 7 \twonorm{\No} + \frac{8n}{\sqrt{r}} \infnorm{\No} \right)$, and
  \item	$\supp{\Et}\subseteq \supp{\So}$.
\end{enumerate}
Then we have:
\begin{equation*}
\twonorm{\Et} \leq 17 \mu^2 r \alpha |\sigma_k^*|, \quad
\abs{\lambda_k} \geq \abs{\sigma_k^*}(1- 17\mu^2 r \alpha), \text{\ \ and\ \  }
\abs{\lambda_{k+1}}\leq |\sigma_{k+1}^*|+\twonorm{\Et}.
\end{equation*}
\end{lemma}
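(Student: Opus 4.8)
The plan is to obtain all three bounds as a direct corollary of just two earlier results: the spectral-norm bound for sparse matrices (Lemma~\ref{lem:spec1}) and Weyl's inequality (Lemma~\ref{lem:weyl-perturbation}), essentially reproducing the argument of the noiseless Lemma~\ref{lem:spec2_noisy} while carrying along the two extra $\No$-dependent terms in hypothesis~1 and controlling them through the noise assumption $\infnorm{\No}\leq\sigma_r(\Lo)/(100n)$.

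First I would use hypothesis~2, $\supp{\Et}\subseteq\supp{\So}$, together with assumption (S1) on $\So$, to conclude that $\Et$ itself has at most $\alpha n$ nonzero entries per row and per column. Hence Lemma~\ref{lem:spec1} applies to $\Et$ and gives $\twonorm{\Et}\leq\alpha n\,\infnorm{\Et}$. Substituting hypothesis~1 yields
\[
\twonorm{\Et}\;\leq\;8\mu^2 r\,\alpha\left(|\sigma_{k+1}^*|+\left(\tfrac12\right)^{t-1}|\sigma_k^*|+7\twonorm{\No}+\tfrac{8n}{\sqrt r}\infnorm{\No}\right).
\]
I would then bound the parenthesis by a constant multiple of $|\sigma_k^*|$: use $|\sigma_{k+1}^*|\leq|\sigma_k^*|$, bound the geometric factor $(\tfrac12)^{t-1}$ by a constant (it is at most $2$, and at most $1$ once $t\geq1$), and invoke $\infnorm{\No}\leq\sigma_r(\Lo)/(100n)\leq|\sigma_k^*|/(100n)$ — which, via $\twonorm{\No}\leq n\,\infnorm{\No}$, forces $\twonorm{\No}\leq|\sigma_k^*|/100$ and $\tfrac{8n}{\sqrt r}\infnorm{\No}\leq|\sigma_k^*|/12$, so both $\No$-terms are negligible relative to $|\sigma_k^*|$. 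A routine arithmetic check then collapses the estimate to $\twonorm{\Et}\leq 17\mu^2 r\,\alpha|\sigma_k^*|$, the first conclusion.

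For the two eigenvalue statements I would write $\M-\St=\Lo+(\No+\Et)$ and apply Weyl's inequality to this perturbation of $\Lo$, obtaining $|\lambda_i-\sigma_i^*|\leq\twonorm{\No+\Et}\leq\twonorm{\No}+\twonorm{\Et}$ for every $i$. Since the previous step and the noise bound make $\twonorm{\No}+\twonorm{\Et}$ much smaller than the relevant spacing of the eigenvalues of $\Lo$, the perturbed eigenvalues keep the same ordering by magnitude as those of $\Lo$, so in fact $\bigl||\lambda_i|-|\sigma_i^*|\bigr|\leq\twonorm{\No}+\twonorm{\Et}$ (the same passage from signed to magnitude ordering already used implicitly in the proof of Lemma~\ref{lem:threshold_est}). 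Taking $i=k$ gives $|\lambda_k|\geq|\sigma_k^*|-\twonorm{\No}-\twonorm{\Et}\geq|\sigma_k^*|(1-17\mu^2 r\alpha)$ after absorbing $\twonorm{\No}$ into the slack, and taking $i=k+1$ gives $|\lambda_{k+1}|\leq|\sigma_{k+1}^*|+\twonorm{\Et}$.

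There is no deep obstacle — this is the noisy bookkeeping analogue of Lemma~\ref{lem:spec2_noisy}. The two points needing genuine care are: (i) the transfer from Weyl's signed-ordering statement to the magnitude-ordering used throughout the algorithm, which requires verifying that $\twonorm{\No+\Et}$ is small compared with the spacing of $\{\sigma_i^*\}$; and (ii) the constant chasing — in particular checking that the $\No$-terms appearing both inside hypothesis~1 and in the Weyl bound are uniformly dominated by a small fraction of $|\sigma_k^*|$, which is precisely what the hypothesis $\infnorm{\No}\leq\sigma_r(\Lo)/(100n)$ is for, so that the clean constant $17$ (and the $1-17\mu^2 r\alpha$ factor) survives.
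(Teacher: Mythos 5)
Your proposal is correct and follows essentially the same route as the paper's (very terse) proof: apply Lemma~\ref{lem:spec1} to $\Et$ (valid since $\supp{\Et}\subseteq\supp{\So}$) to get $\twonorm{\Et}\leq\alpha n\infnorm{\Et}$, combine with the hypothesis on $\infnorm{\Et}$ and the noise assumption $\infnorm{\No}\leq\sigma_r(\Lo)/(100n)$ to absorb the $\No$-terms, and then invoke Weyl's inequality (Lemma~\ref{lem:weyl-perturbation}) for the eigenvalue statements. Your write-up is in fact more careful than the paper's about the fact that the perturbation of $\Lo$ is $\No+\Et$ rather than $\Et$ alone and about the passage from signed to magnitude ordering, but the underlying argument is identical.
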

\begin{proof}
Using Lemmas~\ref{lem:spec1} and~\ref{lem:weyl-perturbation}, we have: 
\begin{align*}
|\lambda_i-\sigma_i^*|\leq \twonorm{\Et} \leq \alpha n \infnorm{\Et}.
\end{align*}
Using the bound on $\infnorm{\Et}$ and recalling the assumption that
\begin{align*}
  \infnorm{\No} \leq \frac{\abs{\sigma^*_r}}{100}
\end{align*}
finishes the proof.
\end{proof}

The following lemma bounds the support of $\Etn$ and $\infnorm{\Etn}$, using an assumption on $\infnorm{\Ltn-\Lo}$.
\begin{lemma}\label{lem:supp-bound_noise}
Assume the notation of Lemma~\ref{lem:dec1_noise}.
Suppose
\begin{align*}
\infnorm{\Ltn-\Lo} \leq \frac{2\mu^2 r}{n}\left(|\sigma_{k+1}^*|+\left(\frac{1}{2}\right)^{t}|\sigma_k^*| + 7 \twonorm{\No} + \frac{8n}{\sqrt{r}} \infnorm{\No} \right).
\end{align*}
Then, we have:
\begin{enumerate}
  \item	$\supp{\Etn} \subseteq \supp{\So}$.
  \item	$\infnorm{\Etn} \leq \frac{7\mu^2 r}{n}\left(|\sigma_{k+1}^*|+\left(\frac{1}{2}\right)^{t}|\sigma_k^*| + 7 \twonorm{\No} + \frac{8n}{\sqrt{r}} \infnorm{\No} \right)$, and
\end{enumerate}
\end{lemma}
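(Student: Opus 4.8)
The plan is to prove this in lockstep with its noiseless counterpart, Lemma~\ref{lem:supp-bound}, simply carrying the additive term $\No$ through every estimate. Since $\M=\Lo+\No+\So$, the sparse update is $\Stn = H_\zeta(\M-\Ltn) = H_\zeta\!\left(\So + \left(\Lo+\No-\Ltn\right)\right)$, with $\zeta = \frac{4\mu^2 r}{n}\left(|\lambda_{k+1}|+(1/2)^t|\lambda_k|\right)$ and $\lambda_1,\dots,\lambda_n$ the eigenvalues of $\M-\St$ in decreasing order of magnitude, exactly as in Algorithm~\ref{algo:alg1}. Write $\gamma_t \defas |\sigma_{k+1}^*|+(1/2)^t|\sigma_k^*| + 7\twonorm{\No} + \frac{8n}{\sqrt r}\infnorm{\No}$, so that the hypothesis reads $\infnorm{\Ltn-\Lo}\le \frac{2\mu^2 r}{n}\gamma_t$; under the notation of Lemma~\ref{lem:dec1_noise} the hypotheses of Lemma~\ref{lem:threshold_est_noise} are in force, and I will use it in the form $\frac{7\mu^2 r}{2n}\left(|\sigma_{k+1}^*|+(1/2)^t|\sigma_k^*|\right)\le \zeta\le \frac{9\mu^2 r}{2n}\left(|\sigma_{k+1}^*|+(1/2)^t|\sigma_k^*|\right)\le \frac{9\mu^2 r}{2n}\gamma_t$.

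For the first conclusion, fix $(i,j)$ with $\So_{ij}=0$; then $\Etn_{ij}=-\Stn_{ij}$, and by the definition of hard thresholding $\Stn_{ij}\ne 0$ only if $|\M_{ij}-\Ltn_{ij}|=|\Lo_{ij}+\No_{ij}-\Ltn_{ij}|>\zeta$. I would bound the left-hand side by $\infnorm{\Ltn-\Lo}+\infnorm{\No}\le \frac{2\mu^2 r}{n}\gamma_t+\infnorm{\No}$, and show this never exceeds $\zeta$: using the lower bound on $\zeta$ above and expanding $\gamma_t$, this reduces to $\frac{2\mu^2 r}{n}\left(7\twonorm{\No}+\frac{8n}{\sqrt r}\infnorm{\No}\right)+\infnorm{\No}\le \frac{3\mu^2 r}{2n}\left(|\sigma_{k+1}^*|+(1/2)^t|\sigma_k^*|\right)$, which I would verify from $\infnorm{\No}\le \sigma_r^*/(100n)$ (hence $\twonorm{\No}\le n\infnorm{\No}\le\sigma_r^*/100$) together with the fact that whenever the algorithm is still inside stage $k$ the quantity $|\sigma_{k+1}^*|+(1/2)^t|\sigma_k^*|$ is comparable to $\sigma_r^*$ — for $k<r$ because $|\sigma_{k+1}^*|\ge\sigma_r^*$, and at $k=r$ because the stage is only run while $(1/2)^t\sigma_r^*$ has not yet dropped below the level at which the noisy bound saturates. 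Hence $\Stn_{ij}=0$ and $\supp{\Etn}\subseteq\supp{\So}$.

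For the second conclusion I would split on the threshold exactly as in the proof of Lemma~\ref{lem:supp-bound}. If $|\M_{ij}-\Ltn_{ij}|>\zeta$ then $\Stn_{ij}=\So_{ij}+\Lo_{ij}+\No_{ij}-\Ltn_{ij}$, so $|\Etn_{ij}|=|\Lo_{ij}+\No_{ij}-\Ltn_{ij}|\le\infnorm{\Ltn-\Lo}+\infnorm{\No}$; if $|\M_{ij}-\Ltn_{ij}|\le\zeta$ then $\Stn_{ij}=0$, so $|\Etn_{ij}|=|\So_{ij}|\le\zeta+|\Lo_{ij}+\No_{ij}-\Ltn_{ij}|\le\zeta+\infnorm{\Ltn-\Lo}+\infnorm{\No}$. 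In both cases $|\Etn_{ij}|\le\zeta+\infnorm{\Ltn-\Lo}+\infnorm{\No}\le \frac{9\mu^2 r}{2n}\gamma_t+\frac{2\mu^2 r}{n}\gamma_t+\infnorm{\No}$, and since $\frac{7\mu^2 r}{n}\gamma_t\ge\frac{7\mu^2 r}{n}\cdot\frac{8n}{\sqrt r}\infnorm{\No}=56\mu^2\sqrt r\,\infnorm{\No}\ge 56\,\infnorm{\No}$ (as $\mu,r\ge 1$), the bare $\infnorm{\No}$ is absorbed and the total is at most $\frac{7\mu^2 r}{n}\gamma_t$, giving the claimed bound.

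The case analysis and triangle inequalities are routine; the work is entirely in the constant bookkeeping around $\No$. Two points need care. First, $\infnorm{\No}$ enters $|\Lo_{ij}+\No_{ij}-\Ltn_{ij}|$ \emph{unscaled}, and it can be absorbed into the target $\frac{7\mu^2 r}{n}\gamma_t$ only because $\gamma_t$ was built with the deliberately loose slack $\frac{8n}{\sqrt r}\infnorm{\No}$ — that term exists precisely to soak up such contributions, and the same slack is what keeps the induction in Lemma~\ref{lem:dec1_noise} self-consistent. Second, the support inclusion requires the threshold $\zeta$ — which, through Lemma~\ref{lem:threshold_est_noise}, only tracks $|\sigma_{k+1}^*|+(1/2)^t|\sigma_k^*|$ and is blind to the noise floor — to stay above the effective error $\frac{2\mu^2 r}{n}\gamma_t+\infnorm{\No}$; making this inequality explicit is where I expect the main friction, and it is the step that leans on $\infnorm{\No}\le\sigma_r^*/(100n)$ and on the stage structure of Algorithm~\ref{algo:alg1}, mirroring the role played by Lemma~\ref{lem:threshold_est} in the noiseless proof of Lemma~\ref{lem:supp-bound}.
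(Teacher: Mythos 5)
Your proposal follows essentially the same route as the paper's proof: decompose $\Stn = H_\zeta(\So + \Lo + \No - \Ltn)$, use Lemma~\ref{lem:threshold_est_noise} to pin $\zeta$ against $|\sigma_{k+1}^*|+(1/2)^t|\sigma_k^*|$, verify the support inclusion by checking $\infnorm{\Ltn-\Lo}+\infnorm{\No}\leq\zeta$, and run the same two-case split on the threshold for the magnitude bound, absorbing the stray $\infnorm{\No}$ into the $\frac{8n}{\sqrt{r}}\infnorm{\No}$ slack in $\gamma_t$. You are in fact more explicit than the paper about the constant bookkeeping (the paper's proof simply invokes ``the bound on $\infnorm{\No}$'' where you spell out the inequality), and the friction point you flag in the final stage $k=r$ as $(1/2)^t|\sigma_k^*|$ shrinks is glossed over by the paper as well, so there is no substantive difference in approach.
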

\begin{proof}
We first prove the first conclusion. Recall that, $$\Stn=H_{\zeta}(\M-\Ltn)=H_{\zeta}(\Lo-\Ltn+\No+\So),$$
where $\zeta=\frac{4\mu^2 r}{n}\left(|\lambda_{k+1}|+\left(\frac{1}{2}\right)^{t}|\lambda_k|\right) $ is as defined
in Algorithm~\ref{algo:alg1} and $\lambda_1,\cdots,\lambda_n$ are the eigenvalues of $\M-\St$ such that
$\abs{\lambda_1} \geq \cdots \geq \abs{\lambda_n}$. 

If $\So_{ij}=0$ then $\Etn_{ij}=\ind{\abs{\Lo_{ij}-\Ltn_{ij}+\No_{ij}}>\zeta} \cdot (\Lo_{ij}-\Ltn_{ij} + \No_{ij})$.
The first part of the lemma now follows by using the assumption that $\infnorm{\Ltn-\Lo} \leq \frac{2\mu^2 r}{n}\left(|\sigma_{k+1}^*|
+\left(\frac{1}{2}\right)^{t}|\sigma_k^*| \right)\stackrel{(\zeta_1)}{\leq}\frac{9\mu^2 r}{4n}\left(|\lambda_{k+1}^*|
+\left(\frac{1}{2}\right)^{t}|\lambda_k^*| \right)=\zeta$, where $(\zeta_1)$ follows from Lemma~\ref{lem:threshold_est},
and the bound on $\infnorm{\No}$.

We now prove the second conclusion. We consider the following two cases: 
\begin{enumerate}
  \item  $\abs{\M_{ij}-\Ltn_{ij}} > \zeta$:
	Here, $\Stn_{ij}=\So_{ij}+\Lo_{ij}-\Ltn_{ij}+\No_{ij}$. Hence, $|\Stn_{ij}-\So_{ij}|\leq |\Lo_{ij}-\Ltn_{ij}|
	+ \abs{\No_{ij}}\leq \frac{2\mu^2 r}{n}\left(|\sigma_{k+1}^*|+\left(\frac{1}{2}\right)^{t}|\sigma_k^*| \right)
	+ \infnorm{\No}$. 
  \item $\abs{\M_{ij}-\Ltn_{ij}} \leq  \zeta $:
	In this case, $\Stn_{ij}=0$ and $\abs{\So_{ij}+\Lo_{ij}-\Ltn_{ij} + \No_{ij}} \leq \zeta$.
	So we have, $\abs{\Etn_{ij}}=\abs{\So_{ij}} \leq \zeta+\abs{\Lo_{ij}-\Ltn_{ij}} + \abs{\No_{ij}}
	\leq \frac{7\mu^2 r}{n}\left(|\sigma_{k+1}^*|+\left(\frac{1}{2}\right)^{t}|\sigma_k^*| \right) + \infnorm{\No}$.
	The last inequality above follows from Lemma~\ref{lem:threshold_est}.
\end{enumerate}
This proves the lemma.
\end{proof}

The following lemma is a generalization of Lemma~\ref{lem:dec}.
\begin{lemma}\label{lem:dec_noisy}
Let $\Lo, \So, \No$ be symmetric and satisfy the assumptions of Theorem~\ref{thm:noise} and let $\Mt$ and $\Lt$ be the
$t^{\textrm{th}}$ iterates of the $k^{\textrm{th}}$ stage of Algorithm~\ref{algo:alg1}. Let $\sigma_1^*, \dots, \sigma_n^*$ be the eigenvalues
of $\Lo$, s.t., $|\sigma_1^*|\geq \dots \geq |\sigma_r^*|$.
Then, the following holds:
\begin{align*}
\infnorm{\Ltn-\Lo} \leq \frac{2\mu^2 r}{n}&\left(\abs{\sigma_{k+1}^*}+\left(\frac{1}{2}\right)^{t}\abs{\sigma_k^*}
+ 7 \twonorm{\No} + \frac{8n}{\sqrt{r}} \infnorm{\No} \right), \\
\infnorm{\Etn}=\infnorm{\So-\Stn} \leq \frac{8\mu^2 r}{n}&\left(\abs{\sigma_{k+1}^*}+\left(\frac{1}{2}\right)^{t-1}\abs{\sigma_k^*}
	+ 7 \twonorm{\No} + \frac{8n}{\sqrt{r}} \infnorm{\No}\right), \mbox{ and } \\
\supp{\Etn} &\subseteq \supp{\So}.
\end{align*}

Moreover, the outputs $\widehat{\L}$ and $\widehat{\S}$ of Algorithm~\ref{algo:alg1} satisfy:
\begin{align*}
\frob{\widehat{\L} - \Lo} &\leq \epsilon + {2 \mu^2r}\left(7 \twonorm{\No} + \frac{8 n}{\sqrt{r}} \infnorm{\No}\right), \\
\infnorm{\widehat{\S} - \So} &\leq \frac{\epsilon}{{n}} + \frac{8 \mu^2r}{n}\left(7 \twonorm{\No} + \frac{8 n}{\sqrt{r}} \infnorm{\No}\right), \mbox{ and }\\
\supp{\widehat{\S}} &\subseteq \supp{\So}.
\end{align*}
\end{lemma}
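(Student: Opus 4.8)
The plan is to run, essentially verbatim, the two-level induction used to prove Lemma~\ref{lem:dec} — an outer induction on the stage index $k\in\{1,\dots,r\}$ and an inner induction on the within-stage iteration index $t$ — carrying the abbreviation $\gamma_t:=|\sigma_{k+1}^*|+\left(\tfrac12\right)^{t-1}|\sigma_k^*|+7\twonorm{\No}+\tfrac{8n}{\sqrt r}\infnorm{\No}$ in place of its noiseless counterpart throughout, so that the additive noise floor $7\twonorm{\No}+\tfrac{8n}{\sqrt r}\infnorm{\No}$ simply rides along as a constant in every estimate. The invariant to maintain at the start of iteration $t$ of stage $k$ is $\infnorm{\Et}\leq\tfrac{8\mu^2 r}{n}\gamma_t$ together with $\supp{\Et}\subseteq\supp{\So}$. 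For the base case $k=1$, $t=-1$ the bound on $\infnorm{\Lt[0]-\Lo}$ is vacuous since $\Lt[0]=0$, and the initial hard threshold $\S^{(0)}=HT_{\zeta_0}(\M)$ with $\zeta_0=\beta\sigma_1(\M)$, combined with incoherence of $\Lo$ and the hypothesis $\infnorm{\No}\leq\tfrac{|\sigma_r^*|}{100n}$, gives $\infnorm{\E^{(0)}}\leq\tfrac{8\mu^2 r}{n}\left(\sigma_2^*+2\sigma_1^*+7\twonorm{\No}+\tfrac{8n}{\sqrt r}\infnorm{\No}\right)$ and $\supp{\E^{(0)}}\subseteq\supp{\So}$, i.e.\ exactly the invariant.

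The inner induction on $t$ is then a mechanical chaining of the noisy structural lemmas. Given the invariant at step $t$, Lemma~\ref{lem:threshold_est_noise} certifies that the algorithm's data-dependent threshold $\zeta$ (defined through the eigenvalues of $\M-\St$) lies within a constant factor of the ``ideal'' threshold phrased in $|\sigma_k^*|,|\sigma_{k+1}^*|$; Lemma~\ref{lem:dec1_noise} then delivers $\infnorm{\Ltn-\Lo}\leq\tfrac{2\mu^2 r}{n}\left(|\sigma_{k+1}^*|+(\tfrac12)^t|\sigma_k^*|+7\twonorm{\No}+\tfrac{8n}{\sqrt r}\infnorm{\No}\right)$ — this is the first displayed conclusion, and the step where the $\No$ cross-terms in the Neumann expansion of the perturbed eigenvectors get absorbed into the $7\twonorm{\No}$ and $\tfrac{8n}{\sqrt r}\infnorm{\No}$ summands; and feeding this into Lemma~\ref{lem:supp-bound_noise} returns $\supp{\Etn}\subseteq\supp{\So}$ together with $\infnorm{\Etn}\leq\tfrac{7\mu^2 r}{n}\gamma_{t+1}\leq\tfrac{8\mu^2 r}{n}\gamma_{t+1}\leq\tfrac{8\mu^2 r}{n}\gamma_t$, which re-establishes the invariant and simultaneously yields the second displayed conclusion.

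For the outer induction on $k$ I would repeat the stage-transition analysis of Lemma~\ref{lem:dec}. First lower-bound the iteration count per stage: $\twonorm{\M-\S^{(0)}}\geq\twonorm{\Lo}-\twonorm{\No}-\twonorm{\E^{(0)}}\geq\tfrac34|\sigma_1^*|$, using $\twonorm{\E^{(0)}}\leq\alpha n\infnorm{\E^{(0)}}$ (Lemma~\ref{lem:spec1}), the bound on $\infnorm{\No}$, and $\alpha\leq\tfrac1{512\mu^2 r}$; hence $T\geq10\log(3\mu^2 r|\sigma_1^*|/\epsilon)$, so at the end of the stage $(\tfrac12)^T|\sigma_k^*|$ is of order $\epsilon/(\mu^2 r n)$. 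Then split on the termination test of Algorithm~\ref{algo:alg1}: (i) if $\beta\sigma_{k+1}(\Lt[T])<\tfrac{\epsilon}{2n}$ the algorithm halts, and Weyl's inequality (Lemma~\ref{lem:weyl-perturbation}) together with $\twonorm{\E^{(T)}}\leq\alpha n\infnorm{\E^{(T)}}$ forces $|\sigma_{k+1}^*|$ down to $O\!\left(\tfrac{\epsilon}{\mu^2 r}+\alpha n(\twonorm{\No}+\tfrac n{\sqrt r}\infnorm{\No})\right)$; substituting into the $\infnorm{\Lt[T]-\Lo}$ bound and using $\frob{\cdot}\leq n\infnorm{\cdot}$ on the $n\times n$ symmetric problem gives $\frob{\widehat{\L}-\Lo}\leq\epsilon+2\mu^2 r\left(7\twonorm{\No}+\tfrac{8n}{\sqrt r}\infnorm{\No}\right)$, and the same substitution gives the claimed $\infnorm{\widehat{\S}-\So}$ bound while $\supp{\widehat{\S}}\subseteq\supp{\So}$ follows from $\supp{\E^{(T)}}\subseteq\supp{\So}$; (ii) if instead $\beta\sigma_{k+1}(\Lt[T])\geq\tfrac{\epsilon}{2n}$ then $|\sigma_{k+1}^*|$ is bounded below by a constant multiple of $\epsilon/(\mu^2 r)$, which lets me absorb both $(\tfrac12)^T|\sigma_k^*|$ and the explicit $\epsilon$ terms into $|\sigma_{k+1}^*|$ and conclude $\infnorm{\E^{(T)}}\leq\tfrac{8\mu^2 r}{n}\left(|\sigma_{k+2}^*|+2|\sigma_{k+1}^*|+7\twonorm{\No}+\tfrac{8n}{\sqrt r}\infnorm{\No}\right)$ — the invariant that seeds stage $k+1$ (and likewise the $\infnorm{\Lt[T]-\Lo}$ bound needed there). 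Finally, to pass from this symmetric $n\times n$ statement to the $m\times n$ claim of Theorem~\ref{thm:noise}, the symmetrization embedding from the proof of Theorem~\ref{thm:main} applies verbatim: the embedded $\wL$ stays incoherent up to the constant $\sqrt3$, $\wS$ stays $(\alpha/\sqrt2)$-sparse, the analogously-embedded noise matrix has the same $\twonorm{\cdot}$ and $\infnorm{\cdot}$ up to constants, and the $\ncralgo$ iterates on $\wM$ mirror those on $\M$, so the bounds transfer with $\sqrt{mn}$ replacing $n$.

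The main obstacle I expect is making the stage-transition case analysis robust to the genuine noise contribution in $\twonorm{\E^{(T)}}$: in the noiseless proof the halting test cleanly certifies that $|\sigma_{k+1}^*|$ has become negligible, whereas here I must verify that the noise neither blocks termination at the correct stage nor triggers it prematurely, and I must track the $1$-vs-$7$-vs-$8$ constants and the prefactor $2\mu^2 r$ through Lemmas~\ref{lem:dec1_noise}, \ref{lem:spec2_noise}, and \ref{lem:supp-bound_noise} with essentially no slack so that the final output bound comes out exactly as stated. Everything else is a faithful copy of the noiseless argument with $\gamma_t$ carrying the extra additive term.
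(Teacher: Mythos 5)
Your proposal follows the paper's proof of this lemma essentially verbatim: the same two-level induction (outer on the stage $k$, inner on the iteration $t$) with the invariant $\infnorm{\Et}\leq\frac{8\mu^2 r}{n}\gamma_t$ plus support containment, the same chaining of Lemmas~\ref{lem:threshold_est_noise}, \ref{lem:dec1_noise}, and \ref{lem:supp-bound_noise} for the inner step, and the same lower bound on $T$ followed by the case split on the termination test for the stage transition. The only (harmless) deviation is that the symmetrization embedding you describe at the end belongs to the proof of Theorem~\ref{thm:noise} rather than to this lemma, which is already stated for symmetric matrices.
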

\begin{proof}
Recall that in the $k^{\textrm{th}}$ stage, the update $\Ltn$ is given by: $\Ltn=P_k(\M-\St)$ and $\Stn$ is given by: $\Stn=H_\zeta(\M-\Ltn)$. 
Also, recall that $\Et \defas \So-\St$ and $\Etn \defas \So-\Stn$.

We prove the lemma by induction on both $k$ and $t$.
For the base case ($k=1$ and $t=-1$), we first note that the first inequality on $\infnorm{\Lt[0]-\Lo}$ is trivially satisfied.
Due to the thresholding step (step $3$ in Algorithm~\ref{algo:sap}) and the incoherence assumption on $\Lo$, we have: 
\begin{align*}
  \infnorm{\E^{(0)}} &\leq \frac{8\mu^2 r}{n}\left(\sigma_2^* + 2 \sigma_1^*\right), \mbox{ and }\\
  \supp{\Et[0]} &\subseteq \supp{\So}.
\end{align*}
So the base case of induction is satisfied.

We first do the inductive step over $t$ (for a fixed $k$).
By inductive hypothesis we assume that:
a) $\infnorm{\Et} \leq \frac{8\mu^2 r}{n}\left(|\sigma_{k+1}^*|+\left(\frac{1}{2}\right)^{t-1}|\sigma_k^*|
+ 7 \twonorm{\No} + \frac{8n}{\sqrt{r}} \infnorm{\No}\right)$,
b) $\supp{\Et}\subseteq \supp{\So}$. Then by Lemma~\ref{lem:dec1_noise}, we have:
\begin{align*}
\infnorm{\Ltn-\Lo} \leq \frac{2\mu^2 r}{n}\left(|\sigma_{k+1}^*|+\left(\frac{1}{2}\right)^{t}|\sigma_k^*|
+ 7 \twonorm{\No} + \frac{8n}{\sqrt{r}} \infnorm{\No}\right).
\end{align*}

Lemma~\ref{lem:supp-bound_noise} now tells us that
\begin{enumerate}
  \item	$\infnorm{\Etn} \leq  \frac{7\mu^2 r}{n}\left(|\sigma_{k+1}^*|+\left(\frac{1}{2}\right)^{t}|\sigma_k^*|
  + 7 \twonorm{\No} + \frac{8n}{\sqrt{r}} \infnorm{\No} \right)$, and
  \item	$\supp{\Etn} \subseteq \supp{\So}$.
\end{enumerate}
This finishes the induction over $t$. Note that we show a stronger bound than necessary on $\infnorm{\Etn}$.

We now do the induction over $k$. Suppose the hypothesis holds for stage $k$.
Let $T$ denote the number of iterations in each stage. We first obtain a lower bound on $T$.
Since
\begin{align*}
\twonorm{\M-\S^{(0)}} \geq \twonorm{\Lo +\No} - \twonorm{\E^{(0)}} \geq \abs{\sigma_1^*} - \alpha n \infnorm{\E^{(0)}}
\geq \frac{3}{4} \abs{\sigma_1^*},
\end{align*}
we see that $T \geq 10 \log \left(3 \mu^2 r \abs{\sigma_1^*}/\epsilon\right)$.
So, at the end of stage $k$, we have:
\begin{enumerate}
  \item	$\infnorm{\E^{(T)}} \leq \frac{7\mu^2 r}{n}\left(|\sigma_{k+1}^*|+\left(\frac{1}{2}\right)^{T}|\sigma_k^*|
  + 7 \twonorm{\No} + \frac{8n}{\sqrt{r}} \infnorm{\No} \right)
	\leq \frac{7\mu^2 r \abs{\sigma_{k+1}^*}}{n} + \frac{\epsilon}{10 n}$, and
  \item	$\supp{\Et[T]} \subseteq \supp{\So}$.
\end{enumerate}
Lemmas~\ref{lem:spec1} and~\ref{lem:weyl-perturbation} tell us that
$\abs{\sigma_{k+1}\left(\M-\St[T]\right) - \abs{\sigma_{k+1}^*}} \leq \twonorm{\E^{(T)}}
\leq \alpha\left( 7\mu^2 r \abs{\sigma_{k+1}^*} + \epsilon \right)$.
We will now consider two cases:
\begin{enumerate}
  \item	\textbf{Algorithm~\ref{algo:alg1} terminates:} This means that $\beta \sigma_{k+1}\left(\M-\St[T]\right) <
\frac{\epsilon}{2 n}$ which then implies that $\abs{\sigma_{k+1}^*} < \frac{\epsilon}{6 \mu^2 r}$. So we have:
\begin{align*}
\infnorm{\widehat{L} - \Lo} &= \infnorm{\Lt[T] - \Lo}%
	\leq \frac{2 \mu^2 r}{n}  \left(|\sigma_{k+1}^*|+\left(\frac{1}{2}\right)^{T}|\sigma_k^*|
	+ 7 \twonorm{\No} + \frac{8n}{\sqrt{r}} \infnorm{\No} \right) \\
	&\leq \frac{\epsilon}{5n} + \frac{2\mu^2 r}{n}\left(7 \twonorm{\No} + \frac{8n}{\sqrt{r}} \infnorm{\No}\right).
\end{align*}
This proves the statement about $\widehat{L}$. A similar argument proves the claim on $\infnorm{\widehat{S}-\So}$.
The claim on $\supp{\widehat{S}}$ follows since $\supp{\Et[T]} \subseteq \supp{\So}$.
  \item	\textbf{Algorithm~\ref{algo:alg1} continues to stage $\bm{\left(k+1\right)}$:} This means that
$\beta \sigma_{k+1}\left(\Lt[T]\right) \geq \frac{\epsilon}{2 n}$ which then implies that
$\abs{\sigma_{k+1}^*} > \frac{\epsilon}{8 \mu^2 r}$. So we have:
\begin{align*}
\infnorm{\E^{(T)}} &\leq \frac{7\mu^2 r}{n}\left(|\sigma_{k+1}^*|+\left(\frac{1}{2}\right)^{T}|\sigma_k^*|
+ 7 \twonorm{\No} + \frac{8n}{\sqrt{r}} \infnorm{\No}\right) \\
	&\leq \frac{7\mu^2 r}{n}\left(|\sigma_{k+1}^*|+ \frac{\epsilon}{10\mu^2 r n}
	+ 7 \twonorm{\No} + \frac{8n}{\sqrt{r}} \infnorm{\No} \right) \\
	&\leq \frac{7\mu^2 r}{n}\left(|\sigma_{k+1}^*|+ \frac{8 \abs{\sigma_{k+1}^*}}{10 n}
	+ 7 \twonorm{\No} + \frac{8n}{\sqrt{r}} \infnorm{\No} \right) \\
	&\leq \frac{8\mu^2 r}{n}\left(|\sigma_{k+2}^*|+ 2 \abs{\sigma_{k+1}^*}
	+ 7 \twonorm{\No} + \frac{8n}{\sqrt{r}} \infnorm{\No} \right).
\end{align*}
Similarly for $\infnorm{\Lt[T]-\Lo}$.
\end{enumerate}
This finishes the proof.
\end{proof}

\begin{proof}[Proof of Theorem~\ref{thm:noise}]
Using Lemma~\ref{lem:dec_noisy}, it suffices to show that the general case can be reduced to the case of symmetric
matrices. We will now outline this reduction.

Recall that we are given an $m \times n$ matrix $\M=\Lo + \No + \So $ where $\Lo$ is the true low-rank matrix, $\No$ dense corruption matrix
and $\So$ the sparse error matrix. Wlog, let $m \leq n$ and suppose $ \beta m \leq n < (\beta+1) m$, for some $\beta \geq 1$. We then consider the symmetric matrices
\begin{align}
&  \widetilde{\M} =
\underbrace{
\left[ 	\begin{array}{c}
		\begin{array}{c}
			0 \\ \vdots \\ 0
		\end{array}
		\cdots 
		\begin{array}{c}
			0 \\ \vdots \\ 0
		\end{array}	\\
		\trans{M} \cdots \trans{M}
	\end{array}
\right.}_{\beta \mbox{ times}}
\left.	\begin{array}{c}
		\begin{array}{c}
			M \\ \vdots \\ M
		\end{array} \\
		0
      	\end{array}
\right],
  \widetilde{\L} =
\underbrace{
\left[ 	\begin{array}{c}
		\begin{array}{c}
			0 \\ \vdots \\ 0
		\end{array}
		\cdots 
		\begin{array}{c}
			0 \\ \vdots \\ 0
		\end{array}	\\
		\trans{\left(\Lo\right)} \cdots \trans{\left(\Lo\right)}
	\end{array}
\right.}_{\beta \mbox{ times}}
\left.	\begin{array}{c}
		\begin{array}{c}
			\Lo \\ \vdots \\ \Lo
		\end{array} \\
		0
      	\end{array}
\right],\nonumber\\
&\qquad\qquad\qquad\quad  \widetilde{\N} =
\underbrace{
\left[ 	\begin{array}{c}
		\begin{array}{c}
			0 \\ \vdots \\ 0
		\end{array}
		\cdots 
		\begin{array}{c}
			0 \\ \vdots \\ 0
		\end{array}	\\
		\trans{\left(\No\right)} \cdots \trans{\left(\No\right)}
	\end{array}
\right.}_{\beta \mbox{ times}}
\left.	\begin{array}{c}
		\begin{array}{c}
			\Lo \\ \vdots \\ \Lo
		\end{array} \\
		0
      	\end{array}
\right],
\label{eqn:symmetrization_app_noise}
\end{align}
and $\wS = \wM - \wL$. A simple calculation shows that $\wL$ is incoherent with parameter $\sqrt{3}\mu$,
$\widetilde{\N}$ satisfies the assumption of Theorem~\ref{thm:noise} and $\wS$ satisfies
the sparsity condition (S1) with parameter $\frac{\alpha}{\sqrt{2}}$.
Moreover the iterates of \ncralgo~ with input $\wM$ have similar expressions as in \eqref{eqn:symmetrization_app_noise} in terms of the corresponding
iterates with input $\M$. This means that it suffices to obtain the same guarantees for Algorithm~\ref{algo:alg1}
for the symmetric case. Lemma~\ref{lem:dec_noisy} does precisely this, proving the theorem.
\end{proof}

\begin{figure}[t]
\centering
\begin{tabular}{ccc}
\hspace*{-10pt}\includegraphics[width=0.33\textwidth]{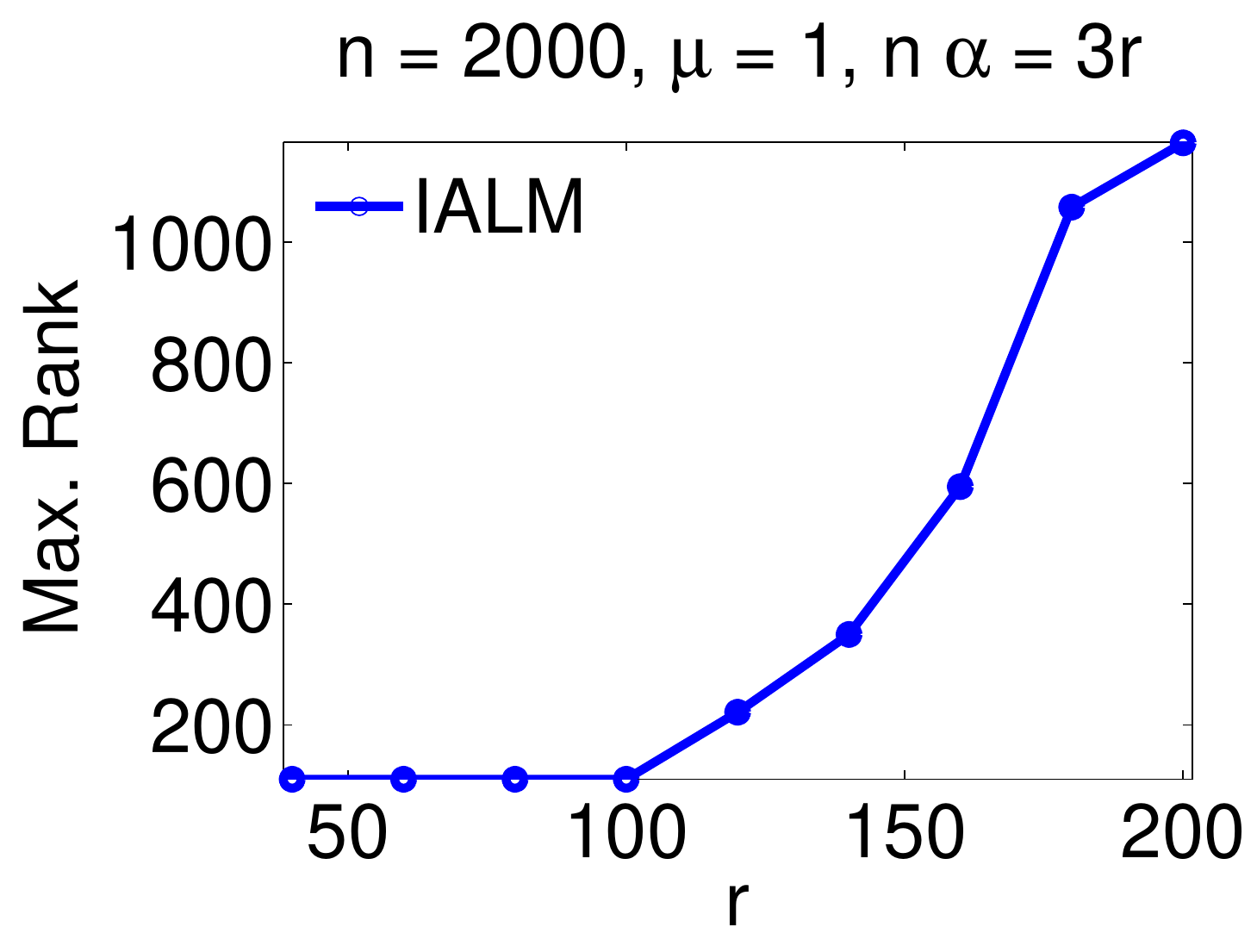}&
\hspace*{-10pt}\includegraphics[width=0.33\textwidth]{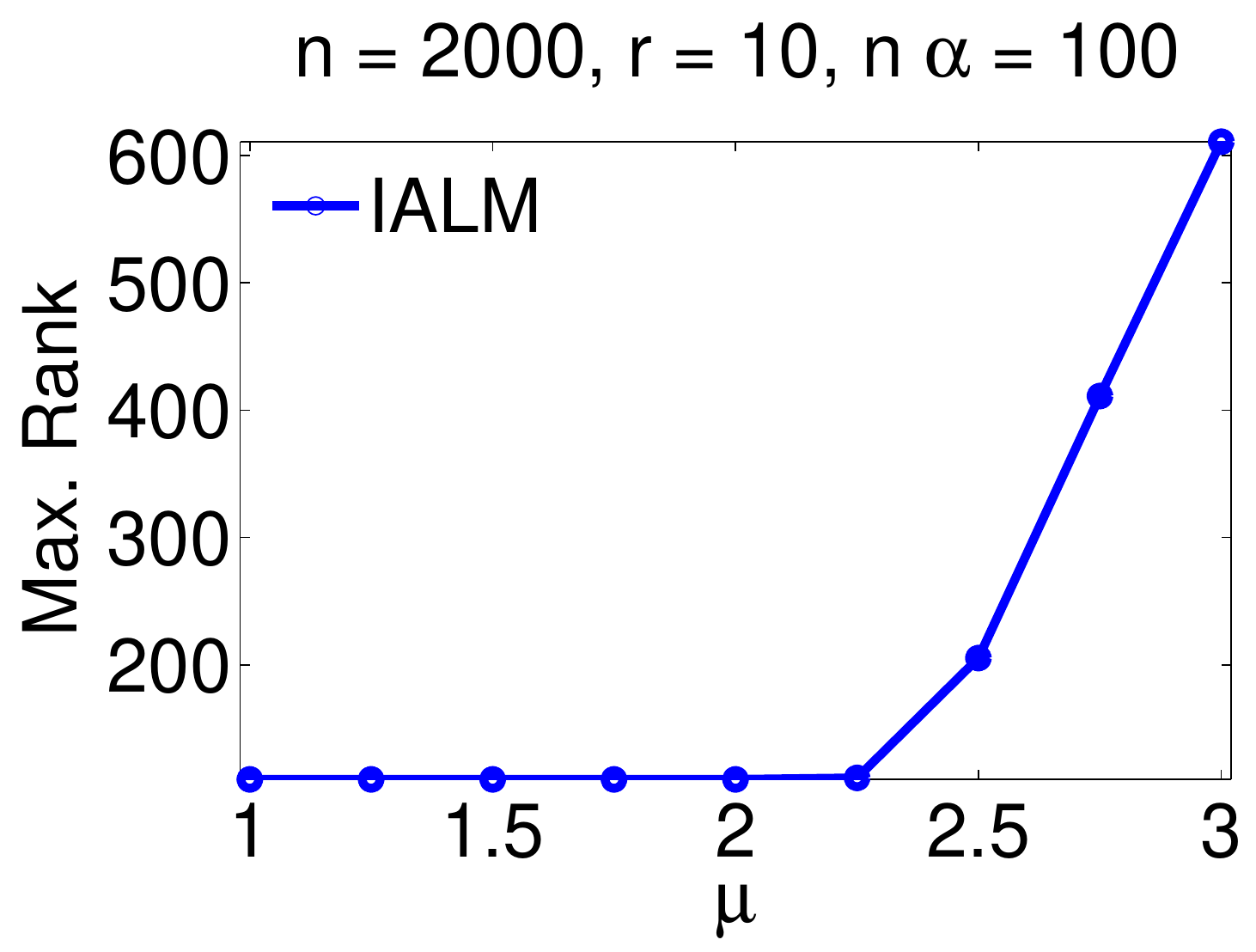}&
\hspace*{-10pt}\includegraphics[width=0.33\textwidth]{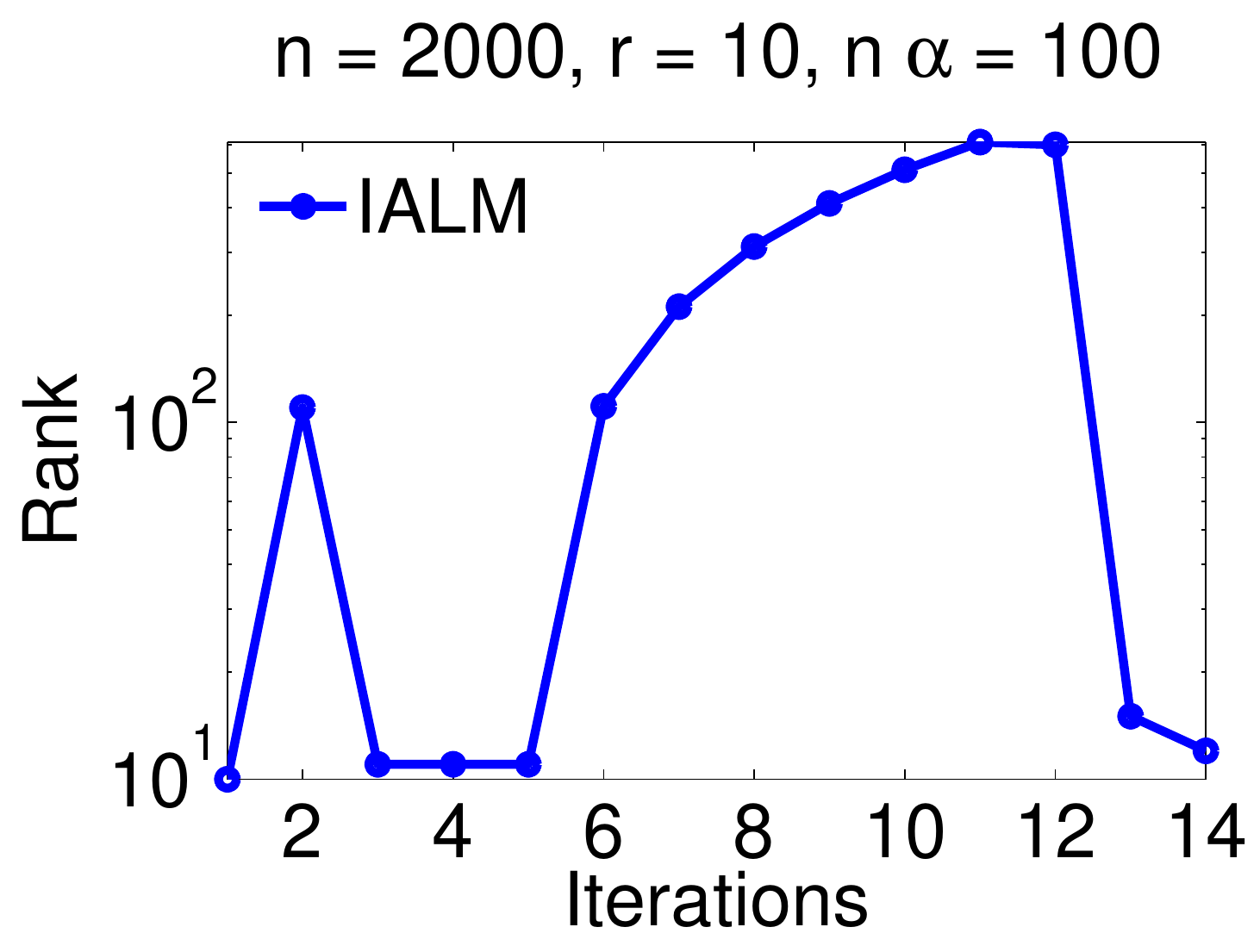}\\[-2pt]
(a)&(b)&(c)
\end{tabular}
\caption{ (a): Variation of the maximum rank of the intermediate low-rank solutions of IALM with rank. (b): Variation of the maximum rank of the intermediate low-rank solutions of IALM with incoherence. (c): Rank of the intermediate iterates of IALM for a particular run with $n=2000, r=10, \alpha=100/n, \mu=3$. Note that while the rank of the final output is $10$, intermediate iterates have rank as high as $800$.}
\label{fig:synthetic_plots_2}
\end{figure}

\begin{figure}[h!]
\centering
\begin{tabular}{ccc}
\includegraphics[width=0.3\textwidth]{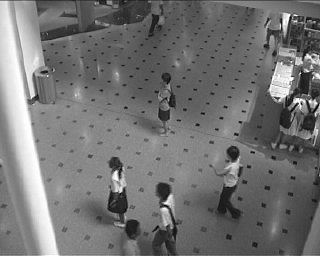}&
\includegraphics[width=0.3\textwidth]{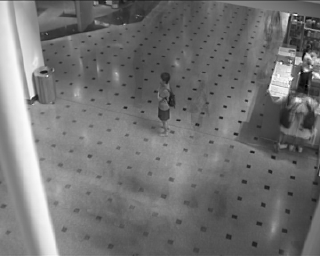}&
\includegraphics[width=0.3\textwidth]{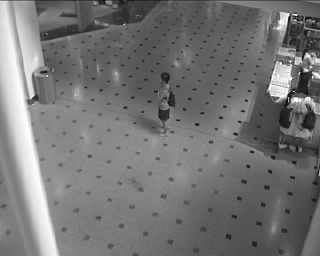}\\
(a)&(b)&(c)\\
\includegraphics[width=0.3\textwidth]{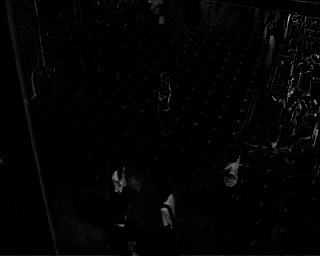}&
\includegraphics[width=0.3\textwidth]{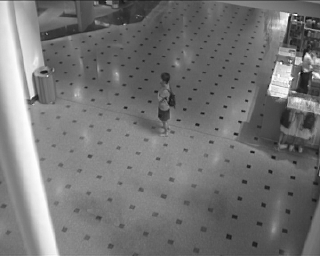}&
\includegraphics[width=0.3\textwidth]{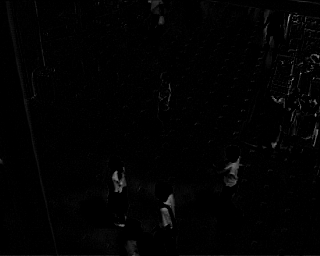}\\
(d)&(e)&(f)\\
\end{tabular}
\caption{Foreground-background separation in the {\em Shopping Mall} video. (a): Original frame in the video given as a part of the input to NcRPCA and IALM. (b): Corresponding frame from the best rank-$20$ approximation obtained using vanilla PCA; time taken for computing the low-rank approximation is $8.8s$. (c): Corresponding frame from the low-rank part obtained using NcRPCA; time taken by NcRPCA to compute the low-rank and sparse solutions is $292.1s$. (d): Corresponding frame from the sparse part obtained using NcRPCA. (e): Corresponding frame from the low-rank part obtained using IALM; time taken by IALM to compute the low-rank and sparse solutions is $783.4s$. (f): Corresponding frame from the sparse part obtained using IALM.}
\label{fig:shopping_mall}
\end{figure}

\begin{figure}[h!]
\centering
\begin{tabular}{ccc}
\includegraphics[width=0.3\textwidth]{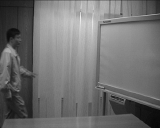}&
\includegraphics[width=0.3\textwidth]{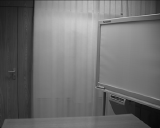}&
\includegraphics[width=0.3\textwidth]{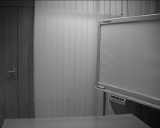}\\
(a)&(b)&(c)\\
\includegraphics[width=0.3\textwidth]{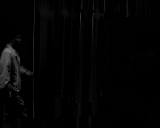}&
\includegraphics[width=0.3\textwidth]{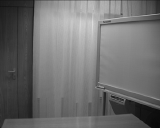}&
\includegraphics[width=0.3\textwidth]{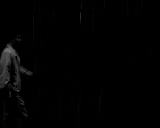}\\
(d)&(e)&(f)\\
\end{tabular}
\caption{Foreground-background separation in the {\em Curtain} video. (a): Original image frame in the video given as a part of the input to NcRPCA and IALM. (b): Corresponding frame from the best rank-$10$ approximation obtained using vanilla PCA; time taken for computing the low-rank approximation is $2.8s$. (c): Corresponding frame from the low-rank part obtained using NcRPCA; time taken by NcRPCA to compute the low-rank and sparse solutions is $39.5s$. (d): Corresponding frame from the sparse part obtained using NcRPCA. (e): Corresponding frame from the low-rank part obtained using IALM; time taken by IALM to compute the low-rank and sparse solutions is $989.0s$. (f): Corresponding frame from the sparse part obtained using IALM.}
\label{fig:curtain}
\end{figure}

\section{Additional experimental results}
\label{app:exps}
\paragraph{Synthetic datasets: }Extending Figure~\ref{fig:synthetic_plots_1}, the plots in Figure~\ref{fig:synthetic_plots_2} illustrate the point that soft thresholding, i.e., the convex relaxation approach, leads to intermediate solutions with high ranks. 
Figures~\ref{fig:synthetic_plots_2} (a)-(b) show the variation of the maximum rank of the intermediate low-rank solutions of IALM with rank and incoherence respectively; the results are averaged over 5 runs of the algorithm; we note that as the problem becomes harder, the maximum intermediate rank via soft thresholding (convex approach) increases, and this leads to higher running times. As an example of this phenomenon, Figure~\ref{fig:synthetic_plots_2} (c) shows the rank of the intermediate iterates of IALM for a particular run with $n = 2000, r =10, \alpha = 100/n, \mu = 3$; here, while the rank of the final output is $10$, intermediate iterates have a rank as high as $800$.
We run our synthetic simulations on a machine with Intel Dual 8-core Xeon (E5-2650) 2.0GHz CPU with 192GB RAM. 

\paragraph{Real-world datasets: }We provide some additional results concerning foreground-background separation in videos~\footnote{The datasets are available at \url{http://perception.i2r.a-star.edu.sg/bk_model/bk_index.html}}. We compare NcRPCA with IALM, and also with the low-rank solution obtained using vanilla PCA; we report the solutions obtained by NcRPCA and IALM methods for decomposing $\M$ into $\L+\S$ up to a relative error ($\|\M-\L-\S\|_F/\|\M\|_F$) of $10^{-3}$. We report the rank and the sparsity of the solutions obtained by the two methods along with the computational time. As mentioned before, the observed matrix $\M$ is formed by vectorizing each frame and stacking them column-wise. For illustration purposes, we arbitrarily select one of the original frames in the sequence of image frames obtained from the video, i.e., one of the columns of $\M$, and the corresponding columns in $\L$ and $\S$ obtained using NcRPCA and IALM. We run our real data experiments on a machine with Intel Dual 8-core Xeon (E5-2650) 2.0GHz CPU with 192GB RAM.\\ 

\emph{Shopping Mall dataset: }Figure~\ref{fig:shopping_mall} shows the comparison of NcRPCA and IALM on the ``Shopping Mall'' dataset which has $1286$ frames at a resolution of $256 \times 320$. NcRPCA achieves a solution of better visual quality (for example, unlike NcRPCA, notice the artifact of the low-rank solution from IALM in the top right corner of the image where the person is walking over the reflection of a light source; also notice the shadows of people in the low-rank part obtained by IALM which are not present in the low-rank solution obtained by NcRPCA), in $292.1s$, 
compared to IALM, which takes $783.4s$ 
until convergence. NcRPCA obtains a rank $20$ solution for $L$ with $\| S \|_0 = 95411896$ whereas IALM obtains a rank $286$ solution for $L$ with $\| S \|_0 = 86253965$. 

\emph{Curtain dataset: }We illustrate our recovery on one of the frames (frame $2773$) wherein a person enters a room with a curtain on the background. Figure~\ref{fig:curtain} shows the comparison of NcRPCA and IALM on the ``Curtain'' dataset which has $2964$ frames at a resolution of $160 \times 128$. NcRPCA achieves a solution, in $39.5s$, 
which is of similar visual quality to that of IALM, which takes $989.0s$ 
until convergence. NcRPCA obtains a rank $1$ solution for $L$ with $\| S \|_0 = 53897769$ whereas IALM obtains a rank $701$ solution for $L$ with $\| S \|_0 = 42310582$.

\end{document}